\documentclass[11pt]{article}
\usepackage[T1]{fontenc}
\usepackage[utf8]{inputenc}
\usepackage{lmodern}
\usepackage{amsmath,amssymb,amsthm,mathtools}
\usepackage[letterpaper,margin=1in,headheight=14pt,headsep=7mm]{geometry}
\usepackage{microtype}
\usepackage{needspace}
\usepackage{flafter}
\usepackage{float}
\usepackage[section]{placeins}
\usepackage{xcolor}
\usepackage{graphicx}
\usepackage{tikz}
\usetikzlibrary{arrows.meta,positioning,calc,matrix,fit,decorations.pathreplacing}
\definecolor{aliceblue}{RGB}{32,96,151}
\definecolor{boborange}{RGB}{183,96,27}
\definecolor{answergreen}{RGB}{34,111,81}
\usepackage{booktabs,tabularx,longtable,array}
\usepackage{enumitem}
\usepackage{fancyhdr}
\usepackage{aliascnt}
\usepackage{cite}
\let\small\normalsize
\usepackage[unicode,pdfencoding=auto]{hyperref}
\usepackage[nameinlink,capitalise,noabbrev]{cleveref}
\definecolor{linkblue}{RGB}{32,71,116}
\hypersetup{colorlinks=true,linkcolor=linkblue,citecolor=linkblue,urlcolor=linkblue,
  pdftitle={Lossless Hardness Condensation in Deterministic Communication Complexity},
  pdfauthor={Simon Mackenzie},
  pdfsubject={Hard original submatrices, rank-sensitive counting, and constructive extraction},
  pdfkeywords={communication complexity, hardness condensation, submatrix restriction, matrix rank, protocol covers}}
\setlist{topsep=0.4em,itemsep=0.2em,parsep=0pt}
\AddToHook{cmd/thebibliography/after}{%
  \setlength{\itemsep}{0.2\baselineskip}\setlength{\parsep}{0pt}}
\numberwithin{equation}{section}
\theoremstyle{plain}
\newtheorem{theorem}{Theorem}[section]
\newaliascnt{lemma}{theorem}
\newtheorem{lemma}[lemma]{Lemma}
\aliascntresetthe{lemma}
\newaliascnt{proposition}{theorem}
\newtheorem{proposition}[proposition]{Proposition}
\aliascntresetthe{proposition}
\newaliascnt{corollary}{theorem}
\newtheorem{corollary}[corollary]{Corollary}
\aliascntresetthe{corollary}
\theoremstyle{definition}
\newaliascnt{definition}{theorem}
\newtheorem{definition}[definition]{Definition}
\aliascntresetthe{definition}
\newaliascnt{example}{theorem}

\aliascntresetthe{example}
\theoremstyle{remark}
\newaliascnt{remark}{theorem}

\aliascntresetthe{remark}
\crefname{lemma}{Lemma}{Lemmas}
\crefname{proposition}{Proposition}{Propositions}
\crefname{corollary}{Corollary}{Corollaries}
\crefname{definition}{Definition}{Definitions}
\crefname{example}{Example}{Examples}
\crefname{remark}{Remark}{Remarks}
\newcommand{\D}{\operatorname{D}}

\newcommand{\card}[1]{\lvert#1\rvert}
\newcommand{\ceil}[1]{\left\lceil#1\right\rceil}
\newcommand{\floor}[1]{\left\lfloor#1\right\rfloor}

\newcommand{\R}{\mathbb R}

\DeclareMathOperator{\rank}{rank}
\fancypagestyle{plain}{\fancyhf{}\fancyfoot[C]{\thepage}}
\allowdisplaybreaks[1]

\title{Lossless Hardness Condensation\\
in Deterministic Communication Complexity}
\author{Simon Mackenzie\thanks{\normalsize University of New South Wales (UNSW).\\
Email: \href{mailto:simon.william.mackenzie@gmail.com}{\texttt{simon.william.mackenzie@gmail.com}}.}}
\date{}
\begin{document}
\hypersetup{pageanchor=false}
\begin{titlepage}
\maketitle
\thispagestyle{empty}
\begin{abstract}
A communication problem can have far more possible inputs than its
communication cost would suggest. Must its difficulty already be present
on a much smaller set of inputs? We prove that every finite total Boolean
matrix of deterministic communication complexity $c\ge4$ has a submatrix
on $2^k$ of its original rows and $2^k$ of its original columns, with
$k=\Theta_\varepsilon(c)$ and complexity
at least $(1-\varepsilon)(k+1)$, for every fixed $0<\varepsilon<1$.
Since $k+1$ is the maximum possible cost on such a square, the retained
problem can be arbitrarily close to maximally hard. This answers
affirmatively the lossless condensation question of Hamed Hatami;
G\"o\"os, Newman, Riazanov, and Sokolov (STOC 2024), who recorded it as
Open Problem~2, conjectured a negative answer. Hrube\v{s} previously
guaranteed input length $\Omega(\sqrt c)$.
The same argument gives an original $2^{c-2}$-by-$2^{c-2}$ square
retaining at least $c/3-O(\log c)$ bits of communication complexity.

The proof builds on Hrube\v{s}'s counting and covering argument. We count
submatrices equipped with short communication protocols: a player names
a covering submatrix, then the players run its protocol. This avoids
the loss from converting rectangle partitions into protocols.
A recursion on rectangles makes the argument constructive. For fixed
rational $\varepsilon$ and any target depth $d\ge4$, a deterministic
algorithm returns either a protocol of depth below $d$, or a square
of original inputs at input length $\Theta_\varepsilon(d)$ with the
same near-maximal guarantee. Its running time is $2^{O(2^d)}$ times
a polynomial in the table size. If the original complexity is at least
$d$, the algorithm necessarily returns the square.
An extension gives constant-factor condensation for any fixed number
of number-in-hand players, with bounds independent of the finite output
alphabet.
\end{abstract}

\vfill
\end{titlepage}
\pagenumbering{arabic}
\hypersetup{pageanchor=true}
\section{Introduction}\label{sec:introduction}

Alice and Bob hold separate inputs and want to compute a Boolean
function of them. Communication complexity~\cite{Yao79} asks how many bits they must
exchange in the worst case. The inputs themselves may be much longer
than the best conversation. If every protocol needs $c$ bits, is that
difficulty already visible among $2^{O(c)}$ possible inputs for each
party, or can it depend on a much larger collection of inputs?

A restriction makes this question precise. Write the function as a
matrix $M$, with Alice's inputs indexing rows and Bob's inputs indexing
columns. Choose some original rows and columns and retain their
intersection. The players know the chosen sets in advance; all retained
answers remain unchanged. Restricting inputs can only make the problem
easier. The issue is how much difficulty must survive.

On a $2^k$-by-$2^k$ Boolean matrix, there is always a protocol using
$k+1$ bits: Alice names her row and Bob transmits the answer. Thus
a square of this size is as hard as its input length permits when its
communication cost is close to $k+1$. The condensation question asks
whether every hard matrix contains a square whose input length and
communication cost are both proportional to the original cost.
Following~\cite{GNRS24}, \emph{lossless condensation} means this
constant-factor guarantee, not exact preservation of the original cost.

Hamed Hatami asked this question at the 2022 Banff communication
complexity workshop, and G\"o\"os, Newman, Riazanov, and Sokolov recorded
it as Open Problem~2 of their STOC 2024 paper~\cite{GNRS24}, conjecturing
a negative answer. Hrube\v{s} previously guaranteed input length
$\Omega(\sqrt c)$~\cite[Proposition~5.2]{Hrubes25}. We obtain linear
input length together with communication arbitrarily close to the
maximum possible on the retained square, at the price of a smaller
linear constant.

\begin{figure}[H]
\centering
\begin{tikzpicture}[x=1cm,y=1cm,>=Latex,font=\small]
  \node[font=\small\bfseries] at (2,4.55) {Original problem};
  \fill[black!3] (0,0) rectangle (4,4);
  \foreach \i in {1,3,6,8} {
    \fill[aliceblue!12] (0,{.4*\i}) rectangle (4,{.4*\i+.4});
    \draw[aliceblue,line width=1.5pt] (-.08,{.4*\i}) -- (-.08,{.4*\i+.4});
  }
  \foreach \i in {1,3,6,8} {
    \fill[boborange!15] ({.4*\i},0) rectangle ({.4*\i+.4},4);
    \draw[boborange,line width=1.5pt] ({.4*\i},4.08) -- ({.4*\i+.4},4.08);
  }
  \foreach \i in {1,3,6,8} {
    \foreach \j in {1,3,6,8} {
      \fill[answergreen!32] ({.4*\i},{.4*\j}) rectangle ({.4*\i+.4},{.4*\j+.4});
    }
  }
  \foreach \i in {1,...,9} {
    \draw[black!25,thin] ({.4*\i},0) -- ({.4*\i},4);
    \draw[black!25,thin] (0,{.4*\i}) -- (4,{.4*\i});
  }
  \draw (0,0) rectangle (4,4);
  \draw[->,thick] (4.5,2) -- (6.6,2)
    node[midway,above,align=center] {keep selected\\inputs};
  \node[align=center,text width=2.5cm] at (5.55,1.1) {no answers\\are changed};
  \node[font=\small\bfseries] at (8.4,4.55) {Retained problem};
  \fill[answergreen!20] (7,0.6) rectangle (9.8,3.4);
  \foreach \i in {1,2,3} {
    \draw[black!35,thin] ({7+.7*\i},.6) -- ({7+.7*\i},3.4);
    \draw[black!35,thin] (7,{.6+.7*\i}) -- (9.8,{.6+.7*\i});
  }
  \draw (7,.6) rectangle (9.8,3.4);
  \draw[aliceblue,line width=1.5pt] (6.92,.6) -- (6.92,3.4);
  \draw[boborange,line width=1.5pt] (7,3.48) -- (9.8,3.48);
  \node[text=boborange] at (8.4,3.8) {$2^k$ original columns};
  \node[anchor=north,align=center] at (8.4,.35)
    {$2^k$ original rows\\same intersection entries};
  \node[anchor=north,align=center] at (2,-.25)
    {communication cost $c$\\input sets may be much larger};
  \node[align=center] at (8.4,-1)
    {cost at least $(1-\varepsilon)(k+1)$};
\end{tikzpicture}
\caption{What condensation preserves. Select original rows and columns
and keep their intersection.
The theorem guarantees $k=\Theta_\varepsilon(c)$ and nearly maximal
communication on the retained square. Shading marks selected inputs,
not Boolean entry values. The grid is schematic.}
\label{fig:restriction}
\end{figure}

The retained inputs must be few enough to name with $O(c)$ bits,
but numerous enough to support a problem requiring $\Omega(c)$ bits.
\Cref{fig:restriction} shows the object we seek. We first state the
guarantees, then use a small lookup problem to explain the proof.

\subsection{Results}

Throughout the paper, $\D(M)$ denotes
deterministic communication complexity with fixed Boolean outputs at
the leaves of a binary protocol tree. Either player may start. The
protocol determines from the preceding transcript who sends the next
bit; the players need not alternate. \Cref{sec:preliminaries} gives the
full definition.

\begin{theorem}[Lossless condensation]\label{thm:condensation}
Let $M$ be a finite total Boolean matrix with $\D(M)=c\ge4$.
For every $0<\varepsilon<1$, there are an integer $1\le k\le c-1$
and a submatrix $N$ consisting of exactly $2^k$ distinct original rows
and $2^k$ distinct original columns such that
\[
 k\ge\frac{\varepsilon c-4}{3},
 \qquad
 \D(N)\ge(1-\varepsilon)(k+1).
\]
\end{theorem}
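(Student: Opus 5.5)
We argue by contraposition with a counting-and-covering scheme in the style of Hrube\v{s}. Suppose that for some $k$ in the range $\ceil{(\varepsilon c-4)/3}\le k\le c-1$ every $2^k\times 2^k$ submatrix $N$ of original rows and columns has $\D(N)<(1-\varepsilon)(k+1)$. We want a protocol for $M$ of depth below $c$, which contradicts $\D(M)=c$. We fix $k$ close to $\varepsilon c/3$. The exact choice is made at the end so that the arithmetic closes with the stated constants.

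\textbf{Step 1: a family of protocol-equipped submatrices.} Every $2^k\times2^k$ square has a protocol of depth $t:=\lfloor(1-\varepsilon)(k+1)\rfloor$, or less. A depth-$t$ protocol partitions the square into at most $2^t$ monochromatic rectangles. It is therefore described by a tree with at most $2^t$ leaves, each carrying a bit and a split of the square's rows and columns. We would like to extend these protocols to larger row sets. Following the counting idea, we consider pairs (a set $R$ of rows of $M$ together with a column set $C$ of size $2^k$) and ask for a depth-$t$ protocol on $R\times C$. Within a fixed $C$, the rows of $M$ fall into at most $2^{2^k}$ distinct row patterns. A depth-$t$ protocol on $2^k$ distinct patterns forces a strong compression. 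The number of Boolean functions on $[2^k]\times[2^k]$ with complexity at most $t$ is at most $2^{O(t2^{t}\cdot k)}$, which is far less than the number of pattern sets. A greedy/covering argument (Hrube\v{s}'s) then shows the following. If every $2^k$-subset of row patterns (against any $2^k$ columns) is $t$-easy, then the matrix $M$ restricted to those columns has few distinct rows, say at most $2^{k}$ up to a factor. Symmetrically the same holds for columns.

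\textbf{Step 2: covering $M$ by named submatrices.} Iterating, we cover all of $M$ by a small number $L$ of submatrices $A_1\times B_1,\dots$. Each piece has at most $2^k$ distinct row types and at most $2^k$ distinct column types. After merging duplicate rows and columns, it is therefore a square of side at most $2^k$, padded to exactly $2^k$ by adding original rows and columns. Each such square has a protocol of depth $<(1-\varepsilon)(k+1)$. The protocol for $M$ works as follows. Alice, knowing her row $x$, names an index $i$ of a covering piece containing her row, compatible with a column-side naming by Bob; this uses $\log L$ bits per player. The players then run the stored protocol of that square. The key point, as the abstract stresses, is that we count protocols and never convert partitions to protocols, so there is no quadratic loss. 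The total cost is $2\log L+(1-\varepsilon)(k+1)$. We need this to be less than $c$.

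\textbf{Step 3: bounding $L$ (main obstacle).} The hardest step is making the covering number $L$ satisfy $\log L\le O(k/\varepsilon)$ rather than something quadratic in $k$. The plan is to use rank-sensitive counting. Consider a $2^k$-side square with a depth-$t$ protocol: it has at most $2^t$ leaves, so its row types number at most $2^{t}$, and the matrix has rank at most $2^t$. This means a random/greedy choice of $2^k$ rows from a set of many distinct types either exhibits $>2^t$ distinct patterns (a hard square, a contradiction) or the number of distinct types is at most $2^{t}\cdot\mathrm{poly}$. The gap $k-t\ge\varepsilon(k+1)-1$ is used to absorb the overhead. We then expect $\log L\le k+O(1)$ per side. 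The total is $2k+(1-\varepsilon)(k+1)+O(1)<3k+4$, which is below $c$ provided $k<(c-4)/3$. Choosing the largest admissible $k$ with $k\ge(\varepsilon c-4)/3$ reconciles the bounds, and $c\ge4$ ensures $k\ge1$. The delicate part is verifying that the greedy extraction does not lose more than $O(1)$ bits per side. If it does lose more, we would fall back to a recursion on rectangles, halving the row side whenever a piece still has too many types.
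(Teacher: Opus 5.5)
There is a genuine gap. Steps 1 and 3 rest on false counting claims, and the argument at a fixed scale lacks the idea that makes the paper's proof work. Concretely, the number of $2^k\times2^k$ matrices of depth at most $t$ is not $2^{O(tk2^t)}$. The depth-one matrices $M(x,y)=f(x)$ alone number $2^{2^k}$, which is far larger when $t\approx(1-\varepsilon)k$. A depth-$t$ square also need not have at most $2^t$ row types. Any matrix whose columns take at most $2^{t-1}$ values has depth at most $t$ (Bob names the value, Alice answers), yet it can have $\min\{2^k,2^{2^{t-1}}\}$ distinct rows. So exhibiting more than $2^t$ patterns does not yield a hard square. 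You are conflating the bound $\rank\le2^{t-1}$ with a bound on distinct rows, which is only $2^{\rank}$. Step~2 names a piece of a cover $A_1\times B_1,\dots,A_L\times B_L$, but Alice cannot choose a piece containing $(x,y)$ from $x$ alone. That is exactly the cover-to-protocol conversion that costs a square. It works only for a product grid $A_i\times B_j$, and nothing in your sketch produces one with $\log L\le k+O(1)$. As a sanity check, your closing arithmetic uses $\varepsilon$ only through $t$. If it were valid, it would hold for every $\varepsilon$ with $k$ just below $(c-4)/3$. Taking $\varepsilon<1/(k+1)$ would then give a square of depth exactly $k+1$ at scale about $c/3$, which is the question the paper leaves open in \cref{sec:questions}.

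The missing ingredients are as follows.

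\emph{Count protocol domains, not protocols or functions on a square.} Fix Alice's maps on a row set and take all columns on which Bob can complete the protocol. Row sums on at most $r=\rank(M)$ basis columns encode the leaf conditions, so there are at most $2^{B(r)2^s-1}$ domains, independent of the matrix size (\cref{lem:domain-count}).

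\emph{Cover one side at a time.} Each domain carries a protocol valid on all rows, so Bob alone names one of at most $r$ domains at cost $\ceil{\log r}$ (\cref{lem:cover}). Transposing handles the rows (\cref{lem:two-sided}).

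\emph{Let the scale depend on the rank.} The hitting sets have size $B(r)2^s$, so the hard restriction has a side of up to $(h+1)2^{s+2h}$ inputs, with $h=\ceil{\log(r+1)}$. The paper therefore splits on rank:
\begin{itemize}
\item If $\floor{\log r}\ge(\varepsilon c-4)/3$, it takes a nonsingular minor of order $2^{\floor{\log r}}$, of depth exactly $\floor{\log r}+1$.
\item Otherwise it takes $k=c-2$, with depth at least $c-2h-\ceil{\log(h+1)}-1>(1-\varepsilon)c$ (\cref{thm:rank-core,thm:near-maximal}).
\end{itemize}
A $k$ fixed in advance near $\varepsilon c/3$ is reached by neither branch when $\varepsilon k/2\lesssim\log r\lesssim(1-\varepsilon)k$. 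That is why your contrapositive at a single scale cannot be closed with these tools.
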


For fixed $\varepsilon$, the retained input length is linear in $c$.
Smaller $\varepsilon$ brings the guaranteed complexity closer to the
maximum $k+1$, at the price of a smaller guaranteed $k$. The proof gives
one of two outcomes, according to the rank of $M$: a nonsingular minor
of order $2^k$ with $k\ge(\varepsilon c-4)/3$, whose complexity is
exactly $k+1$; or a square with $k=c-2$ whose complexity exceeds
$(1-\varepsilon)c$, so that all but an $\varepsilon$ fraction of the
original cost survives. There is no rank or distributional assumption.

There is also a direct guarantee in terms of the original cost $c$.
Balancing the two cases, then enlarging the smaller restriction with
original inputs, gives the following consequence.

\begin{corollary}[Retaining original complexity]\label{cor:full-input-length}
Let $M$ be a finite total Boolean matrix with $\D(M)=c\ge4$.
There is an original $2^{c-2}$-by-$2^{c-2}$ submatrix $N$ such that
\[
 \D(N)\ge\frac{c-\ceil{\log(c+1)}-1}{3}.
\]
\end{corollary}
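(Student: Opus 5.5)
We derive the corollary from the dichotomy behind \cref{thm:condensation}. The proof of that theorem splits on the rank of $M$, and we use a tunable threshold $r$ in place of the one tied to $\varepsilon$.
\begin{itemize}
\item In the high-rank case, $M$ has a nonsingular minor of order $2^k$, whose complexity is exactly $k+1$.
\item In the low-rank case, there is an original $2^{c-2}$-by-$2^{c-2}$ square $N$ whose complexity is bounded below by a quantity that improves as the rank bound tightens.
\end{itemize}
The plan is to fix a rank threshold $r$, write down the guarantee in each branch as a function of $r$, and choose $r$ to balance them.

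\textbf{High-rank branch.} Suppose $\rank M\ge 2^{k}$ for $k=\lfloor (c-\ceil{\log(c+1)}-1)/3\rfloor$ or so. Take a nonsingular $2^k$-by-$2^k$ minor $N_0$. Its rows are linearly independent, so $\D(N_0)\ge\log\rank N_0=k$; in fact $\D(N_0)=k+1$ by the theorem's first outcome. Since $k\le c-2$, we can enlarge $N_0$ to an original $2^{c-2}$-by-$2^{c-2}$ submatrix $N$ by adding arbitrary further original rows and columns. This requires $M$ to have at least $2^{c-2}$ distinct rows and columns, which we may assume after removing duplicates: a matrix with fewer than $2^{c-2}$ distinct rows has $\D\le c-1$. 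Restriction cannot increase complexity, so $\D(N)\ge\D(N_0)\ge k+1$, which meets the target.

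\textbf{Low-rank branch.} Suppose $\rank M<2^{k}$. Here we reuse the counting/covering argument from the proof of \cref{thm:condensation}. Suppose every original $2^{c-2}$-square $N$ has $\D(N)<t$. By the paper's protocol-cover counting, we then build a protocol for $M$ in which one player names a covering square (or a low-rank-structured index), costing about $\log$ of the number of squares needed, which is controlled by $\rank M$ and $c$. The players then run the depth-$<t$ protocol. This yields $\D(M)\le t+O(k)+\ceil{\log(c+1)}+1$ or a similar bound. Comparing with $\D(M)=c$ forces $t\ge c-2k-\ceil{\log(c+1)}-1$ or similar. Choosing $k$ so that both branches give $(c-\ceil{\log(c+1)}-1)/3$ is the balancing step: roughly $k+1=c-2k-\log$-terms, so $k\approx(c-\ceil{\log(c+1)}-1)/3$.

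\textbf{Main obstacle.} The main obstacle is extracting the exact low-rank-branch inequality with the $\ceil{\log(c+1)}$ additive term. That term presumably comes from naming a depth or a count among $c+1$ possibilities in the cover protocol. I would first re-derive the theorem's low-rank case with a general rank parameter instead of the $\varepsilon$-dependent one, tracking constants, and then solve the resulting linear inequality. The floors must be checked so that the final bound holds without rounding loss.
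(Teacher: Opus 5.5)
Your skeleton matches the paper's: split on rank, pad a nonsingular minor to an original $2^{c-2}$-square, take the low-rank square from the covering argument, and balance. The gap is the low-rank branch, which is where all the work lies.

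Balancing to the constant $1/3$ needs exactly the inequality of \cref{thm:rank-core}: with $h=\ceil{\log(\rank M+1)}$, some original $2^{c-2}$-square has $\D(N)\ge c-2h-\ceil{\log(h+1)}-1$. A loss of ``$O(k)$ or similar'' cannot give $1/3$. The mechanism you sketch would not even give $O(k)$. The hypothesis that all $2^{c-2}$-squares are easy does not produce a cover of $M$ by few squares. After deleting duplicate values, each player can still hold up to $2^r$ inputs. Naming a block of a partition into $2^{c-2}$-sets therefore costs about $r-c+2$ bits per player, which is linear rather than logarithmic in $r$.

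The missing idea is the protocol-domain closure:
\begin{itemize}
\item A protocol valid on a small column set extends to its full domain of compatible columns, with every row kept (\cref{lem:compatibility}).
\item These domains number at most $2^{B(r)2^s-1}$, independent of the matrix size (\cref{lem:domain-count}).
\item The hitting-set argument then covers all column values by at most $r$ domains, so naming one costs only $\ceil{\log r}\le h$ bits (\cref{lem:cover}).
\end{itemize}
Apply this first to columns at threshold $s+h$, then to the rows of the transposed restriction at threshold $s$ (\cref{lem:two-sided}). This yields a restriction of depth greater than $s$ whose larger side is at most $B(r)2^{s+h}\le(h+1)2^{s+2h}$. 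Choosing $s=c-2h-\ceil{\log(h+1)}-2$ fits it inside $2^{c-2}$.

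So the $\ceil{\log(c+1)}$ term is not the cost of naming anything. It is $\ceil{\log(h+1)}\le\ceil{\log(c+1)}$, using $h\le c$ from \cref{lem:rank-facts}, and it pays for the factor $h+1$ in $B(r)$ within the size budget.

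Once you cite this inequality, your threshold works. If $\rank M<2^k$, then $h\le k$, so
\[
c-2h-\ceil{\log(h+1)}-1\;\ge\; c-2k-\ceil{\log(c+1)}-1\;\ge\;\frac{c-\ceil{\log(c+1)}-1}{3},
\]
because $k\le(c-\ceil{\log(c+1)}-1)/3$. The paper simply takes $\max\{h,\,c-2h-\ceil{\log(c+1)}-1\}$ and needs no threshold.

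Your high-rank branch also needs one small repair. For $c\in\{5,6\}$ your $k$ equals $0$. A $1$-by-$1$ nonsingular minor has depth $0$, not $k+1$, since \cref{lem:rank-facts} requires $k\ge1$, while the target is positive. Use a $2$-by-$2$ nonsingular minor instead; it exists because $\rank M\ge c-1\ge3$. The paper avoids this by taking order $2^{\min\{h-1,c-2\}}$ with $h\ge2$.
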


Thus at input length $c-2$ per player, at least one third of the
original communication complexity survives, up to an $O(\log c)$ loss.
This measures retained hardness against $c$; \cref{thm:condensation}
instead lets us approach the maximum possible hardness on a possibly
smaller square. Both are consequences of the same rank-sensitive bound
in \cref{thm:rank-core}.

The proof builds on Hrube\v{s}'s rank-sensitive counting and covering
approach~\cite{Hrubes25}. We count submatrices equipped with short
communication protocols. Covering one player's inputs by such submatrices
lets that player name one and then run its protocol. The added communication
is just the index. Working directly with protocol depth yields the linear
input-length scale; \cref{sec:related-work} compares this with the earlier
bound obtained through rectangle partitions.

For the near-maximal guarantee, \cref{thm:near-maximal} gives a sharper
lower bound on the retained input length $k$. The following algorithm
realizes this guarantee. It
receives the table and a target depth $d$. If it cannot produce a hard
restriction at that scale, it supplies a short protocol for the whole matrix.

\begin{theorem}[Constructive condensation]\label{thm:constructive-main}
Fix a rational $0<\varepsilon<1$. Given a finite total Boolean matrix
$M$ of table size $T$ and an integer $d\ge4$, a deterministic algorithm
returns either an explicit protocol for $M$ of depth less than $d$,
or an original $2^k$-by-$2^k$ submatrix $N$ satisfying
\[
 \begin{gathered}
 1\le k\le d-2,\qquad \D(N)\ge(1-\varepsilon)(k+1),\\
 k\ge\max\left\{\frac{\varepsilon d-4}{3},
     \frac{\varepsilon d-\ceil{\log(d+1)}-3}{2}\right\},
 \end{gathered}
\]
in time $2^{O(2^d)}T^{O(1)}$.
\end{theorem}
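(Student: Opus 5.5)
We turn the existence proof of \cref{thm:condensation} into a recursion on rectangles. Throughout, we set $m=\ceil{\log(d+1)}$ and choose a threshold scale $k_0$ as the larger of the two lower bounds in the statement. The proof of \cref{thm:condensation} is rank-sensitive, and we follow its case split, with one change: each step that uses the assumption $\D(M)\ge c$ becomes a test whose failure produces a protocol.

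\textbf{Step 1 (rank case).} Compute $r=\rank M$ over $\R$ by Gaussian elimination in time $T^{O(1)}$. If $r\ge 2^{k}$ for some admissible $k\ge(\varepsilon d-4)/3$ with $k\le d-2$, extract a nonsingular $2^k\times2^k$ minor of original rows and columns. Its complexity is $k+1$ by the log-rank lower bound for full-rank squares. Otherwise $\rank M<2^{k_1}$ with $k_1=\ceil{(\varepsilon d-4)/3}$. Then $M$ has at most $2^{2^{k_1}}$ distinct rows and at most as many distinct columns, so we may compress it to its distinct rows and columns. This costs nothing in the protocol, since players map their inputs to representatives. The reduced table has side at most $2^{2^{k_1}}\le 2^{2^d}$.

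\textbf{Step 2 (covering recursion).} Work on the compressed matrix and set $k=\min(d-2,\text{side-dependent bound})$. Enumerate all $2^k$-by-$2^k$ original squares $N$, which number $2^{O(k2^{k_1})}\le 2^{O(2^d)}$. For each, decide whether $\D(N)\ge(1-\varepsilon)(k+1)$ by brute-force protocol search. Protocol trees of depth $t<k+1$ on a $2^k$-square can be found by dynamic programming over subrectangles in time $2^{O(2^k)}$, so the whole enumeration stays within $2^{O(2^d)}$. If some square passes, output it.

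If no square passes, every square has a protocol of depth below $(1-\varepsilon)(k+1)$. Here the covering argument of \cref{thm:rank-core} applies, in the form of the counting lemma behind it: for rows of bounded rank, a small family of such squares together with their short protocols covers Bob's columns. Alice's rows are then handled recursively on rectangles. Concretely, one player names a covering submatrix (about $\varepsilon(k+1)$ or $m$ bits, depending on which branch of the max holds), the players restrict to the resulting rectangle, and then they run its stored protocol. We implement the cover greedily, and the counting bound guarantees that greedy selection terminates with few pieces. The recursion depth is bounded, so the assembled protocol has depth at most $\varepsilon(k+1)+(1-\varepsilon)(k+1)+O(1)$. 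The choice of $k_0$ makes this less than $d$, and we output it. Correctness of the dichotomy follows because any protocol we output is a genuine protocol for $M$; in particular, if $\D(M)\ge d$ the square branch must occur.

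\textbf{Main obstacle.} The hard step is making the covering bound of \cref{thm:rank-core} effective. Its existence argument counts submatrices nonconstructively, and we must replace it with a greedy cover whose size matches the counting bound within the index budget. We would try a set-cover-style argument. At each stage, the stored short-protocol squares cover a constant fraction of the remaining rows, because their number exceeds that of the squares lacking them. Repeating logarithmically many times costs only the additive $\ceil{\log(d+1)}$ term, which explains the second expression in the lower bound on $k$.
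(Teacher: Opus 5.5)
\textbf{Step 2: running time.} Step~2 does not achieve the claimed running time. After Step~1 the rank can still be as large as $2^{k_1}-1=2^{\Theta(\varepsilon d)}$. The compressed table may therefore have up to $2^r$ distinct rows and as many distinct columns. Then there are $\binom{2^r}{2^k}^2\ge2^{2(r-k)2^k}$ candidate squares, not $2^{O(k2^{k_1})}$. In the low-rank case the guarantee of \cref{thm:rank-core} is at $k=d-2$, so for large $d$ this count is $2^{2^{(1+\Theta(\varepsilon))d}}$, which is not $2^{O(2^d)}$. What you describe is essentially the rank-preprocessing baseline of \cref{sec:exhaustive-search}. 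That baseline has bound $2^{O(2^{(1+\varepsilon/2)d})}T^{O(1)}$, which the theorem is meant to improve, and it also uses the promise $\D(M)\ge d$.

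Two smaller points. First, Step~1 must compare $\floor{\log r}$ with the maximum $q$ of the two lower bounds, not just with $(\varepsilon d-4)/3$. For large $d$ the second bound is the larger one, and a minor at scale $k_1$ violates it. Second, you need the paper's prechecks: return the row-value protocol when $r<d-1$, and the input-announcement protocol when a side has at most $2^{d-2}$ inputs. These ensure an original square of side $2^{d-2}$ is available for padding.

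\textbf{The branch in which no square passes.} This is the decisive gap. The theorem requires an explicit protocol for all of $M$ of depth below $d$. Short protocols on $2^k$-by-$2^k$ squares do not assemble into one, because each is valid only on its own $2^k$ rows. Bob cannot name a covering square without knowing Alice's row. If Alice also names a block of rows, she pays the logarithm of the number of blocks, and nothing in your argument bounds this by the roughly $\varepsilon d$ bits you can spare.

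The cover in \cref{lem:cover} consists of full protocol domains $G_P(X)$, each carrying a protocol valid on \emph{every} row. Its hypothesis concerns restrictions $M[X,C]$ with all rows present. Reaching that hypothesis from small squares is itself a second, transposed use of the cover, as in \cref{lem:two-sided}.

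Making these covers effective is the actual content of the paper's proof, which never brute-forces squares. The steps are as follows.
\begin{itemize}
\item \Cref{lem:leaf-list} lists the output-labelled leaf domains.
\item \Cref{lem:rectangle-recursion} turns each speaker tree with chosen leaf domains into an executable protocol on a rectangle.
\item \Cref{thm:protocol-list} keeps the trees whose root row set is $X$ and proves that every cheap $M[X,C]$ lies in a listed domain.
\item \Cref{lem:constructive-one-sided} either covers the column representatives by at most $r$ listed domains through strict halving, or, when stuck, halves the list of domains. The latter builds a small column set contained in no listed domain, and its hardness follows from the proved coverage.
\item \Cref{lem:constructive-two-sided} runs this at threshold $s+\lambda$, then on the transpose of $M[X,V]$ at threshold $s$. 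Here $\lambda=\ceil{\log r}$, $h=\ceil{\log(r+1)}$, $\tau=\ceil{\log(h+1)}$ and $s=d-2h-\tau-2$.
\end{itemize}
A protocol outcome then has depth at most $s+2h<d$. The lists have at most $2^{2^{d-2}}$ members, which is the source of the $2^{O(2^d)}$ bound.

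\textbf{Accounting for the second lower bound.} Your explanation of this term is mistaken. The cover has at most $r$ members, so each announcement costs $\lambda\le h$ bits. The term $\ceil{\log(d+1)}$ enters only through $\tau$, which absorbs the factor $h+1$ in $B(r)\le2^h(h+1)$ when the side is bounded by $2^{d-2}$. Progress in each covering round comes from the domain count $\card{\mathcal C_s}\le2^{a(r,s)-1}$. It does not come from comparing the numbers of squares with and without short protocols.
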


No hardness promise is needed to run this algorithm. If $\D(M)\ge d$,
the protocol outcome is impossible, so the algorithm returns the square.
The alternatives need not be disjoint: obtaining the square does not
certify that $\D(M)\ge d$.
Taking $d=\D(M)$ gives a constructive form of \cref{thm:condensation}
and of the sharper bound in \cref{thm:near-maximal} for rational
$\varepsilon$. \Cref{sec:condensation} proves both existence statements
directly for every real $\varepsilon$.

A size bound already permits extraction by exhaustive search when a valid
lower bound $\D(M)\ge d$ is supplied. Combining
this search with a rank test and removal of repeated input values gives
the same guarantee in time $2^{O(2^{(1+\varepsilon/2)d})}T^{O(1)}$
(\cref{sec:exhaustive-search}). Our algorithm improves this running-time
bound's parameter dependence to $2^{O(2^d)}$, while retaining an absolute constant exponent
on the table size. It also makes the counting and covering proof constructive: its
subroutines produce either an explicit short global protocol or hard
original inputs. The running time is doubly exponential in $d$; the
algorithm does not claim polynomial-time condensation.

The main text proves these two-party results. \Cref{app:tradeoffs}
records an alternative tradeoff with faster extraction and input length
within five bits of a supplied lower bound $d$, at a constant fraction
of maximal hardness.
\Cref{sec:multiparty} extends constant-factor condensation to any fixed
number of players in the number-in-hand model: each player knows only their own
input and all communication is public. For $p$ players and original
complexity $c\ge p+2$, a restriction to $2^{O(c)}$ inputs per player
retains complexity at least $\ceil{c/(p+1)}$; \cref{thm:multiparty-main}
gives the exact size bound. These bounds are independent of the finite
output alphabet. This extension is existential; the
algorithmic result above is two-party.

\subsection{Proof ideas}\label{sec:proof-ideas}

The proof works backwards from a short global protocol. If every small
restriction were easy, we would assemble their protocols into a short
protocol for the whole matrix.

One concise quantitative form is \cref{thm:local-global}: for any integer
$s\ge1$, if every nonempty submatrix with at most $2^{4s}$ rows and at
most $2^{4s}$ columns has complexity at most $s$, then the whole matrix
has complexity at most $3s-2$. In other words, uniformly cheap small
restrictions force a short global protocol. The nearly maximal bounds
above use the same argument with a more careful choice of parameters.
We first use a small example to see how local protocols can be combined.

\paragraph{A lookup example.}\label{sec:lookup}
The following matrix will accompany the argument. Alice has two
bits $x=(x_1,x_2)$; Bob has a query $y\in\{0,1,2\}$.
Query zero always asks for the answer zero, and query $j\in\{1,2\}$
asks for $x_j$. Thus
\begin{equation}\label{eq:lookup}
 F(x,0)=0,\qquad F(x,1)=x_1,\qquad F(x,2)=x_2.
\end{equation}
\Cref{fig:lookup} displays every entry.

\begin{figure}[H]
\centering
\begingroup
\renewcommand{\arraystretch}{1.25}
\setlength{\tabcolsep}{11pt}
\begin{tabular}{c|ccc}
 & \textcolor{boborange}{$y=0$}
 & \textcolor{boborange}{$y=1$}
 & \textcolor{boborange}{$y=2$}\\
\hline
\textcolor{aliceblue}{$x=00$}&0&0&0\\
\textcolor{aliceblue}{$x=01$}&0&0&1\\
\textcolor{aliceblue}{$x=10$}&0&1&0\\
\textcolor{aliceblue}{$x=11$}&0&1&1
\end{tabular}
\endgroup
\caption{The lookup matrix $F$. Alice supplies a row; Bob selects
a column. Column zero is constant, and columns one and two read
Alice's first and second bits.}\label{fig:lookup}
\end{figure}

On the column set $C_0=\{0,1\}$, Bob sends a bit distinguishing
the two queries, then Alice sends zero or $x_1$ as appropriate.
On $C_1=\{1,2\}$, Bob likewise indicates his query and Alice sends
the requested bit. Each depth-two protocol works on \emph{all four rows},
and their domains together cover all of Bob's inputs.

Bob can therefore first announce which domain to use: $C_0$ if
$y\in\{0,1\}$, and $C_1$ if $y=2$. Both parties then run the chosen
protocol. The cost is at most three bits: one for the domain index and
two for its protocol. Overlap causes no difficulty because Bob uses a
fixed selection rule. In this small example the cover is no cheaper
than the trivial protocol, in which Alice names her row and Bob sends
the answer; the gain comes when few domains cover many columns.

\begin{figure}[H]
\centering
\begin{tikzpicture}[x=1cm,y=1cm,>=Latex,
 every node/.style={font=\small},
 protocolbox/.style={draw,rounded corners=2pt,minimum width=3.2cm,
 minimum height=1.5cm,align=center}]
\node (zero) at (0,1.7) {$y=0$};
\node (one) at (2.6,1.7) {$y=1$};
\node (two) at (5.2,1.7) {$y=2$};
\node[protocolbox,draw=boborange,fill=boborange!5] (left) at (.7,0)
 {$C_0=\{0,1\}$\\Bob: which query?\\Alice: $0$ or $x_1$};
\node[protocolbox,draw=boborange,fill=boborange!5] (right) at (4.4,0)
 {$C_1=\{1,2\}$\\Bob: which query?\\Alice: $x_1$ or $x_2$};
\draw[->] (zero.south) -- ([xshift=-7mm]left.north);
\draw[->] (one.south) -- ([xshift=7mm]left.north);
\draw[->] (two.south) -- ([xshift=7mm]right.north);
\node[align=center] at (2.6,-1.5)
 {Bob chooses a domain using only $y$.\\
 Each protocol accepts every Alice input.};
\end{tikzpicture}
\caption{A cover by domains of whole protocols. The arrows show
Bob's fixed choice of a domain; they are not protocol-tree edges.
He announces that choice and then runs the selected protocol.
The only additional communication is the domain index.}
\label{fig:domain-cover}
\end{figure}

The example isolates the proof's main communication step: a cover by
short protocols gives a global protocol at the cost of naming a domain.
For a general matrix, we must find a cover with few domains. Three
ingredients make this possible, and a rank dichotomy combines them.

\Needspace{9\baselineskip}
\paragraph{Protocol covers.}
Keep every Alice input and restrict only Bob's columns. Fix Alice's
instructions in a short protocol, then include \emph{every} column on
which Bob can complete them correctly. Call this full set a
\emph{protocol domain}. As in the lookup example, a cover
by such domains gives a global protocol: Bob names a domain containing
his column, then the players run its protocol. His choice needs no
information about Alice's input. We count full domains, not every cheap
subset; even a constant matrix can have arbitrarily many cheap subsets.

\paragraph{Rank-sensitive counting.}
Counting Alice's message functions directly would depend on the number
of row labels. Instead, consider the columns accepted at one leaf:
every row reaching that leaf must give the leaf's answer. Discarding
redundant requirements leaves a list whose length is at most the rank
$r\ge1$ of the matrix. Summing the rows on that list encodes the whole
condition in one vector: in a given column, a sum of zero means that
all these rows answer zero, and a sum equal to their number means that
all answer one. The sum is determined by its entries on at most $r$
basis columns, so there are at most $2(r+1)^{r+1}$ leaf conditions.
Intersections and unions combine them along the tree, so partial
protocols of depth $s$ have at most $2^{O(r\log(r+1))\,2^s}$ protocol
domains, however large the matrix is (\cref{lem:domain-count}).

\paragraph{Finite covering.}
Because there are few domains, a finite covering argument gives a
dichotomy: either some set of $O(r\log(r+1)\,2^s)$ columns, about the
logarithm of the number of domains, escapes every protocol of depth
$s$, or one such protocol handles more than half of the uncovered
columns. A matrix of rank $r$ has at most $2^r$ distinct column values,
so repeating the second outcome at most $r$ times gives a cover by at
most $r$ domains, and naming one of them costs $\ceil{\log r}$ bits.
We apply the argument again with rows and columns interchanged.

\paragraph{The rank dichotomy.}
If $\log r$ is a sufficiently large fraction of $c$, a nonsingular
minor gives a square at a linear input-length scale with exactly
maximal complexity (\cref{lem:rank-facts}). This standard minor argument
was already noted in the condensation setting by G\"o\"os et
al.~\cite[Section~1.2]{GNRS24}. The additional step here handles the
complementary low-rank case. Here the two domain
indices are cheap: the total loss is $O(\log r)$ bits. The covering
argument then gives a square with $k=c-2$ retaining more than
$(1-\varepsilon)c$ bits of complexity. \Cref{thm:rank-core,thm:near-maximal}
give the thresholds and constants. Thus rank either supplies the hard
restriction directly or makes the cover inexpensive; no rank promise
is needed.

\paragraph{Making the argument constructive.}
The constructive proof realizes the same objects. Start with rectangles
on which the answer is constant. At an Alice node, join two protocols
on the union of their row sets and intersection of their column sets:
Alice can choose a child from her row alone. At a Bob node, reverse the
roles. This builds executable protocols together with their domains.
The counting bound limits the enumeration, which yields a cover or
hard inputs. We develop the domain and counting arguments in
\cref{sec:domains,sec:counting}, prove condensation in
\cref{sec:condensation}, and give the algorithm in \cref{sec:construction}.
The lookup example accompanies each new ingredient.

\subsection{Related work}\label{sec:related-work}

Hardness condensation, which shrinks the input of a hard problem while
keeping it hard at the new scale, was introduced in circuit complexity
by Buresh-Oppenheim and Santhanam~\cite{BureshOppenheimSanthanam06} and
later used by Razborov in proof complexity~\cite{Razborov16}.
G\"o\"os, Newman, Riazanov, and Sokolov~\cite{GNRS24} study condensation
by restriction, in which the smaller problem keeps original inputs.
They prove lossless condensation for an $n$-variable query function
$f$ lifted with an inner-product gadget on $\Theta(\log^2 n)$ bits,
assuming the deterministic decision-tree complexity of $f$ is
$n^{\Omega(1)}$. They conjecture, by
analogy with their negative result for query complexity, that
deterministic communication complexity does not condense losslessly in
general~\cite[Section~1.2]{GNRS24}; \cref{thm:condensation} refutes
this conjecture. The CCC 2026 paper of Kayal and
coauthors~\cite[Section~6]{KayalEtAl26} still lists the question as open.

Hrube\v{s}'s unconditional bound~\cite[Proposition~5.2]{Hrubes25}
passes through the \emph{1-partition number} $\chi_1(M)$: the least
number of disjoint all-one rectangles partitioning the one-entries.
Yannakakis's bound $\D(M)\le O((\log(\chi_1(M)+1))^2)$~\cite[Lemma~1]{Yannakakis91}
supplies a square-root lower bound on $\log\chi_1(M)$ in terms of
$\D(M)$. Its quadratic dependence is tight up to polylogarithmic
factors~\cite[Theorem~2]{GoosPitassiWatson18}. Hrube\v{s}'s theorem preserves
a constant fraction of this partition-number scale, giving input length
$\Omega(\sqrt{\D(M)})$ and communication proportional to that length.
His proof already combines rank-sensitive counting of maximal input
sets~\cite[Lemma~4.2]{Hrubes25} with a finite covering
argument~\cite[Lemma~3.1]{Hrubes25}.
We extend this count-and-cover approach to domains of whole protocols.
Every covering domain already carries an executable protocol, so the
extra communication only names a member of the cover. Working directly
with protocol depth gives a linear input-length scale.

The deterministic model matters. Hambardzumyan, Hatami,
and Hatami~\cite[Theorem~3.1]{HambardzumyanHatamiHatami22} give matrices with
growing public-coin randomized complexity whose submatrices up to the
square-root side length have constant randomized complexity. As
G\"o\"os, Newman, Riazanov, and Sokolov observe~\cite[Section~1.3]{GNRS24}, these parameters
rule out condensation in that model even with a polynomial loss. The negative
query results of~\cite{GNRS24,KayalEtAl26} concern coordinate restrictions,
whereas here we may choose arbitrary rows and columns.

The leaf-condition count has precedents in graph algorithms, where
unions of neighbourhoods are compressed using matrix
rank~\cite{BuiXuanTelleVatshelle10,OumSaetherVatshelle14}; a leaf's accepted
columns have this form after taking complements. We use the row-sum
encoding of Oum, S{\ae}ther, and Vatshelle to count leaf conditions, then
combine them along the protocol tree. \Cref{sec:counting} gives the
details. This composition yields the nearly maximal retained hardness.

Listing all distinct intersections of a family of sets is a standard
enumeration problem~\cite{MaryStrozecki19}. Our constructive proof uses
such a procedure for leaf domains, with list size bounded by the
row-sum count. Chandran, Issac, and
Karrenbauer~\cite[Section~2]{ChandranIssacKarrenbauer17} use basis guessing
and independent row reconstruction for the related problem of finding
biclique partitions. Here we join monochromatic rectangles along a
protocol tree. The additional requirement is to produce a family of
protocol-bearing domains such that every cheap restriction is contained
in some member, rather than to construct a biclique partition of the
whole matrix.

More broadly, small sets witnessing circuit hardness appear in
Lipton--Young anticheckers~\cite{LiptonYoung94}.
Here communication cost must control the size of an original restriction,
despite the unbounded description length of local message functions.
All counting and reconstruction arguments used here are proved in the
paper.

We now specify the protocol model and begin the elementary proof.

\section{Preliminaries}\label{sec:preliminaries}

We use the standard deterministic protocol-tree framework; see
Rao and Yehudayoff~\cite[Chapter~1]{RaoYehudayoff20}.
The output convention, which matters for our integer bounds, is stated below.

All input sets in the paper are finite and nonempty. A Boolean
communication matrix is a function
\[
 M:X\times Y\longrightarrow\{0,1\}.
\]
Alice knows $x\in X$, Bob knows $y\in Y$, and both know the matrix.
The word \emph{total} means that an answer is specified for every
pair in $X\times Y$.

A deterministic protocol is a finite rooted binary tree. At each
internal node, the tree specifies which player sends the next bit
and a function from that player's input set to $\{0,1\}$. We call
that player the \emph{speaker} at the node. The transmitted bit determines
the next child. A node's identity already specifies the preceding
transcript, so its message function may depend on that transcript.
Each leaf has a fixed output bit. A correct protocol reaches a leaf
labelled $M(x,y)$ on every input pair.

The depth is the maximum number of transmitted bits on any
root-to-leaf path. We write $\D(M)$ for the minimum depth of a
correct protocol; a constant matrix therefore has complexity zero.
We call $\D(M)$ the communication complexity of $M$, or simply its
depth; a particular correct protocol may be deeper.
Either player may start, and successive bits may come from the same
player. The identity of the next speaker is determined by the protocol
and the preceding transcript, not by an uncommunicated choice. There
is no bound on the number of rounds. Local computation and the
length of a protocol's description are uncharged. An answer known
only to the last player is not yet a fixed leaf output: transmitting
it, when necessary, costs a bit.

For nonempty $U\subseteq X$ and $V\subseteq Y$, the notation
$M[U,V]$ means the submatrix on those original labels. Restricting
a correct protocol gives
\begin{equation}\label{eq:restriction}
 \D(M[U,V])\le\D(M).
\end{equation}
The reverse perspective will be useful: enlarging an already hard
submatrix with additional original inputs cannot make it easier.

All logarithms are to base two. Depth parameters and rank upper bounds
are integers. Rank means rank over $\R$ unless another field is
explicitly named. A \emph{row value} is the entire
Boolean vector indexed by the columns, as opposed to its input
label; column values are defined similarly.

We use $c$ for the communication complexity of the whole problem,
$s$ for a local protocol-depth bound, and $d$ for an algorithm's target
depth. Where a lower-bound promise $\D(M)\ge d$ is needed, we state it
explicitly. Additional parameters are defined where
they are needed.

\subsection{Repeated input values}

Identical column values are interchangeable for computing the
function. Choose one original representative of each value and
map every column label to its representative. A protocol on the
representatives gives a protocol on all columns: Bob applies the
map locally before following his prescribed messages. Conversely,
a protocol on all columns restricts to the representatives.
Thus deleting duplicate column values preserves communication
complexity and rank. The same holds for rows.

This observation concerns values, not the identity of a final
selected input. Whenever we retain representatives, they are
actual original labels. Whenever we later enlarge a selected
submatrix, we choose distinct unused original labels. These two
operations will never create a new row or column.

\section{Protocol domains}\label{sec:domains}

We want to count the column sets handled by short protocols. Even a
constant matrix has exponentially many column subsets, so we should not
count every cheap subset. Instead, fix Alice's instructions and include
\emph{every} column for which Bob can complete them correctly.
We call the resulting set a \emph{protocol domain}.

\subsection{Compatible columns}

There is an important restriction on what a completion can do.
In the lookup matrix \eqref{eq:lookup}, retain just rows $01$ and
$10$, and consider a one-bit protocol in which only Bob speaks,
with leaves labelled zero and one.
On column one the required answers are zero and
one, respectively. Bob cannot choose the right answer in both cases:
his input is the same. Column two fails for the same reason. Column
zero works, since its answer is zero on both rows.

Thus we must ask whether \emph{one} choice by Bob works on all the
retained rows, not whether each row could be handled by a separate
choice. The following definition imposes exactly this requirement.

Any protocol of depth at most $s$ can be padded to a complete binary
tree of depth $s$. Below a former leaf, add nodes transmitting a
fixed bit and give every new leaf the former output. This changes
neither correctness nor the depth bound.

\begin{definition}[Partial protocols and their domains]
\label{def:partial}
Fix a row set $U\subseteq X$ and a complete depth-$s$ tree.
A partial protocol $P$ specifies every speaker, every leaf output,
and Alice's message functions on $U$. Bob's message functions are
left unspecified.

A column $y\in Y$ is \emph{compatible} with $P$ on $U$ if there
is one assignment of bits to all Bob nodes that makes the tree
correct on every pair $(x,y)$ with $x\in U$.
The set of compatible columns, denoted by $G_P(U)$, is the
\emph{protocol domain} of $P$ on $U$.
\end{definition}

The assignment may depend on $y$, since Bob knows his own input.
It may not depend on $x$. If $U$ is empty, every column is
compatible; this convention is useful for empty branches of a tree.
Whenever we discuss the communication complexity of a restriction,
both its input sets are nonempty.

For the Bob-only protocol above, $G_P(\{01,10\})=\{0\}$.
Allowing separate choices for the two rows would incorrectly accept
all three columns. By contrast, the two depth-two protocols in the
introduction each handle all four rows, on their respective domains.

The point of compatibility is that it describes not just
individual successful executions but a domain with one legal
protocol.

\begin{lemma}[Compatibility and protocols]\label{lem:compatibility}
For nonempty $U\subseteq X$ and $V\subseteq Y$,
\[
 \D(M[U,V])\le s
 \quad\Longleftrightarrow\quad
 V\subseteq G_P(U)
 \text{ for some depth-$s$ partial protocol }P.
\]
If $G_P(U)$ is nonempty, the whole restriction
$M[U,G_P(U)]$ has a protocol of depth at most $s$.
\end{lemma}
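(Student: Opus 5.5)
We prove the two directions of the equivalence separately and then obtain the final sentence by applying the backward direction with $V=G_P(U)$. Throughout, a protocol of depth at most $s$ is first padded to a complete binary tree of depth $s$, as described just before \cref{def:partial}. Padding changes neither correctness nor the depth bound.

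\emph{Forward direction.} Suppose $\D(M[U,V])\le s$. Take a correct protocol $Q$ for $M[U,V]$ of depth at most $s$ and pad it to a complete depth-$s$ tree. Let $P$ be the partial protocol obtained by keeping every speaker, every leaf output, and Alice's message functions restricted to $U$, and forgetting Bob's message functions. Fix $y\in V$. The needed assignment to Bob's nodes is the one $Q$ itself prescribes: at each Bob node, send the bit given by $Q$'s message function evaluated at $y$. This assignment depends only on $y$, and it reproduces $Q$'s run on every pair $(x,y)$ with $x\in U$. Correctness of $Q$ on $M[U,V]$ therefore makes the tree correct on all these pairs, so $y\in G_P(U)$ and hence $V\subseteq G_P(U)$.

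\emph{Backward direction.} Suppose $V\subseteq G_P(U)$ for a depth-$s$ partial protocol $P$. For each $y\in V$, choose one witnessing assignment $a_y$ from the definition of compatibility; since $V$ is finite, no choice principle is at issue. Define Bob's message function at each Bob node $v$ by $y\mapsto a_y(v)$. Together with $P$'s speakers, Alice's functions on $U$, and the leaf outputs, this gives a genuine deterministic protocol on $U\times V$ of depth $s$. On an input $(x,y)$, the execution follows Alice's bits and $a_y$'s bits. By the choice of $a_y$, this execution reaches a leaf labelled $M(x,y)$, so the protocol is correct and $\D(M[U,V])\le s$. The last claim of the lemma follows by taking $V=G_P(U)$, which is nonempty by hypothesis.

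The only delicate point is keeping Bob's messages independent of $x$. This is exactly why compatibility requires a single assignment valid for all $x\in U$. In the construction above, $a_y(v)$ is a function of $y$ and the node $v$ only, and a node already encodes the transcript, so the result is a legal protocol. A secondary check is that Alice's message functions need only be defined on $U$, because the protocol is run only on rows in $U$.
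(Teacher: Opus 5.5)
Your proof is correct and follows essentially the same route as the paper. For the forward direction you pad and forget Bob's maps; for the converse you assemble the per-column witnessing assignments into legal Bob message functions. The only cosmetic difference is that the paper builds the protocol directly on $U\times G_P(U)$ and restricts it, whereas you build it on $U\times V$ and then specialize to $V=G_P(U)$.
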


\begin{proof}
Given a correct protocol on $U\times V$, pad it and forget
Bob's maps. Its original choices certify compatibility of every
column in $V$.

Conversely, for each $y\in G_P(U)$ choose one successful
assignment of bits to Bob nodes. At any fixed Bob node, these
chosen bits are now a function of $y$ alone. They therefore
define legal Bob message functions. Together with Alice's fixed
maps, they give a correct protocol on all of $U\times G_P(U)$.
The common assignment required in \cref{def:partial} is what
makes this conclusion valid.
\end{proof}

\subsection{The conditions at a leaf}

To describe $G_P(U)$, begin at the root with the rows in $U$.
At an Alice node, split the current row set according to her
fixed message function. At a Bob node, keep the current row set
unchanged when considering either child; Bob must decide which
child can serve that whole set.

Thus the rows associated with a node are those consistent with
Alice's choices along its root-to-node path. Earlier Bob choices
are ignored in forming this set. They determine which subtree
will be used, not which row Alice holds.

At a leaf labelled $b$, the accepted columns are exactly
\begin{equation}\label{eq:leaf-condition}
 \{y\in Y:M(x,y)=b\text{ for every associated row }x\}.
\end{equation}
At an Alice node, \emph{both} row groups must be served, so the
accepted columns are the intersection of the child sets.
At a Bob node, he may select \emph{one} child for all the rows,
so the accepted columns are their union.

The lookup example makes the difference visible. If Alice sends
$x_1$, her zero branch contains rows $00,01$, and her one branch
contains rows $10,11$. Column one gives the required answer in
both branches; column zero works only in the first. Now suppose
instead that Bob chooses whether Alice should send $x_1$ or $x_2$.
Each candidate protocol is tested on all four rows. He can select
the first on column one and the second on column two.
\Cref{fig:compatibility} computes the resulting column sets.

\begin{figure}[htbp]
\centering
\begin{tikzpicture}[x=1cm,y=1cm,>=Latex,
 every node/.style={font=\small},
 instruction/.style={draw,rounded corners=2pt,minimum width=2.65cm,
 minimum height=0.85cm,align=center},
 good/.style={draw=answergreen,fill=answergreen!10,text=answergreen,
 circle,minimum size=0.52cm,inner sep=0pt},
 bad/.style={draw=black!40,text=black!60,circle,
 minimum size=0.52cm,inner sep=0pt}]
\node at (0,2.7) {\textbf{An Alice node}};
\node[instruction,draw=aliceblue,fill=aliceblue!5] (a) at (0,1.9)
 {Alice sends $x_1$\\all four rows};
\node[instruction] (al) at (-1.65,0)
 {output $0$\\rows $00,01$};
\node[instruction] (ar) at (1.65,0)
 {output $1$\\rows $10,11$};
\draw[->] (a.south) -- node[pos=.6,above left=5pt,inner sep=0pt] {$0$} (al.north);
\draw[->] (a.south) -- node[pos=.6,above right=5pt,inner sep=0pt] {$1$} (ar.north);
\node[good] at (-2.2,-1.05) {$0$};
\node[good] at (-1.65,-1.05) {$1$};
\node[bad] at (-1.1,-1.05) {$2$};
\node[bad] at (1.1,-1.05) {$0$};
\node[good] at (1.65,-1.05) {$1$};
\node[bad] at (2.2,-1.05) {$2$};
\node[align=center] at (0,-1.8)
 {$\{0,1\}\cap\{1\}=\{1\}$\\Both row groups must be served.};

\node at (7.4,2.7) {\textbf{A Bob node}};
\node[instruction,draw=boborange,fill=boborange!5] (b) at (7.4,1.9)
 {Bob chooses a coordinate\\all four rows};
\node[instruction,draw=aliceblue,fill=aliceblue!5] (bl) at (5.75,0)
 {Alice sends $x_1$\\all four rows};
\node[instruction,draw=aliceblue,fill=aliceblue!5] (br) at (9.05,0)
 {Alice sends $x_2$\\all four rows};
\draw[->] (b.south) -- node[pos=.6,above left=5pt,inner sep=0pt] {$0$} (bl.north);
\draw[->] (b.south) -- node[pos=.6,above right=5pt,inner sep=0pt] {$1$} (br.north);
\node[bad] at (5.2,-1.05) {$0$};
\node[good] at (5.75,-1.05) {$1$};
\node[bad] at (6.3,-1.05) {$2$};
\node[bad] at (8.5,-1.05) {$0$};
\node[bad] at (9.05,-1.05) {$1$};
\node[good] at (9.6,-1.05) {$2$};
\node[align=center] at (7.4,-1.8)
 {$\{1\}\cup\{2\}=\{1,2\}$\\One candidate protocol suffices.};
\node[text=answergreen] at (3.7,-2.6)
 {A green circle marks a compatible column; numbers in circles are column labels.};
\end{tikzpicture}
\caption{Intersection and union in the same lookup matrix. On the
left, the leaves receive different row groups, and a column must
work on both. On the right, the children are complete one-bit Alice
protocols, each tested on all four rows; Bob chooses a successful
one using his column. Edge labels are transmitted bits. The rows
are split only at the Alice node, not at the Bob node.}
\label{fig:compatibility}
\end{figure}
\FloatBarrier

Formally, this description follows by induction on the subtree
height. At an Alice node, successful Bob assignments in the two
children can be combined because they concern disjoint sets of
Bob nodes. At a Bob node, choose its bit and a successful
assignment below that child; decisions in the unused child can
be filled arbitrarily. At a leaf the requirement is precisely
\eqref{eq:leaf-condition}. Empty row groups satisfy every leaf
condition and cause no exception.

Thus every protocol domain is a depth-$s$ intersection/union tree
of leaf conditions. We next show that rank limits their number.

\section{Rank-sensitive counting}\label{sec:counting}

At a leaf, asking for the same answer on many rows can impose
far fewer distinct requirements than the number of row labels
suggests. In the lookup matrix \eqref{eq:lookup}, requiring answer zero on rows
$01$ and $10$ already excludes columns two and one. Adding row
$11$ excludes nothing further. The only accepted column is zero.

In general, for a matrix of positive
rank, an inclusion-minimal list of same-answer requirements has at most
as many members as the real rank.

\Needspace{10\baselineskip}
\subsection{Rank and input values}

We first collect the elementary facts we use: rank bounds the number
of distinct row and column values and hence the depth, depth bounds
rank, nonsingular minors give hard squares, and a hard matrix has many
rows and columns.

The lower bound by matrix rank is due to
Mehlhorn and Schmidt~\cite{MehlhornSchmidt82}. We include the elementary
proofs and the one-bit refinement needed for our fixed-leaf-output convention.

\begin{lemma}\label{lem:rank-facts}
Let $M$ be a finite total Boolean matrix of rank $r$.
\begin{enumerate}
\item There are at most $2^r$ distinct row values and at most
$2^r$ distinct column values. In particular, $\D(M)\le r+1$.
\item If $\D(M)\le s$ and $s\ge1$, then $r\le2^{s-1}$.
\item For each integer $1\le q\le r$, $M$ contains a nonsingular
$q$-by-$q$ submatrix on original indices. A nonsingular
$2^k$-by-$2^k$ Boolean matrix has depth exactly $k+1$ for $k\ge1$.
\item If $\D(M)\ge d\ge2$, each original input set has more than
$2^{d-2}$ elements.
\end{enumerate}
\end{lemma}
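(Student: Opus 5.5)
I will prove the four items in order, since each uses only linear algebra over $\R$ and the protocol-tree definitions of \cref{sec:preliminaries}.

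\emph{Item 1.} Choose $r$ linearly independent columns of $M$; every column is a real combination of them. A row value is therefore determined by its entries in these $r$ basis columns: two rows that agree there agree everywhere. Those entries form a Boolean vector of length $r$, so there are at most $2^r$ distinct row values, and symmetrically for columns. For the depth bound, Alice uses $r$ bits to name the basis-coordinate pattern of her row, which fixes her row value. Bob then knows $M(x,y)$ and sends it as one final bit, so $\D(M)\le r+1$. If $r=0$ the matrix is constant and the bound is trivial.

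\emph{Item 2.} Take a correct protocol of depth $s$. Its leaves partition $X\times Y$ into at most $2^s$ monochromatic rectangles, and $M$ is the sum of the indicator matrices of its $1$-leaves, each of rank one. This gives only $r\le2^s$, so the one-bit refinement must come from the last message. I would argue by induction on $s$ that $r\le2^{s-1}$ for $s\ge1$. In the base case $s=1$, the single bit is sent by one player, so $M$ is either all ones on a set of rows and zero elsewhere, or the column analogue; in both cases $r\le1$. For the inductive step, the root message splits $M$ into two submatrices (a row split or a column split) of depth at most $s-1$, and rank is subadditive over such a split. Each part has rank at most $2^{s-2}$ when $s-1\ge1$, so $r\le2\cdot2^{s-2}=2^{s-1}$. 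A constant part has rank at most $1$; since $1\le2^{s-2}$ for $s\ge2$, constant parts cause no exception.

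\emph{Item 3.} Take $r$ independent rows. The $r\times|Y|$ submatrix they form has rank $r$, so it has $r$ independent columns, giving a nonsingular $r\times r$ minor on original labels. Any $q\times q$ principal-like sub-minor is not automatically nonsingular, so instead I would pick $q$ of these rows, which remain independent, and then $q$ independent columns within them. For a nonsingular $2^k\times2^k$ matrix, item 2 gives that depth $s\le k$ would force $2^k\le2^{s-1}\le2^{k-1}$, a contradiction. The upper bound $k+1$ is the trivial protocol: Alice sends $k$ bits naming her row and Bob sends the answer.

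\emph{Item 4.} If Alice's set has at most $2^{d-2}$ elements, Alice names her row with $\lceil\log|X|\rceil\le d-2$ bits and Bob answers, giving depth at most $d-1$. This contradicts $\D(M)\ge d$, and the same argument applies to columns. The only delicate step is the base case and constant-part bookkeeping in item 2, where the fixed-leaf-output convention matters.
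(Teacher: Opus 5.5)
Your proof is correct and follows the paper's argument item by item. The only organizational difference is in item 2: you induct on $s$, using rank subadditivity over the root split with base case $s=1$, whereas the paper pads the tree and directly groups the two leaves below each depth-$(s-1)$ node into a rank-$\le1$ block, which is exactly the decomposition your induction unrolls to. One small tidy-up: in item 3, apply item 2 with the parameter $k\ge1$ itself, since depth at most $k$ already forces $2^k\le2^{k-1}$; this also covers a hypothetical depth-$0$ protocol, to which item 2 does not directly apply.
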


\begin{proof}
Choose $r$ basis rows. The entries of a column on these rows
determine all its other entries by linearity, so these $r$
Boolean coordinates distinguish all column values.
There are at most $2^r$ possibilities. Transpose to obtain
the row assertion. Naming a row value and then transmitting
the answer gives depth at most $r+1$.

For the second assertion, pad a depth-at-most-$s$ protocol to
a complete tree of depth $s$. Every transcript determines a
rectangle: Alice's decisions constrain only her input and
Bob's decisions constrain only his. At a node of depth $s-1$,
only one bit remains. Inside its rectangle, the one-output
inputs form either the empty set, the whole rectangle,
or one child rectangle. Their indicator matrix has rank
at most one. Summing these matrices over the $2^{s-1}$
nodes of that depth gives $M$, proving the rank bound.

The characterization of rank by nonsingular minors proves
the existence of the $q$-square. For $q=2^k$, the second
assertion rules out depth $k$; naming a row and transmitting
the answer gives the matching upper bound $k+1$.

Finally, if $m=\min\{\card X,\card Y\}$, the input-announcement
protocol gives
\begin{equation}\label{eq:dimension-bound}
 \D(M)\le\ceil{\log m}+1.
\end{equation}
If $m\le2^{d-2}$, this is at most $d-1$, contradicting the
hypothesis.
\end{proof}

The rank bound here includes the final answer convention.
The usual leaf-by-leaf decomposition would give the weaker
inequality $r\le2^s$. Grouping the last two leaves is what
provides the extra factor two used in our integer bounds.

\subsection{Short lists of row requirements}

We adapt a known counting method for unions of neighbourhoods in graph
algorithms~\cite[Proposition~3.6]{BuiXuanTelleVatshelle10}.
Oum, S{\ae}ther, and Vatshelle~\cite[Lemma~4.1 and Theorem~4.2]{OumSaetherVatshelle14}
count these unions by adding a short list of rows and recording the sum
on basis columns. Their rational rank equals real rank for Boolean
matrices. Below we give the argument for both possible leaf outputs.
The later passage from leaf conditions to domains of complete protocols
is the application needed for condensation.

Return to a leaf condition
\[
 M(x,y)=b\quad\text{for every }x\in U,
\]
where the required answer $b$ is fixed.
Remove redundant row requirements until the remaining list is
inclusion-minimal. Every row left on the list then has a witness:
some column fails that row's requirement but satisfies all
the other remaining requirements. \Cref{fig:leaf-witnesses} shows this
in the lookup matrix.

\begin{figure}[htbp]
\centering
\begin{tikzpicture}[x=1cm,y=1cm,>=Latex,
 every node/.style={font=\small}]
\node[anchor=north] (requirements) at (0,1.2) {%
 \renewcommand{\arraystretch}{1.4}%
 \begin{tabular}{c|ccc}
 &\textcolor{gray}{$0$}&\textcolor{boborange}{$1$}&\textcolor{boborange}{$2$}\\\hline
 \textcolor{aliceblue}{$01$}&\textcolor{gray}{0}&$0$&$\boxed{1}$\\
 \textcolor{aliceblue}{$10$}&\textcolor{gray}{0}&$\boxed{1}$&$0$\\
 \textcolor{gray}{$11$}&\textcolor{gray}{0}&\textcolor{gray}{1}&\textcolor{gray}{1}
 \end{tabular}};
\node[anchor=north] (witnesses) at (7.6,1.2) {%
 \renewcommand{\arraystretch}{1.4}%
 \begin{tabular}{c|cc}
 &\textcolor{boborange}{$2$}&\textcolor{boborange}{$1$}\\\hline
 \textcolor{aliceblue}{$01$}&$\boxed{1}$&$0$\\
 \textcolor{aliceblue}{$10$}&$0$&$\boxed{1}$
 \end{tabular}};
\draw[->,thick] (2.2,0) -- (5.8,0);
\node[align=center] at (4,0.8)
 {Keep witness columns $2,1$\\in this order.};
\node[anchor=north,align=center] at ([yshift=-2mm]requirements.south)
 {Zero is required on every row.\\Row $11$ adds no requirement.};
\node[anchor=north,align=center] at ([xshift=7.6cm,yshift=-2mm]requirements.south)
 {Each boxed $1$ violates\\just its own row's requirement.};
\end{tikzpicture}
\caption{Minimal leaf requirements in the lookup matrix.
Column two witnesses the necessity of row $01$; column one
witnesses the necessity of row $10$. Dropping the redundant row
and ordering the witness columns by their rows gives the identity
matrix on the right. The general proof below uses this same
witness pattern to bound the number of necessary requirements
by the rank.}
\label{fig:leaf-witnesses}
\end{figure}
\FloatBarrier

There is a shorter way to describe the requirements than naming all
the retained rows: add their entries. In the example, the two necessary
rows give
\[
 \underbrace{(0,0,1)}_{\text{row }01}
 +\underbrace{(0,1,0)}_{\text{row }10}=(0,1,1).
\]
A zero entry in this sum means that both rows answer zero, so the
accepted domain is exactly column zero. An entry equal to two would
mean that both rows answer one; here no column does. More generally,
for a sum of $q$ Boolean rows, we test whether the entry is zero or $q$,
according to the required answer. We need only the sum and, for answer
one, the number $q$ of rows. The proof now counts these descriptions
using basis columns.

\Needspace{8\baselineskip}
\begin{lemma}[Leaf conditions]\label{lem:leaf-count}
Suppose $\rank(M)\le r$ for an integer $r\ge1$.
Every same-answer leaf condition is equivalent, on the actual
column set, to at most $r$ distinct row-value requirements.
Including both possible outputs and all retained row sets,
the number of different leaf conditions is at most
\[
 2(r+1)^{r+1}.
\]
\end{lemma}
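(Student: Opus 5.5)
The plan has two parts. First, an inclusion-minimal list of row requirements has at most $r$ members. Second, each condition is encoded by a row-sum vector, and we count those encodings through basis columns.

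\emph{Bounding the list length.} Fix the required answer $b$ and a retained row set $U$. Remove requirements one at a time while the accepted column set on the actual column set $Y$ is unchanged. What remains is an inclusion-minimal list $x_1,\dots,x_q$ of rows. Since this list imposes the same condition, we may merge rows with equal values, so the list consists of distinct row values.

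By minimality, each $x_i$ has a witness column $y_i$. This column satisfies $M(x_j,y_i)=b$ for all $j\ne i$ but has $M(x_i,y_i)\ne b$. Consider the $q\times q$ submatrix $W$ with entries $M(x_j,y_i)$.
\begin{itemize}
\item For $b=0$, $W$ is the identity (up to ordering), so it is nonsingular.
\item For $b=1$, $W$ is $J-I$, the all-ones matrix minus the identity. It is nonsingular for $q\ge2$ because its eigenvalues are $q-1$ and $-1$. When $q=1$ the bound $q\le r$ is trivial since $r\ge1$.
\end{itemize}
In both cases $q\le\rank(M)\le r$. The list is nonempty, because an empty list only arises from an empty $U$. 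We could treat that case separately, but the empty condition (all of $Y$) also arises from any constant-$b$ row, or we can simply count it as one extra condition. I will check that the final bound absorbs it.

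\emph{Encoding.} Let $\sigma$ be the real row vector $\sum_{i\le q}M(x_i,\cdot)$. A column $y$ is accepted exactly when $\sigma(y)=0$ (for $b=0$) or $\sigma(y)=q$ (for $b=1$). So the condition is determined by the triple $(b,q,\sigma)$.

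To count the possible $\sigma$, fix once and for all a set $B$ of $r'=\rank(M)\le r$ columns whose column vectors form a basis of the column space. Every column of $M$ is a fixed linear combination of the columns in $B$, so $\sigma$ is determined by its restriction $\sigma|_B$. Each entry of $\sigma|_B$ is an integer in $\{0,\dots,q\}\subseteq\{0,\dots,r\}$, giving at most $(r+1)^{r}$ choices. The count $q$ lies in $\{1,\dots,r\}$, and is needed only when $b=1$; for uniformity, take at most $r+1$ choices for it. With two choices of $b$, the total is at most $2(r+1)\cdot(r+1)^r=2(r+1)^{r+1}$. This already leaves room for the empty condition if we let $q$ range over $\{0,\dots,r\}$, since $q=0$ with $\sigma=0$ encodes the trivial condition.

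\emph{Expected obstacle.} The delicate step is the $b=1$ witness matrix and the degenerate cases: $q=1$, duplicated row values, and empty $U$. These are what make the $(r+1)$ bases and the leading factor $2$ come out exactly. I expect the $J-I$ nonsingularity argument to settle the main case, and the rest is bookkeeping.
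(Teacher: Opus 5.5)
Your proposal is correct and follows the paper's proof essentially step for step. The shared argument is an inclusion-minimal requirement list whose witness columns form $I_q$ or $J_q-I_q$, giving $q\le r$, followed by the row-sum vector, which is determined by its entries on at most $r$ basis columns, and a count of the data $(b,q,\sigma)$. The paper records $q$ only for output one, which gives the slightly sharper $(r+2)(r+1)^r$. Your remark that an empty minimal list arises only from an empty $U$ is inaccurate: it arises whenever every column is accepted. This is harmless, since you count $q=0$ anyway.
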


\begin{proof}
Let $S$ be an inclusion-minimal set of row values defining the
condition, and let $q=\card S$. For each $x\in S$, choose a
column $y_x$ that satisfies the other requirements and fails
the one at $x$.

If $b=0$, the entries on these rows and witness columns form
the identity matrix $I_q$. Hence $q\le r$. If $b=1$ and
$q\ge2$, they form $J_q-I_q$, where $J_q$ is the all-ones
matrix. The matrix $J_q-I_q$ has eigenvalues $q-1$ and
$-1$, so again the rank is $q$ and $q\le r$. The cases
$q=0$ and $q=1$ are covered by $r\ge1$. For $q\ge2$ the
witness columns are distinct, because a column failing one
selected requirement cannot witness failure of another
while satisfying the first.

Fix a set of basis columns of $M$; there are at most $r$ of them.
Let $z_S$ be the sum of the rows in $S$. Every other column is a
linear combination of the basis columns, so the entries of $z_S$
on the basis columns determine $z_S$ on every column. Each recorded
entry is an integer between zero and $q\le r$. Thus there are at most
$(r+1)^r$ possible sum vectors.

For output zero, the accepted columns are those where $z_S$ is zero,
giving at most $(r+1)^r$ different conditions. For output one, they
are those where $z_S$ equals $q$. Recording $q\in\{0,\ldots,r\}$
as well gives at most $(r+1)^{r+1}$ conditions. This includes $q=0$:
the empty sum is zero, and an empty list of requirements accepts
every column. Adding the two counts gives
\[
 (r+1)^r+(r+1)^{r+1}
 =(r+2)(r+1)^r\le2(r+1)^{r+1}.
\]
Different sums or row lists may define the same condition; we need
only an upper bound.
\end{proof}

Alice may send different messages on equal row values. The count still
applies separately at each leaf: it counts requirements, not arbitrary
row subsets.

\subsection{The domain count}

We can now describe an entire domain with few bits. At a leaf,
we record at most $r$ sum entries, a row count, and an output bit.
Each sum entry and the row count lie between zero and $r$,
so this takes $O(r\log(r+1))$ bits. A depth-$s$ tree has $2^s$
leaves and fewer internal nodes. Its leaf descriptions and
intersection/union choices therefore take
$O(r\log(r+1)2^s)$ bits in total.

Let $\mathcal C_s$ be the family of all sets $G_P(U)$ obtained
from depth-$s$ partial protocols, allowing both the initial
row set $U$ and Alice's maps to vary. These are subsets of the
original column set; the matrix is fixed while the retained rows
and Alice maps vary.

For the numerical bound, write
\begin{equation}\label{eq:counting-coefficient}
 B(r)=(r+1)\ceil{\log(r+1)}+2,\qquad r\ge1.
\end{equation}
This coefficient pays for the sum entries and row count at each leaf,
its output, and one intersection/union choice.

\begin{lemma}[Rank-sensitive domain count]\label{lem:domain-count}
If $\rank(M)\le r$ with $r\ge1$, then for every integer $s\ge0$,
\begin{equation}\label{eq:domain-count}
 \card{\mathcal C_s}
 \le 2^{B(r)2^s-1}.
\end{equation}
\end{lemma}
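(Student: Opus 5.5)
We prove the bound by an injective encoding: every member of $\mathcal C_s$ receives a binary description of length at most $B(r)2^s-1$, and the description determines the set. The structural input is the description at the end of \cref{sec:domains}. For a partial protocol $P$ on $U$, the domain $G_P(U)$ is obtained from the complete depth-$s$ tree by three rules: each leaf carries the condition \eqref{eq:leaf-condition} for its associated row group; each Alice node takes the intersection of its children's sets; each Bob node takes their union. Since the tree is complete, its shape is fixed. So $G_P(U)$ is determined by two pieces of data: one bit per internal node, recording intersection or union, and one leaf condition per leaf.

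For the leaves we invoke \cref{lem:leaf-count}. Fix once and for all a set of at most $r$ basis columns of $M$. Each leaf condition is then determined by three items:
\begin{itemize}
\item the output bit $b$;
\item the row count $q\in\{0,\ldots,r\}$;
\item the entries of the sum vector $z_S$ on the basis columns, each an integer in $\{0,\ldots,r\}$.
\end{itemize}
Encode each of these at most $r+1$ integers in binary with $\ceil{\log(r+1)}$ bits, and pad with zeros if fewer than $r$ basis columns exist. A leaf then costs $(r+1)\ceil{\log(r+1)}+1$ bits. The decoder recovers the accepted column set as follows: it reconstructs $z_S$ on every column by the fixed linear combinations, then takes the columns where $z_S=0$ (if $b=0$) or $z_S=q$ (if $b=1$). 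The empty row group is handled by $q=0$, as in the lemma. Empty branches therefore cause no exception.

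Now assign costs. Each of the $2^s$ leaves costs $B(r)-1$ bits, and each of the $2^s-1$ internal nodes costs $1$ bit. Charging one internal-node bit to each leaf except one gives
\[
 2^s\bigl(B(r)-1\bigr)+(2^s-1)\;\le\;B(r)2^s-1.
\]
Since the decoding map from descriptions to column sets is well defined, it covers all of $\mathcal C_s$. Hence $\card{\mathcal C_s}\le 2^{B(r)2^s-1}$. For $s=0$ the tree is a single leaf, and the bound $2^{B(r)-1}$ is met directly.

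The main point needing care is the claim that the intersection/union tree of leaf conditions determines $G_P(U)$ exactly, and not just a subset of it. This is the induction sketched before \cref{fig:compatibility}. At an Alice node, assignments on disjoint sets of Bob nodes combine freely. At a Bob node, the unused subtree can be filled arbitrarily. The leaf condition must also be read on the actual column set, with redundant row requirements removed as in \cref{lem:leaf-count}. I would restate this induction briefly before running the count. Everything after it is bookkeeping.
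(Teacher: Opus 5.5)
Your proof is correct and takes essentially the same route as the paper, which bounds the number of intersection/union formulas on the complete depth-$s$ tree by $2^{2^s-1}\bigl(2(r+1)^{r+1}\bigr)^{2^s}\le 2^{B(r)2^s-1}$ using \cref{lem:leaf-count} and $r+1\le2^{\ceil{\log(r+1)}}$. Your bit-length encoding is the same count written additively: each leaf costs $B(r)-1$ bits because $2(r+1)^{r+1}\le2^{B(r)-1}$, and, like the paper, you rely on the recursive intersection/union description from \cref{sec:domains} to conclude that the formula determines $G_P(U)$ exactly.
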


\begin{proof}
By the recursive description in \cref{sec:domains},
a protocol domain is an intersection/union formula on
a complete depth-$s$ tree. There are $2^s-1$ internal nodes,
each choosing one of those two operations, and $2^s$ leaves.
\Cref{lem:leaf-count} bounds the number of conditions at each
leaf. Since $r+1\le2^{\ceil{\log(r+1)}}$, the number of possible
formulas is at most
\[
 2^{2^s-1}
 \left(2(r+1)^{r+1}\right)^{2^s}
 \le 2^{B(r)2^s-1}.
\]
Varying the initial rows and Alice maps changes the leaf
conditions but does not create a condition outside the
counted family. We have also counted formulas whose leaf
conditions may not come from a legal partial protocol.
This overcounting is harmless.
\end{proof}

The count is independent of the original input-set sizes. It counts
domains, not protocol descriptions, and does not yet enumerate them;
\cref{sec:construction} supplies that algorithm.

\section{Lossless condensation}\label{sec:condensation}

We turn the domain count into a cover. If every small column restriction
is easy, a few cheap domains cover all columns; Bob names one and runs
its protocol. The contrapositive yields a small hard column set.
We then transpose to restrict the rows.

\subsection{Hitting sets and protocol covers}

Before introducing the notation, consider a remaining column set $W$.
Suppose every short-protocol domain covers at most half of $W$.
A list of $t$ independently chosen columns lies entirely inside any
one such domain with probability at most $2^{-t}$. If there are fewer
than $2^t$ domains, some list lies inside none of them. But if every
restriction to at most $t$ columns has a short protocol, that is
impossible: its protocol domain contains the list.
Therefore one domain must cover more than half of $W$.

The next lemma turns this argument into a small test set.
The finite covering principle is adapted from
Hrube\v{s}~\cite[Lemma~3.1]{Hrubes25}; here the covering sets are
domains of whole protocols.

We need a set of columns meeting every protocol-domain complement
that has probability mass at least $1/2$. Such a set is a
\emph{hitting set} for these complements. Meeting a complement means
finding a column on which that partial protocol cannot be completed.

\Needspace{13\baselineskip}
\begin{lemma}[A finite hitting set]\label{lem:finite-hit}
Let $\rank(M)\le r$ with $r\ge1$, and let $s\ge0$.
Under any distribution on the columns, there is a set of size at most
\begin{equation}\label{eq:finite-sample}
 a(r,s)=B(r)2^s,
 \qquad B(r)=(r+1)\ceil{\log(r+1)}+2,
\end{equation}
meeting every complement $Y\setminus G$ with $G\in\mathcal C_s$
and probability mass at least $1/2$.
The set can be chosen nonempty and uses original column labels.
\end{lemma}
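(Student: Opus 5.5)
We will prove this with the probabilistic method sketched just before the statement, using \cref{lem:domain-count} to bound the number of sets that must be hit. Fix the distribution $\mu$ on $Y$ and set $t=a(r,s)=B(r)2^s$. Let $\mathcal H$ be the family of complements $Y\setminus G$ with $G\in\mathcal C_s$ and $\mu(Y\setminus G)\ge1/2$. Distinct domains give distinct complements, so
\[
 \card{\mathcal H}\le\card{\mathcal C_s}\le2^{B(r)2^s-1}=2^{t-1}<2^t .
\]

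Draw $t$ columns $y_1,\dots,y_t$ independently from $\mu$ and put $S=\{y_1,\dots,y_t\}$. Fix $H\in\mathcal H$. Each draw lands outside $H$ with probability at most $1/2$, so $S$ misses $H$ with probability at most $2^{-t}$. A union bound over $\mathcal H$ bounds the probability that $S$ misses some member of $\mathcal H$ by
\[
 \card{\mathcal H}\,2^{-t}\le2^{t-1}2^{-t}=\tfrac12<1 .
\]
Hence some outcome of the draws gives a set $S$ of original column labels that meets every member of $\mathcal H$. Repeated draws only shrink the set, so $\card S\le t$. Since $t\ge B(1)=3\ge1$, at least one draw is made and $S$ is nonempty.

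Two degenerate points need checking.
\begin{itemize}
\item The count in \cref{lem:domain-count} must apply to the family we hit. It does, because it bounds all of $\mathcal C_s$, with every row set $U$ and every Alice map, and $\mathcal H$ is a subfamily of the corresponding complements.
\item $\mathcal H$ may be empty. The argument still gives a nonempty $S$, since $t\ge1$, and an empty family is met vacuously.
\end{itemize}

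There is no real obstacle. The one point needing care is that the exponent $B(r)2^s-1$ in \cref{eq:domain-count} leaves exactly the factor $1/2$ of slack in the union bound, which is what permits sample size exactly $a(r,s)$ rather than $a(r,s)+1$.
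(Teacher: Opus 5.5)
Your proposal is correct and follows the paper's proof essentially verbatim: you sample $a(r,s)$ columns independently, take a union bound over at most $2^{a(r,s)-1}$ complements using \cref{lem:domain-count} to get failure probability at most $1/2$, and keep the nonempty support of a successful sample. One trivial slip: $B(1)=2\cdot1+2=4$, not $3$, but this does not affect your nonemptiness argument.
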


\begin{proof}
Draw $a(r,s)$ columns independently. A fixed set of mass
at least one half is missed with probability at most
$2^{-a(r,s)}$. By \cref{lem:domain-count}, a union bound over all
complements gives failure probability at most $1/2$.
A successful sample therefore exists. Its distinct support is
nonempty and has no larger size.
\end{proof}

The distribution may be chosen afresh on any remaining set of columns.
We will use the uniform distribution on the part not yet covered;
\cref{fig:cover-progress} shows one round of this covering procedure.
This freedom is why a hitting-set statement for every distribution,
rather than for one fixed original distribution, is useful.

\begin{figure}[tbp]
\centering
\begin{tikzpicture}[x=.67cm,y=1cm,>=Latex,font=\small]
  \node[anchor=east,align=right] at (-.65,1.25) {remaining\\columns $W$};
  \node[anchor=east,align=right] at (-.65,0) {one protocol\\domain $G$};
  \node[anchor=east,align=right] at (-.65,-1.25) {next remainder\\$W\setminus G$};
  \foreach \i in {0,...,11} {
    \node[circle,draw=black!50,fill=black!6,minimum size=4.5mm,inner sep=0pt] at (\i,1.25) {};
    \node[circle,draw=black!35,fill=black!6,minimum size=4.5mm,inner sep=0pt] at (\i,0) {};
  }
  \foreach \i in {0,...,6} {
    \node[circle,draw=answergreen,fill=answergreen!25,minimum size=4.5mm,inner sep=0pt] at (\i,0) {};
  }
  \foreach \i in {1,3,5} {
    \node[circle,draw=boborange,line width=1.1pt,minimum size=6.5mm,inner sep=0pt] at (\i,1.25) {};
    \node[circle,draw=boborange,line width=1.1pt,minimum size=6.5mm,inner sep=0pt] at (\i,0) {};
  }
  \foreach \i in {7,...,11} {
    \node[circle,draw=black!60,fill=black!12,minimum size=4.5mm,inner sep=0pt] at (\i,-1.25) {};
    \draw[->,black!45] (\i,-.3) -- (\i,-.9);
  }
  \node[anchor=west,align=left] at (12,1.25) {circled points:\\test set $C$};
  \node[anchor=west,align=left] at (12,0) {$C\subseteq G$\\by local easiness};
  \node[anchor=west,align=left] at (12,-1.25) {fewer than\\$|W|/2$ remain};
\end{tikzpicture}
\caption{Why a small test set forces progress. Choose $C$ to meet every
protocol-domain complement containing at least half of $W$. A short
protocol on $C$ extends to a protocol domain $G$ containing $C$.
Its complement misses $C$, so it must contain less than half of $W$.
Repeat with a \emph{new} test set on the remainder. Schematic illustration
of one round.}
\label{fig:cover-progress}
\end{figure}

We use $a(r,s)=B(r)2^s$ throughout. Since $\mathcal C_s$ includes all
retained row sets, the same bound remains available after a row restriction.

\begin{lemma}[A cover by protocols]\label{lem:cover}
Suppose $\rank(M)\le r$ for an integer $r\ge1$.
If every nonempty column set $C$ with
$\card C\le a(r,s)$ satisfies $\D(M[X,C])\le s$, then
\begin{equation}\label{eq:cover-cost}
 \D(M)\le s+\ceil{\log r}.
\end{equation}
\end{lemma}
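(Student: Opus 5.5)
First remove duplicate column values; this preserves rank and complexity. It also preserves the hypothesis, since column subsets of the reduced matrix are column subsets of $M$. By \cref{lem:rank-facts}, at most $2^r$ distinct column values remain. The plan is to build greedily a sequence of domains $G_1,\dots,G_m\in\mathcal C_s$, each taken with $U=X$, that covers all of $Y$ with $m\le r$. Bob then announces, with $\ceil{\log m}\le\ceil{\log r}$ bits, the least index $i$ with $y\in G_i$. This choice depends only on $y$. Both players then run the depth-$s$ protocol on $M[X,G_i]$ supplied by \cref{lem:compatibility}. The result is a correct protocol of depth at most $s+\ceil{\log r}$.

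\textbf{Greedy step.} Let $W\subseteq Y$ be the nonempty set of columns not yet covered, and take the uniform distribution on $W$. \Cref{lem:finite-hit} gives a nonempty set $C$ of original columns with $\card C\le a(r,s)$ that meets every complement $Y\setminus G$, $G\in\mathcal C_s$, of mass at least $1/2$. By hypothesis $\D(M[X,C])\le s$. \Cref{lem:compatibility} then supplies a depth-$s$ partial protocol $P$ with $C\subseteq G:=G_P(X)$. The complement of $G$ misses $C$, so it has mass less than $1/2$; that is, $\card{W\cap G}>\card W/2$. Set $G_{i}=G$ and replace $W$ by $W\setminus G$. This strictly more than halves $\card W$. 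One point needs care: the hitting set is drawn from the support $W$, which is fine since \cref{lem:finite-hit} uses original labels. Also, $\mathcal C_s$ in \cref{lem:domain-count} is counted for the full column set, so the same family is used at every round.

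\textbf{Counting rounds.} This is the delicate step. Starting from $\card W\le 2^r$, each round leaves fewer than $\card W/2$ columns. After $j$ rounds fewer than $2^{r-j}$ remain, so after $r$ rounds fewer than one remain, which means none. Hence $m\le r$, giving an index cost of $\ceil{\log r}$. When $r=1$ this cost is zero, consistent with a single round covering everything: $\card W\le2$ and strictly more than half is covered.

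If one wants to avoid the fussiness of strict halving at the boundary, an alternative is to note that after $r-1$ rounds fewer than $2$ columns remain, that is, at most one. Then one more domain, namely that of the single remaining column (itself of size $\le a(r,s)$), finishes the cover, again with $m\le r$. I expect the main obstacle to be this exact accounting to get $\ceil{\log r}$ rather than $\ceil{\log(r+1)}$. The strict inequality from ``mass at least $1/2$'' in \cref{lem:finite-hit} is what supplies it.
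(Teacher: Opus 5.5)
Your proposal is correct and follows the paper's proof essentially step for step. You reduce to at most $2^r$ representative columns; in each round you take a hitting set under the uniform distribution on the uncovered columns and use \cref{lem:compatibility} to obtain a domain covering strictly more than half of them; you conclude after at most $r$ strict halvings; and Bob announces the first covering index in $\lceil\log r\rceil$ bits.
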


\begin{proof}
Keep one representative of each column value. There are at
most $2^r$ representatives by \cref{lem:rank-facts}, and
the local hypothesis remains true.

Let $W$ be any nonempty set of representatives not yet
covered. Under its uniform distribution, choose a hitting set
$C\subseteq W$ of size at most $a(r,s)$. The local
hypothesis and \cref{lem:compatibility} give a partial
protocol $P$ with $C\subseteq G_P(X)$.
If $W\setminus G_P(X)$ had at least half the elements of $W$,
the hitting set would meet that complement, contradicting
its containment in $G_P(X)$. Thus this domain covers
strictly more than half of $W$.

Repeat on the uncovered representatives. At most $r$ choices
suffice: after $r$ strict halvings the number remaining is
strictly less than $2^r/2^r=1$, hence zero. For each chosen
domain, \cref{lem:compatibility} supplies a depth-$s$
protocol valid on all original rows.

Bob chooses the first domain containing his column value
and announces its index in $\ceil{\log r}$ bits. The players
then run its fixed protocol. The choice depends only on
Bob's input, as in \cref{fig:domain-cover}. Restoring duplicate
column labels by the representative map costs no communication.
\end{proof}

The covering domains need not be disjoint or monochromatic;
each already carries a correct protocol on the whole row set.

\subsection{Both input sets}

The contrapositive of \cref{lem:cover} finds a small hard column
set while leaving all rows available. To find a small hard row
set as well, we apply it again after transposing. The first
restriction must retain enough depth to pay for the second
application, as \cref{fig:two-sided} shows. In \cref{lem:two-sided},
$H$ bounds the number of bits needed to name a covering domain; this
cost is paid once in each orientation.

\begin{figure}[htbp]
\centering
\begin{tikzpicture}[x=1cm,y=1cm,>=Latex,font=\small]
  \foreach \start in {0,5,10} {
    \fill[black!3] (\start,0) rectangle (\start+2.4,2.4);
  }
  \foreach \i in {0,2,5} {
    \fill[boborange!22] (5+.4*\i,0) rectangle (5+.4*\i+.4,2.4);
    \foreach \j in {1,3,4} {
      \fill[answergreen!28] (10+.4*\i,.4*\j) rectangle (10+.4*\i+.4,.4*\j+.4);
    }
  }
  \foreach \start in {0,5,10} {
    \foreach \i in {1,...,5} {
      \draw[black!25,thin] ({\start+.4*\i},0) -- ({\start+.4*\i},2.4);
      \draw[black!25,thin] (\start,{.4*\i}) -- ({\start+2.4},{.4*\i});
    }
    \draw[black!45] (\start,0) rectangle ({\start+2.4},2.4);
  }
  \node at (1.2,2.85) {$M[X,Y]$};
  \node at (6.2,2.85) {$M[X,V]$};
  \node at (11.2,2.85) {$M[U,V]$};
  \draw[->,thick] (2.75,1.2) -- (4.65,1.2)
    node[midway,above,align=center] {restrict\\columns};
  \draw[->,thick] (7.75,1.2) -- (9.65,1.2)
    node[midway,above,align=center] {restrict\\rows};
  \node at (1.2,-.45) {depth $>s+2H$};
  \node at (6.2,-.45) {depth $>s+H$};
  \node at (11.2,-.45) {depth $>s$};
  \node[align=center] at (6.2,-1.1) {$|V|\le a(r,s+H)$\\all rows remain};
  \node[align=center] at (11.2,-1.1) {$|U|\le a(r,s)$\\keep the same $V$};
\end{tikzpicture}
\caption{Two successive restrictions. First the column set $V$
(orange) is chosen with every row kept; then the row set $U$ is chosen
inside $M[X,V]$, and $M[U,V]$ is shown in green. Each use of the cover
lemma loses at most $\lceil\log r\rceil\le H$ bits of depth, so the
first restriction must leave an extra $H$ bits for the second. Grid
sizes are illustrative.}
\label{fig:two-sided}
\end{figure}

\Needspace{11\baselineskip}
\begin{lemma}[A two-sided restriction]\label{lem:two-sided}
Let $\rank(M)\le r$ with $r\ge1$. Let $H$ be any integer
with $H\ge\ceil{\log r}$, and let $s\ge0$.
If $\D(M)>s+2H$, then $M$ has an original submatrix $M[U,V]$
such that
\begin{equation}\label{eq:two-sided}
 \D(M[U,V])>s,\qquad
 \card U\le a(r,s),\qquad \card V\le a(r,s+H).
\end{equation}
In particular, this holds with $H=\ceil{\log(r+1)}$.
\end{lemma}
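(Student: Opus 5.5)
We apply the contrapositive of \cref{lem:cover} twice, first to the columns of $M$ and then to the rows of the resulting column restriction.

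\textbf{Step 1 (columns).} Since $\ceil{\log r}\le H$, we have $\D(M)>s+2H\ge (s+H)+\ceil{\log r}$. Apply \cref{lem:cover} with depth parameter $s+H$. If every nonempty column set $C$ with $\card C\le a(r,s+H)$ satisfied $\D(M[X,C])\le s+H$, then \cref{lem:cover} would give $\D(M)\le s+H+\ceil{\log r}$, a contradiction. Hence there is a nonempty original column set $V$ with $\card V\le a(r,s+H)$ and $\D(M[X,V])>s+H$.

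\textbf{Step 2 (rows).} Transpose $M[X,V]$; its rows become columns. Transposition swaps the two players' roles in any protocol, so it preserves deterministic complexity. Restriction to a submatrix cannot increase rank, so the transpose still has rank at most $r$, with $r\ge1$. If $\rank(M[X,V])=0$, the submatrix is constant and has depth zero, contradicting $\D(M[X,V])>s+H\ge0$; so the rank hypothesis of \cref{lem:cover} is satisfiable with the same integer $r$. \Cref{lem:cover} is stated for an arbitrary finite total Boolean matrix, so it applies verbatim to this transpose with depth parameter $s$. We have $\D>s+H\ge s+\ceil{\log r}$, so the contrapositive yields a nonempty row set $U\subseteq X$ with $\card U\le a(r,s)$ and $\D(M[U,V])>s$. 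Both $U$ and $V$ consist of original labels, because the hitting sets of \cref{lem:finite-hit} use original labels and the representative map only selects original representatives. This gives \eqref{eq:two-sided}.

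\textbf{Final clause.} The choice $H=\ceil{\log(r+1)}$ is an integer with $\ceil{\log(r+1)}\ge\ceil{\log r}$, so it is admissible.

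The only delicate point is bookkeeping: the first application must be made at depth $s+H$ rather than $s$, so that the $\ceil{\log r}$ bits lost in the second application are already paid for. One must also check that the rank bound $r$, rather than the rank of the submatrix, is what enters $a(\cdot,\cdot)$. This is legitimate because \cref{lem:cover} only requires an upper bound on the rank.
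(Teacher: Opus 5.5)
Your proof is correct and matches the paper's: apply the contrapositive of \cref{lem:cover} at depth $s+H$ to obtain $V$, then apply it to the transpose of $M[X,V]$ at depth $s$ to obtain $U$, with $H\ge\ceil{\log r}$ paying for each announcement. Your extra checks are harmless but unnecessary. \Cref{lem:cover} needs only the upper bound $\rank\le r$ with $r\ge1$, so a rank-zero submatrix would not matter. Originality of the labels is immediate, since the contrapositive directly yields a subset of the original input set.
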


\begin{proof}
The contrapositive of \cref{lem:cover}, used at depth $s+H$,
gives a column set $V$ with at most $a(r,s+H)$ elements and
\[
 \D(M[X,V])>s+H.
\]
Indeed, otherwise the whole matrix would have depth at most
$s+H+\ceil{\log r}\le s+2H$.

Transpose this restricted matrix. Its rank is still at most
$r$, and its depth is still greater than
$s+\ceil{\log r}$. The same contrapositive at depth $s$
therefore supplies a set $U$ of at most $a(r,s)$ original rows
with $\D(M[U,V])>s$. These are successive restrictions of
the original matrix, not independently selected rows and columns.
\end{proof}

\subsection{A rank-sensitive square}

Write
\[
 c=\D(M),\qquad r=\rank(M),\qquad h=\ceil{\log(r+1)}.
\]
The choice $H=h$ pays for the cover index and also gives $r+1\le2^h$.
The row-sum count has the useful bound
\[
 B(r)=(r+1)h+2\le(r+1)(h+1)\le2^h(h+1).
\]
Consequently, the larger side in \cref{lem:two-sided} has at most
$(h+1)2^{s+2h}$ inputs. In this size bound, one factor $2^h$ comes from
the first restriction's depth budget and the other from the rank-dependent count.
The remaining factor $h+1$ costs only $\ceil{\log(h+1)}$ bits.

\begin{theorem}[Rank-sensitive condensation]\label{thm:rank-core}
Let $c=\D(M)\ge2$, $r=\rank(M)$, and $h=\ceil{\log(r+1)}$.
Then $M$ contains an original
$2^{c-2}$-by-$2^{c-2}$ submatrix $N$ with
\[
 \D(N)\ge c-2h-\ceil{\log(h+1)}-1.
\]
\end{theorem}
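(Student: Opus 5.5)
The plan is to apply \cref{lem:two-sided} with $H=h$ and a depth parameter $s$ chosen so that both sides of the resulting restriction fit inside $2^{c-2}$. I will then pad the restriction with unused original labels up to exactly $2^{c-2}$ rows and columns.

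First I dispose of the trivial case. Set
\[
 s=c-2h-\ceil{\log(h+1)}-2 .
\]
The target bound reads $\D(N)\ge s+1$. If $s<0$, the right-hand side is at most $0$, so any original $2^{c-2}$-by-$2^{c-2}$ submatrix works. Such a submatrix exists because, by part~4 of \cref{lem:rank-facts} with $d=c\ge2$, each original input set has more than $2^{c-2}$ elements. So assume $s\ge0$. Since $r=\rank(M)\ge1$ (as $c\ge2$ and the matrix is nonconstant), \cref{lem:two-sided} applies with $H=h\ge\ceil{\log r}$. Its hypothesis $\D(M)>s+2h$ holds because $c-(s+2h)=\ceil{\log(h+1)}+2>0$. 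We therefore obtain $U,V$ with $\D(M[U,V])>s$, that is $\D(M[U,V])\ge s+1$, together with
\[
 \card U\le a(r,s),\qquad \card V\le a(r,s+h).
\]

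Next I bound the sizes using the estimate $B(r)\le(h+1)2^h$ recorded before the theorem. This gives $a(r,s)\le a(r,s+h)=B(r)2^{s+h}\le(h+1)2^{s+2h}$. Since $h+1\le2^{\ceil{\log(h+1)}}$, both sides are at most
\[
 2^{s+2h+\ceil{\log(h+1)}}=2^{c-2}.
\]
Finally, part~4 of \cref{lem:rank-facts} guarantees that $X$ and $Y$ each have more than $2^{c-2}$ elements. I add distinct unused original rows to $U$ and columns to $V$ until each has exactly $2^{c-2}$ elements, and call the result $N$. By \eqref{eq:restriction}, $M[U,V]$ is a restriction of $N$, so $\D(N)\ge\D(M[U,V])\ge s+1=c-2h-\ceil{\log(h+1)}-1$.

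The only delicate points are bookkeeping rather than conceptual. The first is checking that the choice of $s$ makes the strict inequality $\D>s$ translate into exactly the stated integer bound. The second is confirming that the larger side $a(r,s+h)$, not just $a(r,s)$, fits under $2^{c-2}$; this is where the factor $2^h$ from the first restriction's depth budget and the factor $(h+1)2^h$ from $B(r)$ must be combined exactly as above. The negative-$s$ case also has to be handled separately, since \cref{lem:two-sided} requires $s\ge0$.
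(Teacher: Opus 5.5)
Your proposal is correct and follows the paper's proof essentially step for step. It uses the same choice $s=c-2h-\ceil{\log(h+1)}-2$, the same application of \cref{lem:two-sided} with $H=h$, the same size bound $a(r,s+h)\le(h+1)2^{s+2h}\le2^{c-2}$, padding to the full square via part~4 of \cref{lem:rank-facts}, and the same separate handling of the case where the stated bound is nonpositive.
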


\begin{proof}
Suppose first that the stated lower bound is positive.
Put $\tau=\ceil{\log(h+1)}$ and set $s=c-2h-\tau-2\ge0$.
Then $c>s+2h$, so \cref{lem:two-sided} with $H=h$
gives a restriction of depth at least $s+1=c-2h-\tau-1$.
Its larger side has at most
\[
 a(r,s+h)
 =B(r)2^{s+h}
 \le(h+1)2^{s+2h}
 =(h+1)2^{c-\tau-2}\le2^{c-2}
\]
elements, because $h+1\le2^\tau$.

By \cref{lem:rank-facts}, each original input set has more
than $2^{c-2}$ elements. Enlarge both selected sets to
exactly that size, using distinct original labels.
The depth cannot decrease under enlargement.

If the stated depth lower bound is nonpositive, any original
square of that size suffices; the same dimension argument
guarantees that one exists.
\end{proof}

A matrix with small logarithmic rank therefore has a square
retaining nearly all of its depth. A matrix with large rank
has a different source of hardness: a nonsingular minor.
First we combine these two sources of hardness at input length $c-2$,
proving \cref{cor:full-input-length}.

\begin{proof}[Proof of \cref{cor:full-input-length}]
Write $r=\rank(M)$, $h=\ceil{\log(r+1)}$, and
$\ell=\ceil{\log(c+1)}$.
The rank bounds in \cref{lem:rank-facts} give $r\ge c-1\ge3$ and
$2\le h\le c$. A nonsingular minor of order
$2^{\min\{h-1,c-2\}}$ has depth $\min\{h,c-1\}$.
Enlarge it to a square of side $2^{c-2}$ using original inputs;
\cref{lem:rank-facts} ensures that enough labels are available.
Its depth cannot decrease.

On the other hand, \cref{thm:rank-core} supplies a square of that same
size and depth at least $c-2h-\ell-1$, since
$\ceil{\log(h+1)}\le\ell$.
If $h\le c-1$, the harder of the two squares therefore has depth at least
\[
 \max\{h,\,c-2h-\ell-1\}\ge\frac{c-\ell-1}{3}.
\]
Indeed, either $h$ reaches the right-hand side or the second term does.
If $h=c$, the padded minor already has depth $c-1$, which is stronger.
\end{proof}

We now allow the retained input length to vary so that the square can
be nearly as hard as its size permits. The same rank split gives the
following quantitative form of the first main theorem.

\begin{theorem}[Nearly maximal condensation]\label{thm:near-maximal}
Let $\D(M)=c\ge4$, let $0<\varepsilon<1$, and put
$\ell=\ceil{\log(c+1)}$.
There are an integer $1\le k\le c-1$ and an original $2^k$-by-$2^k$
submatrix $N$ such that
\[
 \D(N)\ge(1-\varepsilon)(k+1)
 \quad\text{and}\quad
 k\ge\max\left\{\frac{\varepsilon c-4}{3},\,
                  \frac{\varepsilon c-\ell-3}{2}\right\}.
\]
\end{theorem}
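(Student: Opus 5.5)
The plan is to split on the rank, exactly as in the proof of \cref{cor:full-input-length}, but with the threshold chosen by $\varepsilon$ rather than by balancing. Write $r=\rank(M)$, $h=\ceil{\log(r+1)}$ and $\tau=\ceil{\log(h+1)}$. By \cref{lem:rank-facts}, $r\ge c-1\ge3$ and $h\le c$, so $2\le h\le c$ and $\tau\le\ell$. For $h\ge1$ we also have $\tau\le h$.

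\emph{Low-rank case:} $2h+\tau\le\varepsilon(c-1)$. Here I apply \cref{thm:rank-core} with $k=c-2$. The resulting original $2^{c-2}$-by-$2^{c-2}$ square has depth at least $c-2h-\tau-1$. The case hypothesis turns this into
\[
 c-2h-\tau-1\ge(1-\varepsilon)(c-1)=(1-\varepsilon)(k+1).
\]
The size constraints hold trivially: $1\le k=c-2\le c-1$ because $c\ge4$. Also $c-2\ge(\varepsilon c-4)/3$ and $c-2\ge(\varepsilon c-\ell-3)/2$ because $\varepsilon<1$.

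\emph{High-rank case:} $2h+\tau>\varepsilon(c-1)$. Here I take $k=\min\{h-1,c-2\}\ge1$. Since $2^{k}\le2^{h-1}<r+1$, part 3 of \cref{lem:rank-facts} gives a nonsingular original minor of order $2^k$. That minor has depth exactly $k+1\ge(1-\varepsilon)(k+1)$, and $k\le c-1$ holds by construction.

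If $k=c-2$, the lower bounds on $k$ hold as in the low-rank case. Otherwise $k=h-1$, and I derive both bounds from the case inequality.
\begin{itemize}
 \item Using $\tau\le\ell$ gives $2(k+1)>\varepsilon(c-1)-\ell$. Hence $k>(\varepsilon c-\varepsilon-\ell-2)/2\ge(\varepsilon c-\ell-3)/2$.
 \item Using $\tau\le h$ gives $3h>\varepsilon(c-1)$. Hence $k>(\varepsilon c-\varepsilon-3)/3\ge(\varepsilon c-4)/3$.
\end{itemize}

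There is no serious obstacle, since \cref{thm:rank-core} and \cref{lem:rank-facts} do all the work. The only care needed is the arithmetic. The threshold must be exactly $\varepsilon(c-1)$, matching $(1-\varepsilon)(k+1)$ at $k=c-2$. The two elementary estimates $\tau\le\ell$ and $\tau\le h$ each produce one of the two terms in the maximum, and $\varepsilon<1$ absorbs the stray $-\varepsilon$.
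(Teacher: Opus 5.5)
Your proof is correct and follows the paper's argument: the same rank dichotomy between a nonsingular minor of order $2^{h-1}$ from \cref{lem:rank-facts} and the $2^{c-2}$ square from \cref{thm:rank-core}, driven by the same two estimates $\tau\le h$ and $\tau\le\ell$. The only difference is where the threshold sits. The paper splits on whether $\floor{\log r}$ already meets the required lower bound on $k$, and then shows that the low-rank square has depth exceeding $(1-\varepsilon)c$. You instead split on whether the loss $2h+\tau$ is at most $\varepsilon(c-1)$, and read off the bounds on $k$ from the failure of that inequality.
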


\begin{proof}
Write $r=\rank(M)$, $h=\ceil{\log(r+1)}$ and $j=\floor{\log r}$.
Since $r$ is a positive integer,
$h=j+1$. Moreover, $c\le r+1$ and $c\ge4$ imply $r\ge3$,
so $j\ge1$; the rank upper bound in \cref{lem:rank-facts}
gives $j\le c-1$.

Write $q=\max\{(\varepsilon c-4)/3,(\varepsilon c-\ell-3)/2\}$.
If $j\ge q$, choose a nonsingular minor of order $2^j$. Its depth is
exactly $j+1$, by \cref{lem:rank-facts}. Taking $k=j$
meets both requirements.

Otherwise $j<q$. Since $h=j+1\le c$, the integer
$\tau=\ceil{\log(h+1)}$ satisfies both $\tau\le h=j+1$ and $\tau\le\ell$.
\Cref{thm:rank-core} gives a square with $k=c-2$ and
\[
 \D(N)\ge c-2j-\tau-3
 \ge c-\min\{3j+4,\,2j+\ell+3\}.
\]
At least one of the two terms in this minimum is less than
$\varepsilon c$, by $j<q$. Thus $\D(N)>(1-\varepsilon)c$, which
exceeds $(1-\varepsilon)(k+1)$. Also $q<c/2\le c-2$ for $c\ge4$,
so $k\ge q$ and $1\le k\le c-1$.
\end{proof}

This proves \cref{thm:condensation}: high rank supplies a maximally
hard minor, while low rank supplies a larger square losing only a
controlled number of bits. We now make the search explicit by finding
the counted domains together with the protocols they carry.

\section{Constructive condensation}
\label{sec:construction}

The existence proof counts protocol domains and uses a small cover to
assemble their protocols. We now make those steps effective. The running
time depends polynomially on the original table size. The exponent is
independent of the target depth.

The additional task is not just to list sets: every listed domain must
carry a short protocol, and every cheap restriction must be contained in
a listed domain. We meet both requirements by tracking the rows and
columns on which each subtree works. The same covering argument then
finds hard inputs without computing the optimal communication depth of
the original matrix.

Throughout this section, $T=|X||Y|$ is the explicit table size. Running
time refers to computation on that table, not to local computation within
a communication protocol. All finite lists have fixed orders, so choices
can be made deterministically by taking the first permitted object.

\subsection{Leaf domains}

Recall the lookup matrix \eqref{eq:lookup}: Alice holds
$x=x_1x_2\in\{00,01,10,11\}$, and Bob holds $y\in\{0,1,2\}$.
Suppose a leaf must answer zero on rows $00$ and $01$. Row $00$ places
no restriction on Bob's query; row $01$ excludes query two. The accepted
columns are therefore $\{0,1\}$. Adding row $10$ leaves only column zero.
Thus adding one row requirement means intersecting the current domain
with the columns on which that row has the required answer.

This gives a direct way to list all possible \emph{leaf domains},
the column sets accepted at a leaf as in \eqref{eq:leaf-condition}. For
$b\in\{0,1\}$, write
\[
 A_b(x)=\{y\in Y:M(x,y)=b\}.
\]
Start with the domain $Y$, corresponding to no row requirements.
Process the rows in a fixed order. At row $x$, keep every domain already
on the list and also its intersection with $A_b(x)$; remove duplicate
sets. After any initial group of rows, the list contains exactly the
intersections obtained by selecting a subset of those rows. In particular,
the final list contains every same-answer leaf domain and nothing else.
This is the elementary intersection-enumeration procedure; see
Mary and Strozecki~\cite{MaryStrozecki19} for the broader enumeration setting.

\begin{lemma}[Listing leaf domains]\label{lem:leaf-list}
Let $\rank(M)\le r$ for an integer $r\ge1$. All pairs $(b,C)$ in which
$b\in\{0,1\}$ and $C$ is a same-answer leaf domain can be listed in time
$2^{O(r\log(r+1))}T^{O(1)}$. There are at most
$2(r+1)^{r+1}$ such pairs.
\end{lemma}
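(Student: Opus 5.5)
The plan is to run the incremental intersection procedure described just before the statement, once for each output $b\in\{0,1\}$, and to control its cost through the size bound of \cref{lem:leaf-count}. Fix $b$. Keep a list $\mathcal L$ of column subsets of $Y$, each stored as a length-$\card Y$ bit vector, and start with $\mathcal L=\{Y\}$. Process the rows $x\in X$ in a fixed order. At each row, form $\mathcal L\cup\{C\cap A_b(x):C\in\mathcal L\}$ and remove duplicates; sorting the bit vectors lexicographically is enough for this.

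The first step is correctness. By induction on the number of processed rows, after the first $m$ rows $\mathcal L$ is exactly the set of intersections $Y\cap\bigcap_{x\in S}A_b(x)$ with $S$ ranging over subsets of those $m$ rows. The empty $S$ gives $Y$. Given the list for $m$ rows, a subset of the first $m+1$ rows either avoids the new row or contains it, and these two cases give the two parts of the union. Since a leaf domain for output $b$ and associated row set $U$ is precisely $\bigcap_{x\in U}A_b(x)$ by \eqref{eq:leaf-condition}, with $U=\emptyset$ giving $Y$, the final list contains every same-answer leaf domain for output $b$ and nothing else.

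The second step is the running-time bound, which comes from the fact that the list never gets large. At every stage, every member of $\mathcal L$ is itself a same-answer leaf domain, namely the one for the subset of rows used so far. By \cref{lem:leaf-count}, there are at most $2(r+1)^{r+1}$ distinct leaf conditions in total over both outputs. Distinct sets are distinct conditions on the actual column set, so $\card{\mathcal L}\le2(r+1)^{r+1}=2^{O(r\log(r+1))}$ throughout. Each row step doubles the list at most temporarily. It computes at most $\card{\mathcal L}$ intersections, each in $O(\card Y)$ time, and deduplicates in time polynomial in $\card{\mathcal L}\cdot\card Y$. There are $\card X$ row steps, and computing the sets $A_b(x)$ takes $O(T)$ time. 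The total is therefore $2^{O(r\log(r+1))}T^{O(1)}$. Outputting the pairs $(b,C)$ for both values of $b$ concludes the algorithm, and the count bound on the output is again \cref{lem:leaf-count}.

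The main point to get right is that intermediate lists are genuine leaf domains, so that the rank bound applies at every stage and not only at the end; without this, the list could blow up exponentially in $\card X$ before duplicates collapse. The prefix characterization above settles this. The algorithm never needs to know $r$ or compute a basis: the rank bound is used only in the analysis.
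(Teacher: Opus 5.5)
Your proposal is correct and follows essentially the paper's route. You run the incremental intersection procedure separately for each output, and you note that every intermediate list member is itself a leaf domain, so \cref{lem:leaf-count} bounds every intermediate list. The time bound then follows from polynomial-cost operations on membership strings; deduplicating by sorting instead of pairwise comparison changes nothing.
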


\begin{proof}
Use the intersection procedure just described, separately for each
answer. Every intermediate domain is also a possible final domain:
simply add no further requirements. Thus no intermediate list is longer
than the final list. Lemma~\ref{lem:leaf-count} bounds the combined number
of output-labelled domains by $2(r+1)^{r+1}$. Representing sets by their
membership strings, each intersection and equality test costs polynomial
time in $T$. Even pairwise duplicate removal at every step fits the
stated bound.
\end{proof}

Thus the algorithm uses intersections to list domains; the row-sum
argument from \cref{sec:counting} bounds the list's length.

\subsection{Protocols on rectangles}

A proposed leaf domain tells us which columns we would like to accept.
We must also determine which rows can safely reach that leaf. For a
leaf labelled $b$ with proposed column set $C$, these are exactly
\[
 R=\{x\in X:M(x,y)=b\text{ for every }y\in C\}.
\]
The leaf is therefore correct throughout the rectangle $R\times C$.
If $C$ is empty, every row qualifies. Such empty rectangles are harmless
while assembling a tree; the eventual output restriction will be nonempty.

Return to the lookup example. The zero leaf with column domain
$\{0,1\}$ accepts rows $00,01$. The one leaf with column domain
$\{1\}$ accepts rows $10,11$. Put an Alice node above these leaves.
Alice can choose the correct child for every row, so the parent accepts
the union of their row sets. Bob's query must work whichever row Alice
has, so the parent accepts the intersection of their column sets:
all four rows, but only column one. Alice's message is precisely $x_1$.

\begin{figure}[!t]
\centering
\begingroup
\newcommand{\lookuprectangle}[2]{%
  \begin{scope}[shift={#1},x=.52cm,y=.48cm]
    \foreach \left/\bottom/\right/\top in {#2}{
      \fill[answergreen!13] (\left,\bottom) rectangle (\right,\top);
    }
    \draw[step=1,black!20] (0,0) grid (3,4);
    \foreach \j in {0,1,2}
      \node at (\j+.5,4.45) {$\j$};
    \foreach \i/\row in {3/00,2/01,1/10,0/11}
      \node[anchor=east] at (-.2,\i+.5) {$\row$};
    \foreach \j/\i/\value in
      {0/3/0,1/3/0,2/3/0,0/2/0,1/2/0,2/2/1,
       0/1/0,1/1/1,2/1/0,0/0/0,1/0/1,2/0/1}
      \node[text=black!55] at (\j+.5,\i+.5) {$\value$};
    \foreach \left/\bottom/\right/\top in {#2}{
      \draw[answergreen,line width=.8pt]
        (\left,\bottom) rectangle (\right,\top);
    }
  \end{scope}%
}
\begin{tikzpicture}[x=1cm,y=1cm,>=Latex,
  every node/.style={font=\normalsize,inner sep=2pt}]
  \node[text=aliceblue] at (5.45,3.15)
    {\textbf{Alice sends $x_1$}};
  \node at (.78,2.55) {output $0$};
  \node at (5.08,2.55) {output $1$};
  \node at (10.08,2.55) {one-bit protocol};
  \lookuprectangle{(0,0)}{0/2/2/4}
  \lookuprectangle{(4.3,0)}{1/0/2/2}
  \lookuprectangle{(9.3,0)}{1/0/2/4}
  \node at (2.82,.96) {$+$};
  \draw[->,line width=.8pt] (6.75,.96) -- (8.05,.96);
  \node at (5.45,-.55)
    {rows: $\{00,01\}\cup\{10,11\}=X$};
  \node at (5.45,-1.10)
    {columns: $\{0,1\}\cap\{1\}=\{1\}$};

  \draw[black!20] (-.85,-1.65) -- (11.5,-1.65);
  \node[text=boborange] at (5.45,-2.20)
    {\textbf{Bob chooses which coordinate Alice sends}};
  \node at (.78,-2.85) {Alice sends $x_1$};
  \node at (5.08,-2.85) {Alice sends $x_2$};
  \node at (10.08,-2.85) {two-bit protocol};
  \lookuprectangle{(0,-5.4)}{1/0/2/4}
  \lookuprectangle{(4.3,-5.4)}{2/0/3/4}
  \lookuprectangle{(9.3,-5.4)}{1/0/3/4}
  \node at (2.82,-4.44) {$+$};
  \draw[->,line width=.8pt] (6.75,-4.44) -- (8.05,-4.44);
  \node at (5.45,-5.95) {rows: $X\cap X=X$};
  \node at (5.45,-6.50)
    {columns: $\{1\}\cup\{2\}=\{1,2\}$};
\end{tikzpicture}
\endgroup
\caption{Building protocols from rectangles in the lookup matrix
\eqref{eq:lookup}; shaded entries are the supported inputs and
$X=\{00,01,10,11\}$. Above, each child is a constant-output rectangle.
Alice uses her row to choose a child, so the rows combine by union,
but a column must work in both children. Below, each child is already
a one-bit protocol on all four rows. Bob uses his column to choose
a child, so the columns combine by union, but a row must work in
both children. The parent rectangles need not be monochromatic:
they come with the composed protocols.}
\label{fig:rectangle-recursion}
\end{figure}

There is an analogous protocol for column two, in which Alice sends
$x_2$. Bob can select between these two protocols. Now the roles of
union and intersection are reversed: both children must accept Alice's
row, but only the chosen child must accept Bob's column. The resulting
protocol accepts all four rows and columns $\{1,2\}$, as shown in
\cref{fig:rectangle-recursion}.

We record these two rules for a general tree. At node $v$, the sets
$R_v$ and $C_v$ describe the rectangle on which the subtree will work.
They are computed from the children, rather than supplied as a protocol
in advance.

\Needspace{14\baselineskip}
\begin{lemma}[Protocols from rectangles]\label{lem:rectangle-recursion}
Fix the speakers of a complete depth-$s$ binary tree, its leaf outputs,
and a proposed column set $C_v$ at each leaf $v$. At a leaf of output $b_v$,
put
\[
 R_v=\{x\in X:M(x,y)=b_v\text{ for every }y\in C_v\}.
\]
At each internal node with children $v_0,v_1$, use the rules
\[
\begin{array}{c|cc}
 \text{speaker at }v&R_v&C_v\\ \hline
 \text{Alice}&R_{v_0}\cup R_{v_1}&C_{v_0}\cap C_{v_1}\\
 \text{Bob}&R_{v_0}\cap R_{v_1}&C_{v_0}\cup C_{v_1}.
\end{array}
\]
These sets determine legal message functions making the subtree correct
on $R_v\times C_v$. All sets and message functions can be computed in
$2^{O(s)}T^{O(1)}$ time.
\end{lemma}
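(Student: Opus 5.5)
We argue by induction on subtree height, from the leaves upward, and prove the stronger statement that the subtree rooted at $v$, with the message functions we define, reaches a leaf labelled $M(x,y)$ for every pair in $R_v\times C_v$.

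\textbf{Base case.} At a leaf the claim is immediate from the definition of $R_v$. Every $x\in R_v$ satisfies $M(x,y)=b_v$ for all $y\in C_v$. If $C_v$ is empty, the rectangle is empty and there is nothing to check.

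\textbf{Alice node.} Define Alice's message at $v$ by
\[
 m_v(x)=\begin{cases}0,& x\in R_{v_0},\\ 1,& \text{otherwise},\end{cases}
\]
so rows in $R_{v_0}$ go left, rows in $R_{v_1}\setminus R_{v_0}$ go right, and rows outside $R_v$ go anywhere. The function depends only on $x$ and on the node, so it is legal. Take $(x,y)\in R_v\times C_v$. Then $x$ lies in $R_{v_{m_v(x)}}$, and $y\in C_{v_0}\cap C_{v_1}$ lies in the chosen child's column set. The induction hypothesis for that child gives correctness.

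\textbf{Bob node.} Symmetrically, Bob sends $0$ if $y\in C_{v_0}$ and $1$ otherwise. The row $x\in R_{v_0}\cap R_{v_1}$ is accepted by whichever child is chosen, and $y$ lies in that child's column set, so the induction hypothesis again applies.

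\textbf{Legality of the whole protocol.} Each message function is attached to one node, so it is determined by that node, hence by the transcript. It uses only the speaker's input, and functions at different nodes do not interact. Off the rectangles, arbitrary behaviour is harmless, since the claim only concerns $R_v\times C_v$. The induction therefore yields a single legal protocol tree, correct on $R_{\mathrm{root}}\times C_{\mathrm{root}}$, and by the same argument correct on each subtree's rectangle.

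\textbf{Running time.} Represent every set by a membership string of length at most $|X|+|Y|\le T+1$.
\begin{itemize}
\item Each leaf row set $R_v$ is computed by scanning the table: for each $x$, check all $y\in C_v$. This costs $O(T)$ per leaf.
\item Each internal union or intersection costs $O(|X|+|Y|)$.
\item Each message function is a membership string computed in the same time.
\end{itemize}
There are $2^{s+1}-1$ nodes, giving $2^{O(s)}T^{O(1)}$ in total.

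The only subtle point is the asymmetric tie-breaking rule, preferring child $0$ whenever the input belongs to both children's sets. It must make the choice a function of one player's input alone, so that the correctness argument never needs information held by the other player. With this rule fixed, the proof is routine.
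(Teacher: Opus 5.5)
Your proposal is correct and matches the paper's proof. Both argue by induction on subtree height, use the rule that the speaker picks the first child whose row set (for Alice) or column set (for Bob) contains their input, and bound the running time by per-node scans of the table and of membership strings over the $2^{s+1}-1$ nodes.
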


\begin{proof}
At a leaf, correctness is the definition of $R_v$. At an Alice node,
she chooses the first child whose row set contains her input. A row
in $R_v$ has such a child, and every column in $C_v$ belongs to both
child column sets. At a Bob node, he chooses the first child whose
column set contains his input. A column in $C_v$ has such a child,
and every row in $R_v$ belongs to both child row sets. Each message
depends only on the speaker's input. Induction on the subtree height
proves correctness. Give arbitrary messages, say zero, to inputs outside
the relevant union; these choices are never needed on $R_v\times C_v$.

There are $2^s$ leaves and fewer internal nodes. At a leaf, scanning
the table finds $R_v$; at an internal node, the two set operations and
the message choices are scans of membership strings. This proves the
running-time bound, including the stored message functions.
\end{proof}

We can now test the formulas counted in \cref{lem:domain-count}.
Enumerate the speakers and the output-labelled leaf domains, compute
the rectangle at the root, and keep the tree when its row set is all
of $X$. Lemma~\ref{lem:rectangle-recursion} then supplies a protocol on
every original row and the computed column domain. The next theorem
shows that this test retains a domain containing every cheap restriction.

Recall the size bound
\[
 a(r,s)=B(r)2^s,\qquad
 B(r)=(r+1)\ceil{\log(r+1)}+2.
\]
This is the same size bound used in \cref{lem:finite-hit};
here it also controls the logarithm of the protocol-domain list size.

\begin{theorem}[A list of protocol domains]\label{thm:protocol-list}
Let $\rank(M)\le r$ with integer $r\ge1$, and let $s\ge0$ be an integer.
In deterministic time
\[
 2^{O(B(r)2^s)}T^{O(1)},
\]
one can construct a list $\mathcal G$ of protocol domains,
each nonempty member supplied with an actual depth-at-most-$s$ protocol,
such that, for every nonempty $C\subseteq Y$,
\begin{equation}\label{eq:list-containment}
 \D(M[X,C])\le s
 \quad\Longrightarrow\quad
 C\subseteq G\text{ for some }G\in\mathcal G.
\end{equation}
The list has at most $2^{a(r,s)-1}$ distinct members.
\end{theorem}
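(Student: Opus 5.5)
The list is built by brute-force enumeration over the formulas counted in \cref{lem:domain-count}, and each one is tested with the rectangle recursion of \cref{lem:rectangle-recursion}. First, apply \cref{lem:leaf-list} to obtain the list $\mathcal L$ of output-labelled leaf domains $(b,C)$, with $\card{\mathcal L}\le 2(r+1)^{r+1}$, in time $2^{O(r\log(r+1))}T^{O(1)}$. Then enumerate every complete depth-$s$ tree: a speaker at each of the $2^s-1$ internal nodes, and a pair from $\mathcal L$ at each of the $2^s$ leaves. The number of such labelled trees is at most $2^{2^s-1}(2(r+1)^{r+1})^{2^s}\le 2^{B(r)2^s-1}=2^{a(r,s)-1}$, exactly as in the proof of \cref{lem:domain-count}.

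For each tree, run \cref{lem:rectangle-recursion}, which takes $2^{O(s)}T^{O(1)}$ time, to get the root rectangle $R_{\mathrm{root}}\times C_{\mathrm{root}}$ together with its message functions. Keep $C_{\mathrm{root}}$ in $\mathcal G$ exactly when $R_{\mathrm{root}}=X$, and attach the computed protocol, which is correct on $X\times C_{\mathrm{root}}$ and has depth $s$. Then remove duplicate sets. The bound on the list size follows from the enumeration count. The total running time is $2^{a(r,s)}\cdot 2^{O(s)}T^{O(1)}$ plus the leaf-listing time, which is $2^{O(B(r)2^s)}T^{O(1)}$. One could argue that each kept set is genuinely a protocol domain $G_P(X)$, but the statement only needs each member to carry a protocol.

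The main step is the containment \eqref{eq:list-containment}. Suppose $\D(M[X,C])\le s$ with $C$ nonempty. By \cref{lem:compatibility}, pad a correct protocol $\Pi$ on $X\times C$ to a complete depth-$s$ tree. For each node $v$, let $U_v$ be the set of rows consistent with Alice's choices on the path to $v$, as in \cref{sec:domains}, and let $D_v$ be the columns in $C$ whose execution in $\Pi$, over some row, can reach $v$ via Bob's choices. At each leaf, take the leaf domain of $U_v$ with the leaf output $b_v$,
\[
 C_v=\{y:M(x,y)=b_v\ \forall x\in U_v\}.
\]
This pair is on the list $\mathcal L$, since every subset of rows yields a listed domain, including $C_v=Y$ when $U_v$ is empty.

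Now prove by induction upward that $U_v\subseteq R_v$ and that every column $y\in C$ routed by Bob's choices toward $v$ lies in $C_v$. At a leaf, $U_v\subseteq R_v$ holds because $R_v$ is the set of all rows agreeing with $b_v$ on $C_v$, which contains $U_v$. At an Alice node, $U_v=U_{v_0}\cup U_{v_1}\subseteq R_{v_0}\cup R_{v_1}$. On the column side, a $y$ that Bob's choices route to $v$ is also routed to both children, so it lies in $C_{v_0}\cap C_{v_1}$. At a Bob node, $U_v=U_{v_0}=U_{v_1}$, so $U_v\subseteq R_{v_0}\cap R_{v_1}$, and each $y$ goes to the child Bob selects, so it lies in $C_{v_0}\cup C_{v_1}$.

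The delicate point is the leaf case on the column side. A column routed to leaf $v$ must satisfy $M(x,y)=b_v$ for all $x\in U_v$, and this is exactly the correctness of $\Pi$, because with $y$ fixed every row of $U_v$ reaches that leaf. At the root $U=X$, so $R_{\mathrm{root}}=X$ and $C\subseteq C_{\mathrm{root}}$. The tree built from $\Pi$'s speakers and these leaf choices is therefore one of the enumerated trees, and $C_{\mathrm{root}}\in\mathcal G$ contains $C$.
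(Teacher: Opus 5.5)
Your construction follows the paper's proof closely: the same leaf list, the same enumeration of speaker-and-leaf-labelled trees, the same rectangle test $R_{\mathrm{root}}=X$, the same row-side induction $U_v\subseteq R_v$, and the same count. On the column side you prove $C\subseteq C_{\mathrm{root}}$ by a direct induction over columns whose Bob bits follow the path to $v$. The paper instead notes that, with these leaf choices, the rectangle recursion reproduces the intersection/union formula of \cref{sec:domains}, so $C_{\mathrm{root}}$ is the domain of the original partial protocol, and then applies \cref{lem:compatibility}. Your version is correct and a little more self-contained. One small point: the set $D_v$ you define requires some row to reach $v$, which is not what your induction uses, and it would break at an Alice node with an empty branch. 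The path-following notion you actually use is the right one, and empty branches are harmless because then $C_v=Y$.

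The gap is the step you set aside. The theorem asserts that $\mathcal G$ is a list of \emph{protocol domains}, but you store $C_{\mathrm{root}}$. This set can be strictly smaller than the domain $G_P(X)$ of the very Alice maps you constructed. Take rows $x_1,x_2,x_3$ with values $(0,1)$, $(0,0)$, $(1,1)$ on columns $y_1,y_2$, an Alice root, and leaves $(0,\{y_1\})$ and $(1,\{y_2\})$; both are leaf domains of $\{x_1\}$. Then $R_{v_0}=\{x_1,x_2\}$ and $R_{v_1}=\{x_1,x_3\}$, so the tree passes the test with $C_{\mathrm{root}}=\varnothing$. Yet the constructed map sends $x_1,x_2$ left and $x_3$ right, and that protocol is correct on $y_1$, so $G_P(X)=\{y_1\}$. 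Nothing in your argument shows that $C_{\mathrm{root}}$ belongs to $\mathcal C_s$, and your remark that one ``could argue'' it is a genuine $G_P(X)$ is not justified.

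The paper closes this with a completion step. Fix the constructed Alice maps, compute their full domain by the leaf-condition recursion of \cref{sec:domains}, and store that domain with completed Bob maps. This costs one table scan per leaf and one set operation per node, so $2^{O(s)}T^{O(1)}$ per trial. The completed domain contains $C_{\mathrm{root}}$, because the constructed Bob maps already witness compatibility of every column there. Hence the containment \eqref{eq:list-containment}, the attached protocols, and the bound of one domain per trial all survive. Your weaker list would suffice for the later use in \cref{lem:constructive-one-sided}, but it does not establish the statement as written.
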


\begin{proof}
Use Lemma~\ref{lem:leaf-list} to list the output-labelled leaf domains.
Enumerate every speaker assignment on a complete depth-$s$ tree and
every choice from this list at each leaf. For each resulting tree, apply
Lemma~\ref{lem:rectangle-recursion}. Discard it unless $R_{\rm root}=X$.
For a retained tree, fix the constructed Alice maps and compute their
full protocol domain by the recursion in \cref{sec:domains}.
Store that domain and the corresponding Bob maps, keeping one record
per distinct domain. Full completion can only enlarge $C_{\rm root}$,
since the already constructed Bob maps work there. It takes another
table scan per leaf and set operation per node, hence
$2^{O(s)}T^{O(1)}$ time.

Every stored protocol is valid by Lemma~\ref{lem:rectangle-recursion}
and full completion. To prove coverage, start with a protocol of depth
at most $s$ on $X\times C$, pad it, and forget Bob's maps. At each node
$v$, let $U_v$ be the rows consistent with Alice's decisions on the path
to $v$, ignoring Bob's decisions. At a leaf, choose the actual domain
\[
 C_v=\{y:M(x,y)=b_v\text{ for every }x\in U_v\}.
\]
It occurs in the leaf list. Thus this speaker tree and these leaf
choices occur in the enumeration.

For this choice, $U_v\subseteq R_v$ at every node. This is immediate
at a leaf. At an Alice node, $U_v$ is partitioned between its two
children, so their containment assertions imply containment in
$R_{v_0}\cup R_{v_1}$. At a Bob node, both children have the same
row set $U_v$, so containment holds in their intersection. At the root,
$U_{\rm root}=X$, and the tree passes the test. Its recursively computed
column domain is the full protocol domain of the original partial
protocol, by \cref{sec:domains}, and therefore contains $C$. The final
completion of the newly chosen Alice maps preserves this containment.
This proves \eqref{eq:list-containment} even though those maps need not
be the original ones.

Finally, put $L=2^s$. There are $2^{L-1}$ speaker assignments and at most
$2(r+1)^{r+1}$ choices per leaf. Exactly as in
Lemma~\ref{lem:domain-count}, the number of trials is at most
\[
 2^{L-1}\bigl(2(r+1)^{r+1}\bigr)^L
 \le 2^{B(r)L-1}=2^{a(r,s)-1}.
\]
Each trial has polynomial table cost times $2^{O(s)}$. Listing the leaves,
and even pairwise deduplication of the completed domains, fit within
$2^{O(a(r,s))}T^{O(1)}$. Since each trial produces at most one domain,
the stated list-size bound also follows. A depth-zero tree has one leaf
and no speaker choices, so the argument includes $s=0$.
\end{proof}

The two parts of this argument serve different purposes. The rectangle
test certifies that a listed formula yields legal messages. The
containment proof certifies that the enumeration does not miss a cheap
restriction. A count alone would establish neither property.

\subsection{Protocol covers and hard column sets}

With the finite list in hand, the covering proof becomes constructive.
First try to cover the inputs: repeatedly use a listed domain containing
more than half the remaining columns. If this succeeds, Bob can name
a protocol that handles his input. If it stops, select columns that
eliminate at least half the protocols still able to handle all columns
selected so far. When no listed protocol remains, the containment
property certifies that the selected restriction is hard.

The two phases halve different objects: first the uncovered inputs,
then the possible protocols. Their opposite membership tests are the
algorithmic form of the cover and hitting-set arguments in
\cref{sec:condensation}.

\begin{lemma}[A constructive one-sided alternative]
\label{lem:constructive-one-sided}
Suppose $\rank(M)\le r$ with integer $r\ge1$, let $s\ge0$ be an integer, and put
$\lambda=\ceil{\log r}$. In time $2^{O(B(r)2^s)}T^{O(1)}$, a deterministic
algorithm returns either a protocol for all of $M$ of depth at most
$s+\lambda$, or a nonempty set of original columns $C$ with
\[
 |C|\le a(r,s),\qquad \D(M[X,C])>s.
\]
\end{lemma}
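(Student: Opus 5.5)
First I build the list $\mathcal G$ of \cref{thm:protocol-list} at depth $s$. This takes time $2^{O(B(r)2^s)}T^{O(1)}$. Each nonempty member carries a depth-at-most-$s$ protocol valid on all of $X$, the list has at most $2^{a(r,s)-1}$ members, and the containment property \eqref{eq:list-containment} holds. Next I reduce to one representative per column value, taking the first label of each value in the fixed order; by \cref{lem:rank-facts} at most $2^r$ representatives remain. The algorithm then has two phases, matching the description before the lemma.

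\emph{Phase one (covering).} Let $W$ be the set of uncovered representatives, initially all of them. While $W\neq\emptyset$, scan $\mathcal G$ for the first member $G$ with $\card{G\cap W}>\card W/2$, record it, and replace $W$ by $W\setminus G$. Each round strictly halves $W$, so after at most $r$ rounds $W$ is empty, exactly as in \cref{lem:cover}. In that case the algorithm outputs the global protocol. Bob maps his column to its representative and names the first recorded domain containing it, using $\ceil{\log r}=\lambda$ bits; the players then run the stored depth-$s$ protocol of that domain. For $r=1$ there is one domain and no index is sent. Each round costs $\card{\mathcal G}\,T^{O(1)}$ time.

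\emph{Phase two (hitting).} If some round finds no such $G$, the current nonempty $W$ has the property that every $G\in\mathcal G$ covers at most half of $W$. Equivalently, each complement $W\setminus G$ contains at least half of $W$. I then select columns greedily. Maintain the set $\mathcal S\subseteq\mathcal G$ of domains containing every column chosen so far, starting with $\mathcal S=\mathcal G$ and $C=\emptyset$. While $\mathcal S\neq\emptyset$, add to $C$ the first $y\in W$ that lies outside at least half of the members of $\mathcal S$. Such a $y$ exists by averaging. Under the uniform distribution on $W$, each $G\in\mathcal S$ misses a random $y$ with probability at least $1/2$, so the expected number of members of $\mathcal S$ missed is at least $\card{\mathcal S}/2$, and some $y$ achieves this. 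Each step at least halves $\mathcal S$. Because $\card{\mathcal G}\le 2^{a(r,s)-1}$, at most $a(r,s)$ steps empty $\mathcal S$, so $\card C\le a(r,s)$. If $\mathcal G$ is empty, I take a single column instead, which is harmless since the hardness argument below still applies. At the end no member of $\mathcal G$ contains $C$, and $C$ is nonempty. By the contrapositive of \eqref{eq:list-containment}, $\D(M[X,C])>s$. The selected columns are original representative labels.

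The main obstacle is the bookkeeping at the boundary. I must check that the halving count really gives $\card C\le a(r,s)$ and not $a(r,s)+1$: after $t$ steps $\card{\mathcal S}\le 2^{a(r,s)-1-t}$, so once this bound drops below $1$ the set is empty. I must also check that restricting phase two to representatives keeps the hardness certificate valid, which holds because containment in \eqref{eq:list-containment} is stated for arbitrary nonempty $C\subseteq Y$. Total time is $\card{\mathcal G}\cdot(r+a(r,s))\cdot T^{O(1)}$ plus the list construction, which is $2^{O(B(r)2^s)}T^{O(1)}$.
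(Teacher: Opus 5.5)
Your proposal is correct and follows the paper's proof essentially step for step: reduce to at most $2^r$ column representatives, greedily record listed domains that cover more than half of the uncovered representatives, which gives a cover of size at most $r$ and hence a protocol of depth $s+\lambda$; otherwise greedily pick columns that at least halve the surviving members of the list within $a(r,s)$ steps, and certify hardness by the containment property of \cref{thm:protocol-list}. The only differences are cosmetic. You build the list on the full matrix rather than on the representative submatrix, and you do not discard empty domains. Neither change affects correctness or the time bound. Your patch for an empty list matches the paper's singleton case.
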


\begin{proof}
Keep one original representative of each column value, recording the
map from all original columns to representatives. There are at most
$2^r$ representatives by Lemma~\ref{lem:rank-facts}. Construct the list
$\mathcal G$ on this matrix and discard empty domains. If the list is
empty, choose any representative column. Its singleton restriction
has depth greater than $s$ by \eqref{eq:list-containment}, and its size
is at most $a(r,s)$. This case can occur when $s=0$.

Otherwise let $W$ initially be the representative universe. Whenever
a listed domain contains strictly more than half of $W$, record its
protocol and remove its columns from $W$. If this empties $W$, at most
$r$ domains have been recorded: after $r$ strict halvings of a set of
size at most $2^r$, fewer than one element remains. Bob chooses the
first recorded domain containing his input, announces its index in
$\lambda$ bits, and executes the stored protocol. Restoring equal
columns by the representative map costs no communication.

If the covering process stops with $W\ne\varnothing$, then
\begin{equation}\label{eq:residual-half}
 |G\cap W|\le |W|/2\qquad(G\in\mathcal G).
\end{equation}
Initialize $\mathcal H=\mathcal G$ and $C=\varnothing$. Include all
listed domains in $\mathcal H$, including any recorded in the covering
phase: that phase removed inputs, not protocols. While $\mathcal H$
is nonempty, choose $y\in W$ belonging to at most half its members,
add $y$ to $C$, and retain only domains containing $y$.
Such a column exists, because
\[
 \sum_{y\in W}|\{G\in\mathcal H:y\in G\}|
 =\sum_{G\in\mathcal H}|G\cap W|
 \le |W|\,|\mathcal H|/2.
\]
A scan of $W$ finds it. Previously selected columns lie in every current
member of $\mathcal H$, so no column is selected twice while that list
is nonempty.

After at most $\floor{\log|\mathcal G|}+1\le a(r,s)$ iterations,
no listed domain contains all of $C$. The initial list here was
nonempty, so $C$ is nonempty. Its depth exceeds $s$ by
\eqref{eq:list-containment}. Lift the chosen representatives to their
original indices. All remaining work is polynomial in the table and
explicit list sizes, which proves the time bound.
\end{proof}

The algorithm never separately computes the optimal depth of its
returned restriction. Its lower bound follows from the proved coverage
of the list, not from a small independently checkable certificate.

\begin{lemma}[A constructive two-sided alternative]
\label{lem:constructive-two-sided}
Suppose $\rank(M)\le r$ with integer $r\ge1$, let $s\ge0$ be an integer, and put
$\lambda=\ceil{\log r}$. In time
\[
 2^{O(B(r)2^{s+\lambda})}T^{O(1)},
\]
a deterministic algorithm returns either a global protocol of depth
at most $s+2\lambda$, or nonempty original-index sets $U,V$ with
\[
 |U|\le B(r)2^s,\qquad |V|\le B(r)2^{s+\lambda},
 \qquad \D(M[U,V])>s.
\]
\end{lemma}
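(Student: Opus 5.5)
The plan mirrors the proof of \cref{lem:two-sided}, with \cref{lem:constructive-one-sided} doing the work of the contrapositive of \cref{lem:cover} in each orientation.

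\textbf{First stage (columns).} Run \cref{lem:constructive-one-sided} on $M$ at depth $s+\lambda$. If it returns a protocol for $M$ of depth at most $s+2\lambda$, output it and stop. Otherwise it returns a nonempty original column set $V$ with
\[
 |V|\le a(r,s+\lambda)=B(r)2^{s+\lambda},\qquad \D(M[X,V])>s+\lambda .
\]
This call costs $2^{O(B(r)2^{s+\lambda})}T^{O(1)}$, which dominates the claimed time bound.

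\textbf{Second stage (rows).} Form the transposed restriction $M[X,V]^{\top}$ explicitly. Its rank is at most $r$, so the same $\lambda=\ceil{\log r}$ applies, and its table size is at most $T$. Run \cref{lem:constructive-one-sided} on it at depth $s$, costing $2^{O(B(r)2^s)}T^{O(1)}$. The lemma applies to any matrix with rank at most $r$, including this one, which has rows $V$ and columns $X$.

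This call has two possible outcomes:
\begin{enumerate}
\item It returns a protocol for $M[X,V]$ of depth at most $s+\lambda$. After swapping the players' roles, this contradicts the first stage's guarantee $\D(M[X,V])>s+\lambda$. So this branch never occurs. The algorithm could still handle it defensively (by outputting nothing), but correctness of the first call rules it out.
\item It returns a nonempty set $U$ of original rows with $|U|\le B(r)2^s$ and $\D(M[U,V])>s$. Transposition does not change communication complexity, since the players simply exchange roles. This is exactly the required output.
\end{enumerate}

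\textbf{Bookkeeping.} The labels are original throughout. \cref{lem:constructive-one-sided} lifts representatives back to original indices. The second call operates on original labels of $M[X,V]$, so $U\subseteq X$ and $V\subseteq Y$ are sets of original rows and columns.

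\textbf{Main obstacle.} The only delicate point is making sure the impossible branch in the second stage really is ruled out. This requires that the protocol returned by \cref{lem:constructive-one-sided} for the transposed matrix has depth at most $s+\lambda$, with $\lambda$ computed from the same $r$. That holds because the rank bound is inherited by submatrices and transposes. Nothing else in the argument needs more than composing the two calls.
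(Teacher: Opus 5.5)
Your proposal is correct and is the same argument as the paper's proof. You call \cref{lem:constructive-one-sided} at threshold $s+\lambda$ to obtain either a protocol of depth at most $s+2\lambda$ or $V$, then call it again on the transpose of the actual restriction $M[X,V]$ at threshold $s$ with the same rank bound $r$; its protocol outcome would contradict $\D(M[X,V])>s+\lambda$, and the first call dominates the running time.
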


\begin{proof}
Apply Lemma~\ref{lem:constructive-one-sided} at threshold $s+\lambda$.
A protocol outcome has depth at most $s+2\lambda$. Otherwise we obtain
columns $V$, of size at most $B(r)2^{s+\lambda}$, with
$\D(M[X,V])>s+\lambda$. Apply the same algorithm to the transpose
of this actual restriction, now at threshold $s$ and with rank bound $r$.
Its protocol outcome would have depth at most $s+\lambda$, a contradiction.
It therefore returns rows $U$ with the required size and hardness.
The second table and its parameters are no larger than those of the first
call. Both calls retain original indices, proving all assertions.
\end{proof}

\subsection{Near-maximal extraction}

The final algorithm extracts a nonsingular minor or uses
\cref{lem:constructive-two-sided}, according to the rank. Exact rational
elimination finds the rank and original-index minors in polynomial
table time.

\begin{proof}[Proof of Theorem~\ref{thm:constructive-main}]
We first check two elementary reasons why the whole matrix might have
depth below $d$. Compute $r=\rank(M)$. If $r<d-1$, use
\cref{lem:rank-facts} to construct a protocol of depth at most
$r+1<d$: Alice names her row value and Bob transmits the answer.
The distinct row values are read directly from the table and there are
at most $2^r$ of them. Rank zero is constant and needs no communication.
If either input set has at most $2^{d-2}$ elements, that player can
instead name the input and the other player transmits the answer, using
at most $d-1$ bits. Return this protocol in either case.
These explicit protocol trees have at most $2^{O(d)}$ nodes, and their
message maps can be constructed from the table within the claimed runtime.

We may now assume $r\ge d-1\ge3$ and that both input sets have more
than $2^{d-2}$ elements. Write
\[
 j=\floor{\log r},\qquad h=j+1=\ceil{\log(r+1)},\qquad
 \ell=\ceil{\log(d+1)},
\]
and
\[
 q=\max\left\{\frac{\varepsilon d-4}{3},
                 \frac{\varepsilon d-\ell-3}{2}\right\}.
\]
In particular $j\ge1$.

If $j\ge q$, take $k=\min\{j,d-2\}$ and extract a nonsingular minor
of order $2^k$. It has depth exactly $k+1$. Since
$q<d/2\le d-2$, we have $1\le k\le d-2$ and $k\ge q$.
Exact rank computation and minor extraction have polynomial table cost.

Suppose instead that $j<q$. Put
\[
 \tau=\ceil{\log(h+1)},\qquad s=d-2h-\tau-2.
\]
We check that this is a permitted depth threshold. Since $j<q<d/2$,
we have $h\le d$, and consequently $\tau\le\ell$. Also $\tau\le h=j+1$.
It follows that
\[
 2j+\tau+3\le\min\{3j+4,\,2j+\ell+3\}<\varepsilon d.
\]
For the strict inequality, $j<q$ means that $j$ is smaller than at
least one of the two terms defining $q$. Thus
\[
 s+1=d-2j-\tau-3>(1-\varepsilon)d>0.
\]
As $s$ is an integer, $s\ge0$.

Apply Lemma~\ref{lem:constructive-two-sided} at threshold $s$, using
the actual rank $r$. Its announcement cost
$\lambda=\ceil{\log r}$ is at most $h$. A protocol outcome would have
depth at most
$s+2\lambda\le s+2h=d-\tau-2<d$; in this case, return that protocol.
Otherwise the lemma returns a restriction of depth at least $s+1$.
Its larger side, and the parameter controlling its running time, are
bounded by
\[
 B(r)2^{s+\lambda}
 \le (h+1)2^h2^{s+h}
 =(h+1)2^{d-\tau-2}\le2^{d-2}.
\]
Here $B(r)\le(r+1)(h+1)\le2^h(h+1)$ and $h+1\le2^\tau$.
Consequently this call takes $2^{O(2^d)}T^{O(1)}$ time.

Set $k=d-2$. The initial dimension check ensures that both original input
sets have more than $2^{d-2}$ elements. Enlarge the returned restriction
to an original square of exactly that size. Its depth is at least
$s+1>(1-\varepsilon)d>(1-\varepsilon)(k+1)$, and $k\ge q$.
All selected indices are original. The rank tests, padding, and exact
rational arithmetic for the supplied $\varepsilon$ fit the claimed
running time for fixed $\varepsilon$.
\end{proof}

\Needspace{12\baselineskip}
\section{Further questions}\label{sec:questions}

Can every hard Boolean matrix be restricted,
at a linear input-length scale, to a $2^k$-by-$2^k$ square of complexity
exactly $k+1$? A nonsingular minor already does this when the logarithm
of the rank is linear in the original complexity. Can near-maximal
condensation be found substantially faster, perhaps in time polynomial
in the original table size? That question asks
for hard original inputs, not for the exact value of $\D(M)$.
For the multiparty extension in \cref{sec:multiparty}, can one retain
a larger fraction of the communication complexity?

\paragraph{AI assistance.}\label{sec:ai-assistance}
Under the author's direction, the large language model Astra assisted
with the results, proofs, and exposition. Claude models also assisted
with review and editorial revisions. The \hyperref[sec:ai-methodologies]{AI methodologies statement}
at the end details this use.

\appendix
\crefalias{section}{appendix}
\section{Further two-party bounds}\label{app:tradeoffs}

The main text gives both nearly maximal hardness and a bound on the
retained fraction of the original complexity. We record the local-to-global
form used in the introduction and a faster constructive tradeoff, followed
by the exhaustive-search comparison used
in the introduction. None is needed for the main proofs.

\subsection{A local-to-global theorem}

Suppose every sufficiently small restriction is easy. A small rank
witness then forces the rank of the whole matrix to be small, and the
protocol cover applies.

\begin{theorem}[Local-to-global principle]\label{thm:local-global}
Let $s\ge1$ be an integer. If every nonempty submatrix of $M$
with at most $2^{4s}$ rows and at most $2^{4s}$ columns has
depth at most $s$, then
\[
 \D(M)\le3s-2.
\]
Equivalently, if $\D(M)>3s-2$, some submatrix within those
size bounds has depth at least $s+1$.
\end{theorem}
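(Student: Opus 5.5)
The plan is to use the hypothesis twice: first to bound the rank of $M$ through small nonsingular minors, and then to feed that rank bound into the two-sided restriction of \cref{lem:two-sided}. Both steps fit inside the window of $2^{4s}$ rows and columns.

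\emph{Step 1: bounding the rank.} If $\rank(M)=0$, the matrix is constant and $\D(M)=0\le 3s-2$, so assume $r=\rank(M)\ge1$. By \cref{lem:rank-facts}(3), $M$ has a nonsingular $q$-by-$q$ minor on original indices for each $1\le q\le r$.
\begin{itemize}
\item Suppose $r>2^{s-1}$. Take $q=2^{s-1}+1$. Then $q\le 2^s\le 2^{4s}$, so the hypothesis gives the minor depth at most $s$.
\item Its rank is $q$, while \cref{lem:rank-facts}(2) (valid since $s\ge1$) bounds the rank of any depth-$s$ matrix by $2^{s-1}<q$. This is a contradiction.
\end{itemize}
Hence $r\le 2^{s-1}$, and therefore $\ceil{\log r}\le s-1$.

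\emph{Step 2: the two-sided restriction.} Suppose, for contradiction, that $\D(M)>3s-2$. Apply \cref{lem:two-sided} with $H=s-1$, which is admissible because $H\ge\ceil{\log r}$. Since $s+2H=3s-2<\D(M)$, the lemma yields an original submatrix $M[U,V]$ with
\[
\D(M[U,V])>s,\qquad |U|\le a(r,s),\qquad |V|\le a(r,2s-1).
\]

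\emph{Step 3: the size check.} It remains to verify $a(r,2s-1)=B(r)2^{2s-1}\le 2^{4s}$, which equivalently reads $B(r)\le 2^{2s+1}$. Using $r+1\le 2^{s-1}+1\le 2^s$ gives $\ceil{\log(r+1)}\le s$. Therefore
\[
B(r)\le 2^s s+2\le 2^{2s}+2\le 2^{2s+1}.
\]
The smaller side also satisfies $a(r,s)\le a(r,2s-1)$. So $M[U,V]$ is a nonempty submatrix within the stated size bounds whose depth exceeds $s$, contradicting the hypothesis. Hence $\D(M)\le3s-2$. The equivalent formulation in the statement is just the contrapositive.

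The only delicate point is Step 1. Bounding $r$ by $2^{s-1}$ rather than $2^s$ relies on the sharpened rank bound in \cref{lem:rank-facts}, which comes from the fixed-leaf-output convention. This is what makes $H=s-1$ admissible and produces the exact constant $3s-2$; with the weaker bound $r\le 2^s$ one would only get $\D(M)\le 3s$. Everything after that is a routine size computation.
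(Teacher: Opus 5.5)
Your proof is correct and follows the paper's argument. You bound the rank by $2^{s-1}$ using a nonsingular minor of order $2^{s-1}+1$ together with the sharpened rank bound of \cref{lem:rank-facts}, then apply \cref{lem:two-sided} with $H=s-1$ and check that both sides fit within $2^{4s}$; the only differences are cosmetic (you use the actual rank instead of the bound $r_0=2^{s-1}$, and you bound $B(r)$ via $\ceil{\log(r+1)}\le s$ instead of the paper's $B(r)\le(r+1)^2\le2^{2s}$).
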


\begin{proof}
If $\rank(M)>2^{s-1}$, a nonsingular minor of order
$2^{s-1}+1$ has depth greater than $s$
by \cref{lem:rank-facts}. It lies within the assumed size
bounds, a contradiction. Thus $\rank(M)\le r_0=2^{s-1}$.
Rank zero is the all-zero matrix and already has depth zero.

Use \cref{lem:two-sided} with the rank bound $r_0$,
$H=\ceil{\log r_0}=s-1$.
For $r\ge1$ we have $\ceil{\log(r+1)}\le r$, hence
$B(r)\le r(r+1)+2\le(r+1)^2$.
Since $(r_0+1)^2\le2^{2s}$, the two size bounds are at most
\[
 a(r_0,s)\le2^{3s},
 \qquad a(r_0,2s-1)\le2^{4s-1}.
\]
Both fit within the local hypothesis. The existence of the
hard restriction in \cref{lem:two-sided} is therefore
impossible, giving
$\D(M)\le s+2(s-1)=3s-2$.
\end{proof}

The hypothesis says \emph{at most} these dimensions. If a matrix has
fewer rows or columns, it still tests its existing restrictions.
There is no vacuous requirement for a square larger than the matrix.

\subsection{An alternative constructive tradeoff}

Another choice of parameters in the same extraction algorithm trades
retained hardness for a faster running time. It again returns either a
short global protocol or a small hard restriction. This yields a condensate
at input length within five bits of a supplied hardness lower bound, retaining
one-quarter hardness, with a faster running time than the near-maximal
guarantee of \cref{thm:constructive-main}. As in
\cref{sec:construction}, $T$ denotes the number of entries in the explicit
input matrix.

\begin{theorem}[A constructive local-to-global alternative]
\label{thm:constructive-alternative}
For every integer $t\ge1$, a deterministic algorithm returns either
a valid protocol for $M$ of depth at most $3t-2$, or nonempty original
row and column sets $U,V$ with
\[
 |U|,|V|\le2^{4t},\qquad \D(M[U,V])>t.
\]
Its running time is $2^{O(t8^t)}T^{O(1)}$. The two alternatives need
not be disjoint.
\end{theorem}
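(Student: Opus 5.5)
The plan is to follow the proof of \cref{thm:local-global} step by step, replacing each existential step with its constructive counterpart from \cref{sec:construction}.

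\emph{Rank test.} Compute $r=\rank(M)$ by exact rational elimination.
\begin{itemize}
\item If $r>2^{t-1}$, extract an original nonsingular minor of order $2^{t-1}+1$, in polynomial table time. By part~2 of \cref{lem:rank-facts}, this minor has depth greater than $t$. Its sides are at most $2^{4t}$, so return it as $(U,V)$.
\item If $r=0$, return the constant protocol.
\end{itemize}

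\emph{Low-rank case.} Otherwise $1\le r\le r_0=2^{t-1}$. Apply \cref{lem:constructive-two-sided} with the rank bound $r_0$ and threshold $s=t$. The lemma needs only $\rank(M)\le r_0$, not the exact rank. Put $\lambda=\ceil{\log r_0}=t-1$.
\begin{itemize}
\item A protocol outcome has depth at most $t+2(t-1)=3t-2$.
\item A restriction outcome has depth greater than $t$. Its sides are bounded exactly as in the proof of \cref{thm:local-global}. From $B(r_0)\le(r_0+1)^2\le 2^{2t}$ we get $|U|\le B(r_0)2^t\le2^{3t}$ and $|V|\le B(r_0)2^{2t-1}\le2^{4t-1}$.
\end{itemize}

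\emph{Running time.} This is the only real computation. By \cref{lem:constructive-two-sided}, the time is $2^{O(B(r_0)2^{s+\lambda})}T^{O(1)}$. The exponent is $B(r_0)2^{2t-1}$. With $B(r_0)=(r_0+1)\ceil{\log(r_0+1)}+2$, $r_0+1\le2^t$ and $\ceil{\log(r_0+1)}\le t$, we have $B(r_0)=O(t2^t)$. The exponent is therefore $O(t2^t\cdot4^t)=O(t8^t)$, giving the claimed bound $2^{O(t8^t)}T^{O(1)}$. Using the crude bound $(r_0+1)^2$ here would give only $2^{O(16^t)}$, so the sharper form of $B$ is needed.

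\emph{Non-disjointness.} The alternatives need not be disjoint, since the algorithm stops at its first outcome. In particular, the minor branch may fire even when $\D(M)\le 3t-2$, so nothing more needs to be checked there.

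\emph{Main obstacle.} Only two points need care. The first is to use the upper bound $r_0$, rather than the actual rank, so that the announcement cost is exactly $t-1$. The second is the $O(t2^t)$ bound on $B(r_0)$, which the time estimate depends on.
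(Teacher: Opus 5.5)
Your proposal is correct and follows the paper's proof essentially step for step. If $\rank(M)>2^{t-1}$ you return an original nonsingular minor of order $2^{t-1}+1$; otherwise you apply \cref{lem:constructive-two-sided} with the rank bound $r_0=2^{t-1}$ and $\lambda=t-1$, using $B(r_0)\le(r_0+1)^2\le2^{2t}$ for the size bounds and the sharper estimate $B(r_0)=O(t2^t)$ for the $2^{O(t8^t)}$ running time, exactly as the paper does.
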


\begin{proof}
Compute the real rank by exact rational linear algebra. If it is zero,
return the all-zero protocol. Put $r_0=2^{t-1}$. If the rank exceeds
$r_0$, find $r_0+1$ independent original columns and then that many
independent original rows of their restriction. The resulting nonsingular
minor has depth greater than $t$ by Lemma~\ref{lem:rank-facts}, and its
dimensions are at most $2^{4t}$. Exact elimination has polynomial bit
complexity on a Boolean table; its pivot indices select original inputs,
not linear combinations of them.

Otherwise apply Lemma~\ref{lem:constructive-two-sided} with $r=r_0$
and $\lambda=t-1$. Its protocol has depth at most $3t-2$. For its
restriction outcome, the inequalities
\[
 B(r_0)\le(r_0+1)^2\le2^{2t}
\]
give row and column bounds at most $2^{3t}$ and $2^{4t-1}$.
For the running time, use the sharper estimate
\[
 B(r_0)\le(r_0+1)(t+1)\le(t+1)2^t.
\]
The exponent in the two-sided algorithm is therefore at most a constant
times $(t+1)2^t2^{2t-1}=O(t8^t)$. Rank computation and original-index
bookkeeping have polynomial table cost.
\end{proof}

\begin{corollary}[Condensation at almost full input length]
\label{cor:constructive-full-scale}
Given $M$ and an integer $d\ge6$ with $\D(M)\ge d$, put
$k=4\floor{(d-2)/4}$. A deterministic algorithm returns an original
$2^k$-by-$2^k$ submatrix $N$ such that
\[
 d-5\le k\le d-2,\qquad \D(N)\ge k/4+1,
\]
in time $2^{O(d2^{3d/4})}T^{O(1)}$.
\end{corollary}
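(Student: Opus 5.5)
Set $t=(d-2)/4$ rounded down, so $k=4t$. Since $d\ge6$ we have $t\ge1$, and also $k\le d-2$ and $k\ge d-2-3=d-5$. The plan is to run the algorithm of \cref{thm:constructive-alternative} with this $t$ and then pad the result to a $2^k$-by-$2^k$ square.

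First compare the two outcomes with the promise $\D(M)\ge d$. The protocol outcome has depth at most $3t-2$. From $4t\le d-2$ we get $3t-2<d$, so a correct protocol of depth $3t-2$ would contradict $\D(M)\ge d$. Hence the algorithm must return the other outcome: nonempty original sets $U,V$ with $|U|,|V|\le2^{4t}=2^k$ and $\D(M[U,V])>t$, that is, $\D(M[U,V])\ge t+1=k/4+1$. The algorithm itself needs no certification here; the promise alone excludes the protocol branch. If a defensive implementation is wanted, it can simply output the restriction whenever one is produced.

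Next, pad to an exact square. Because $\D(M)\ge d\ge2$, \cref{lem:rank-facts} gives each original input set more than $2^{d-2}\ge2^k$ elements. We therefore add distinct unused original rows and columns until both sides have exactly $2^k$ labels. By the enlargement remark following \eqref{eq:restriction}, the depth cannot decrease, so the padded square $N$ still satisfies $\D(N)\ge k/4+1$. Padding requires only table scans.

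Finally, check the running time. \Cref{thm:constructive-alternative} costs $2^{O(t8^t)}T^{O(1)}$. With $t\le(d-2)/4$ we have $8^t=2^{3t}\le2^{3d/4}$ and $t\le d$, which gives $2^{O(d2^{3d/4})}T^{O(1)}$. The only points needing care are the integer inequalities $3t-2<d$ and $k\ge d-5$, and both are immediate from the definition of $t$. I expect no real obstacle.
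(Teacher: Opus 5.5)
Your proposal is correct and follows the paper's proof essentially step for step. It uses the same choice $t=\floor{(d-2)/4}$, rules out the protocol outcome of Theorem~\ref{thm:constructive-alternative} because $3t-2<d\le\D(M)$, pads to side $2^k$ using the more than $2^{d-2}\ge2^k$ original inputs given by Lemma~\ref{lem:rank-facts}, and bounds the running time via $t8^t=O(d2^{3d/4})$.
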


\begin{proof}
Set $t=\floor{(d-2)/4}\ge1$. Since $d>3t-2$, the algorithm of
Theorem~\ref{thm:constructive-alternative} cannot return its protocol
outcome. Its restriction has dimensions at most $2^{4t}=2^k$ and
integer depth at least $t+1$. Lemma~\ref{lem:rank-facts} gives more than
$2^{d-2}\ge2^k$ original inputs on each side. Add unused original
rows and columns to obtain an exact square of side $2^k$; enlargement
cannot reduce depth. Finally, $t8^t=O(d2^{3d/4})$.
\end{proof}

\subsection{Exhaustive search}\label{sec:exhaustive-search}

Fix a rational $0<\varepsilon<1$. Here is the direct-search comparison
behind the algorithmic statement.
First, a supplied lower bound $d\le\D(M)$, with $d\ge4$, suffices for
the existence guarantee at scale $d$. Retain an inclusion-minimal nonempty
row set $U$ with $\D(M[U,Y])\ge d$. It has at least two rows. Deleting
any one row leaves depth below $d$, and adding it back gives a protocol
of depth at most
\[
 1+\max\{\D(M[U\setminus\{x\},Y]),1\}\le d.
\]
Alice first indicates whether she holds the added row; on that branch,
Bob sends the answer. Thus $\D(M[U,Y])=d$. This is an existence argument,
not an efficient procedure for finding $U$.

Apply \cref{thm:near-maximal} to this restriction. It gives a hard square
with at most $L=2^{d-1}$ rows and columns, and the corresponding lower
bound on its input length. A direct algorithm can therefore enumerate
all original squares of power-of-two side at most $L$ and test their
depths. If $M$ has $m$ rows and $n$ columns, then $T=mn$, and there are
at most $L T^L$ such candidates: for a side $q$, the number is at most
$m^q n^q=T^q$.

The depth of a candidate can be computed exactly by dynamic programming
over its nonempty row and column subsets. A monochromatic restriction
has depth zero. Otherwise minimize one plus the larger child depth over
every nontrivial bipartition of its rows or of its columns. These are
exactly the possible first messages of a deterministic protocol.
Smaller restrictions are processed first. For a candidate with at most
$L$ rows and columns, the states and all tested bipartitions together
number $2^{O(L)}$. The complete direct search consequently takes
$2^{O(2^d)}T^{O(2^d)}$ time.

\paragraph{Rank preprocessing.}
An elementary preprocessing step removes the parameter-dependent exponent
of $T$, although it gives a weaker bound in $d$ than
\cref{thm:constructive-main}. Compute $r=\rank(M)$ and put
\[
 j=\floor{\log r},\qquad
 q=\max\left\{\frac{\varepsilon d-4}{3},
 \frac{\varepsilon d-\ceil{\log(d+1)}-3}{2}\right\}.
\]
By \cref{lem:rank-facts}, $r\ge d-1\ge3$, so $j\ge1$; also
$q<\varepsilon d/2<d/2\le d-2$. If $j\ge q$, choose
$k=\min\{j,d-2\}$ and extract a nonsingular minor of order $2^k$.
Its depth is exactly $k+1$, and it satisfies all the bounds in
\cref{thm:constructive-main}. Exact Gaussian elimination and minor
extraction have polynomial bit cost in $T$.

Otherwise $j<q$, and
\[
 r<2^{j+1}<2^{q+1}<2^{\varepsilon d/2+1}.
\]
Keep one original representative of each distinct row and column value.
This preserves rank and communication complexity and leaves at most
$2^r$ rows and $2^r$ columns, by \cref{lem:rank-facts}.
To see that the smaller table contains a suitable square with $k=d-2$,
apply the depth-$d$ row-restriction existence argument above to this
table. If the resulting depth-$d$ row restriction has rank $r'$, then
$\floor{\log r'}\le j<q$.
The proof of \cref{thm:near-maximal} therefore takes its low-rank
case and gives a square with $k=d-2$ and depth greater than
$(1-\varepsilon)d$. Since $k\ge q$, this square meets all the bounds
in \cref{thm:constructive-main}. The argument guarantees its existence;
the algorithm need not find the intermediate row restriction.

Enumerate all squares of side $L=2^{d-2}$ in the table of representatives,
testing their depths by the dynamic program above. There are at most
$2^{2rL}$ candidates, each test costs $2^{O(L)}$, and the preprocessing
takes polynomial time in $T$. Return a candidate meeting the desired
depth bound, using its retained original indices. The total running time is
\[
 2^{O(r2^d)}T^{O(1)}
 \le 2^{O(2^{(1+\varepsilon/2)d})}T^{O(1)}.
\]
Thus rank preprocessing already gives a constant exponent on table size.
The algorithm in \cref{thm:constructive-main} improves the dependence on
$d$ to $2^{O(2^d)}$ and makes the protocol-domain construction executable.
This comparison concerns the square-extraction guarantee under a supplied
promise $\D(M)\ge d$. The exhaustive search uses that promise; the algorithm
in \cref{thm:constructive-main} also works without it, with the additional
possibility of returning a global protocol of depth below $d$. Its square
outcome does not by itself certify the promise. This search comparison
uses the existence theorem, rather
than providing an independent proof of condensation.

\section{Several players and finite output alphabets}
\label{sec:multiparty}

The two-party proof had one decisive communication step: Bob named a
set containing his input, and the players ran the short protocol valid
on that set. With more players, each input holder can make the same
announcement in turn. Membership still has to be recognizable from
that player's input alone. We show that this retains at least a
$1/(p+1)$ fraction of the communication complexity for any fixed number
$p$ of players, even when the required output is taken from an
arbitrary finite alphabet.

We use the number-in-hand model with public communication, as in
Draisma, Kushilevitz, and Weinreb~\cite[Section~2]{DraismaKushilevitzWeinreb11};
we specify the finite-output convention explicitly.
Let $F:X_1\times\cdots\times X_p\to Z$ be a total function, where all
sets are finite and nonempty. Player $i$ receives $x_i$. Communication is
written on a public blackboard; the next speaker is determined by the
transcript, and a transmitted bit depends only on that speaker's input
and the transcript. Leaves have fixed labels in $Z$. Write $D_p(F)$ for
the minimum worst-case number of communicated bits. A restriction selects
an original subset of each $X_i$. In particular, it neither combines
players nor changes the required output.

\Needspace{10\baselineskip}
\begin{theorem}[Multiparty condensation]\label{thm:multiparty-main}
Let $p\ge2$ and let $F:X_1\times\cdots\times X_p\to Z$ be a total
function on finite nonempty sets. Write $c=D_p(F)$ and suppose
$c\ge p+2$. Put $s=\floor{(c-1)/(p+1)}$. There are nonempty original subsets
$X'_i\subseteq X_i$ such that
\[
 |X'_i|\le 2^{(p+2)s+1}\quad(1\le i\le p),\qquad
 D_p(F|_{X'_1\times\cdots\times X'_p})\ge\ceil{c/(p+1)}.
\]
\end{theorem}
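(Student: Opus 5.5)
The plan is to imitate the two-party proof, \cref{lem:two-sided}, with one restriction per player, carried out in turn. Write $c=D_p(F)$ and $s=\floor{(c-1)/(p+1)}$, so that $c\ge(p+1)s+1$. At stage $i$ the sets $X'_1,\dots,X'_{i-1}$ are already chosen and small, and $X_{i+1},\dots,X_p$ are still complete. Player $i$ is handled by a protocol cover. Fix a depth-$s_i$ partial protocol in which every player except $i$ has fixed message functions on its current set. The \emph{protocol domain} is the set of $x_i$ for which one assignment of player~$i$'s bits makes the tree correct on all completions. The analogue of \cref{lem:compatibility} holds verbatim, because player~$i$'s bits may depend on $x_i$ but not on the other inputs. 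The announcement step also transfers: if some $m$ domains cover $X_i$, player $i$ names one of them with $\ceil{\log m}$ public bits, and the players then run its protocol.

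The depth budget decreases by $s$ at each announcement. The thresholds are set as $s_i=s+(p-i+1)s$, so that the final restriction has depth greater than $s$, which gives at least $\ceil{c/(p+1)}$. For this, every cover must use at most $2^s$ domains. The hypothesis $c\ge(p+1)s+1$ is exactly what pays for $p$ announcements of $s$ bits each on top of final depth $s$.

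For the contrapositive step I will use \cref{lem:finite-hit} and the halving argument of \cref{lem:cover}, in the constructive "halve the protocols" form of \cref{lem:constructive-one-sided}. The input is a bound of the form $\log(\#\text{domains})\le(p+2)s$. Under such a bound, either some domain covers more than half of every remaining distribution, or a set of at most $(p+2)s+1$ inputs of player $i$ escapes all depth-$s_i$ domains. Repeating the covering outcome requires a bound on the number of distinct input values of player $i$, and this must not depend on the alphabet. I would replace the halving-over-values argument by a multiplicative-weights or LP-duality cover: if every distribution is more than half covered by a single domain, then boosting yields a cover by $O(\log \#\text{domains})$ domains. I expect, and will aim to show, that this can be kept within $2^s$ domains.

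The main obstacle is the domain count without rank and independent of $Z$. Here is the idea I would try first. A leaf condition requires $F$ to equal the leaf label on every tuple of the leaf's associated sets. Any leaf that is actually used has a label equal to $F$ at some point of its rectangle, so the label can be encoded by naming such a witness point rather than by naming an element of $Z$. The associated set at a leaf is determined by the fixed message functions of the other players. The domain is then an intersection/union formula over $2^{s_i}$ leaves, as in \cref{sec:domains}. The aim is to encode each leaf with $O(1)$ bits relative to the current small sets, so that the total description length is about $(p+2)s$.

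The circularity here is that later players still have their full sets. I would remove it by ordering the stages so that the count is always taken relative to the restriction built so far, as in \cref{lem:two-sided}, where the second application works inside the already restricted matrix. If this encoding cannot be made to fit the exponent $(p+2)s$, I would fall back on a cruder count that depends only on $p$ and $s$, and then adjust the final size bound.
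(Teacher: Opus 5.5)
Your outer structure matches the paper's. You restrict one player at a time, each announcement is made by the owner of the input being extended, and domains are defined by a single assignment of player $i$'s bits; this is \cref{lem:multiparty-lifting} read contrapositively. But you leave both quantitative ingredients open, and the paper supplies both with one idea that is missing from your plan: a rank dichotomy for the one-hot flattenings $H_i((x_{-i},z),x_i)=[F(x_1,\ldots,x_p)=z]$. Every depth-$s$ protocol forces $\rank(H_i)\le 2^s$, because each leaf contributes a rank-one indicator. A nonsingular minor of $H_i$ of order $2^s+1$ involves at most $2^s+1$ inputs of every player (project its row tuples). So if some full $H_i$ has rank above $2^s$, that minor is already the small hard restriction (\cref{lem:multiparty-rank}). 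Otherwise $\rank(H_i)\le r=2^s$ for every $i$, and this single bound does two jobs.
\begin{itemize}
\item The row-sum encoding of \cref{lem:leaf-count}, applied to $H_i$, gives at most $(r+1)^{r+1}$ leaf conditions. The label $z$ is part of the row index $(u,z)$, so nothing about $Z$ needs encoding, and nothing depends on the sizes of the other players' sets.
\item $H_i$ has at most $2^r$ distinct columns, so the halving cover of \cref{lem:cover} stops after at most $r=2^s$ rounds. That is exactly an $s$-bit announcement.
\end{itemize}

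Your proposed substitutes do not work.
\begin{itemize}
\item The circularity you notice cannot be removed by reordering. Whichever player you handle first (in the paper's lifting order, the one extended last) faces every other player with a full input set, so its count cannot be taken relative to small sets. Naming a witness point for a label costs about $\log\card{X_1\times\cdots\times X_p}$ bits, not $O(1)$.
\item The target $\log(\#\text{domains})\le(p+2)s$ is off by an exponential. A depth-$t$ tree already has $2^t$ leaves. Already for $p=2$ and Boolean $F$, \cref{lem:rank-facts} shows that a restriction of depth greater than $s$ has more than $2^{s-1}$ inputs per side. So no escaping set of size $(p+2)s+1$ exists once $s$ is moderately large. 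The correct accounting is $\log(\#\text{domains})\le B(r)2^t$ with $B(2^s)\le2^{2s+1}$ and $t\le ps$; this is exactly where the cap $2^{(p+2)s+1}$ comes from.
\item ``Every distribution is more than half covered by one domain'' does not yield a cover by $O(\log\#\text{domains})$ domains. Take points indexed by the sets $T\subseteq[m]$ with $\card T>m/2$, and domains $G_j=\{T:j\in T\}$. Averaging shows every distribution is more than half covered by some $G_j$. Yet any cover $J$ must leave a non-majority complement, so it needs at least $m/2$ of the $m$ domains.
\end{itemize}
The number of halving rounds is governed by the number of distinct input values of player $i$. Only the rank bound makes that number at most $2^{2^s}$. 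Even if your boosting bound held, with the corrected count it would cost about $(p+2)s$ bits per announcement rather than $s$.
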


We prove this through the following local-to-global statement.

\Needspace{10\baselineskip}
\begin{theorem}[Multiparty local-to-global principle]
\label{thm:multiparty-condensation}
Let $p\ge2$ and $s\ge1$ be integers. If every product restriction
$F|_{X'_1\times\cdots\times X'_p}$ with
\[
 |X'_i|\le 2^{(p+2)s+1}\qquad(1\le i\le p)
\]
has communication complexity at most $s$, then
\[
 D_p(F)\le(p+1)s.
\]
\end{theorem}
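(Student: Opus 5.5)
The plan is to prove the contrapositive, following the two-party template of \cref{lem:two-sided,thm:local-global}. Assume $D_p(F)>(p+1)s$. We shrink the players' input sets one at a time. Each shrinking step costs at most $s$ bits of depth, which pays for one player naming a covering domain. After $p$ steps a product restriction of depth greater than $s$ remains, with every side of size at most $2^{(p+2)s+1}$.

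\textbf{Cover step.} The engine is a multiparty analogue of \cref{lem:cover}. Fix player $i$ and the current sets $X'_j$ for $j\ne i$. For a depth-$t$ partial protocol $P$, with the speakers, leaf labels in $Z$, and all other players' message maps fixed, let $G_P\subseteq X'_i$ be the set of $x_i$ admitting one assignment of player $i$'s bits that makes the tree correct on the whole product $\prod_{j\neq i}X'_j\times\{x_i\}$. Exactly as in \cref{lem:compatibility} and \cref{sec:domains}:
\begin{itemize}
\item $G_P$ is an intersection/union formula over the tree, with a union at player-$i$ nodes and an intersection at all other nodes.
\item Each leaf condition has the form $\{x_i: F(\cdot,x_i)\equiv z \text{ on a product rectangle}\}$.
\item A cover of $X'_i$ by $m$ such domains yields a protocol of depth $t+\ceil{\log m}$. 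Player $i$ names a domain containing $x_i$, and the players then run its protocol.
\end{itemize}
The halving/hitting-set argument of \cref{lem:finite-hit,lem:cover} then gives the following dichotomy. Either some set of at most $\log(\#\text{domains})+1$ inputs of player $i$ escapes every depth-$t$ domain, so that restriction has depth greater than $t$. Or greedy halving covers all distinct input values with few domains.

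\textbf{Counting step and parameter budget.} In the two-party case the number of greedy rounds was controlled by $2^r$ distinct values. Here we have no rank, so I would control it by the sizes of the sets already made small. Shrink players in a fixed order, and when handling player $i$ first pass to one representative per distinct \emph{value} of $x_i$, that is, per distinct function $F(\cdot,x_i)$ on the current product of the other coordinates. The first step, for player~$1$, can be set up on a transposed view in which the number of leaf conditions is bounded by counting the rectangles on the already-small side. Since each leaf condition is determined by a leaf label and a subset of a set of size $N=2^{(p+2)s+1}$, the domain family has size at most $2^{2^{t}(N+\log|Z|)}$. I then hope to remove the dependence on $|Z|$: only outputs actually occurring on the relevant small rectangle matter. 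That leaves at most $N$ per leaf condition up to constant factors, because $F$ takes at most as many values as there are entries.

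Parameters are chosen so that each cover uses at most $2^s$ domains, costing exactly $s$ bits of announcement. The sample sizes are budgeted so that each shrunk set has size at most $2^{s}\cdot 2^{(p+1)s+1}=2^{(p+2)s+1}$. Here the factor $2^{(p+1)s}$ accounts for the depth at which the covering is applied, and the final $+1$ accounts for the ``$\log+1$'' in the hitting-set size. Summing the $p$ announcement costs and the final depth-$s$ protocol gives $(p+1)s$.

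\textbf{Main obstacle.} I expect the counting step to be the real difficulty. Without rank, the number of leaf conditions for player $i$ is bounded only in terms of the sizes of the \emph{other} players' sets, and when player $i$ is processed some of those sets may still be large. The first thing I would try is to process the players so that, at player $i$'s step, the domains are counted with respect to restrictions already known to be small. Concretely, I would apply the local hypothesis, via compactness, to the at most $N^{p-1}$ small rectangles, so that the relevant domain family depends only on $O(N)$ bits per leaf. If that fails, I would fall back on bounding the number of greedy rounds by $2^s$ directly. Each round's domain must contain more than half of the remaining test-set witnesses, and the local hypothesis applied to the accumulated escaping test sets caps the number of rounds.
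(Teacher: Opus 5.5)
Your template is right: the contrapositive, shrinking one player at a time, paying $s$ bits per announcement, and covering by domains. But there is a genuine gap, and it sits exactly at the step you call the main obstacle. Nothing in your argument bounds either the number of leaf conditions or the number of distinct values of $x_i$ independently of the other players' sets, and both bounds are needed.

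\textbf{Why counting by small sets fails.}
\begin{itemize}
\item In any processing order, some step has player $i$ facing at least one \emph{full} other set. In your shrinking order this already happens at the first step. In the equivalent extension order, later players face the earlier ones unrestricted. A leaf condition then ranges over subsets of an unbounded product.
\item Even when all other sets are small, their product has up to $N^{p-1}$ tuples, not $N$.
\item Even your optimistic estimate $2^{2^t(N+\log|Z|)}$ gives a hitting set of size $2^t(N+\log|Z|)>N$. That exceeds the cap it is meant to respect, so the budget is circular. The per-leaf description length must be independent of $N$.
\item The number of halving rounds is logarithmic in the number of distinct functions $F(\cdot,x_i)$, which can be as large as $|Z|^{N^{p-1}}$. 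So the claim that each cover uses at most $2^s$ domains is unsupported.
\item Your fallback of capping rounds by accumulated test sets is not yet an argument.
\end{itemize}

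\textbf{The missing idea.} The local hypothesis itself bounds a rank. Use the one-hot flattening $H_i((x_{-i},z),x_i)=[F(x)=z]$.
\begin{itemize}
\item Each leaf of a depth-$s$ blackboard protocol is a product set with a fixed label. It therefore contributes a rank-one matrix to $H_i$, so $\rank(H_i)\le 2^s$ whenever the depth is at most $s$.
\item Suppose the full $H_i$ had a nonsingular minor of order $2^s+1$. Its columns use at most $2^s+1$ inputs of player $i$. Projecting its row tuples coordinatewise uses at most $2^s+1$ inputs of every other player; the output tags are only row indices. So the hypothesis forces $\rank(H_i)\le r=2^s$.
\item Restricting inputs only passes to submatrices, so this bound holds at every stage, whether the other sets are currently full or small. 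This is what dissolves your obstacle.
\end{itemize}

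The rank bound supplies both missing facts:
\begin{itemize}
\item There are at most $2^r$ distinct values of $x_i$. Hence at most $r=2^s$ halving rounds are needed, and the announcement costs exactly $s$ bits.
\item The $J_q-I_q$ witness argument and the row-sum encoding of \cref{lem:leaf-count}, applied to $H_i$, give at most $(r+1)^{r+1}$ leaf conditions. This count is independent of the other sets' sizes and of $|Z|$.
\end{itemize}

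Then $B(r)\le 2r^2=2^{2s+1}$. Extending the players in order with caps $B(r)2^{s+(i-1)s}\le 2^{(i+2)s+1}\le 2^{(p+2)s+1}$ gives $D_p(F)\le(p+1)s$.
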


For fixed $p$, the restricted inputs in \cref{thm:multiparty-main} need only $O_p(c)$ bits
to index, and their communication complexity remains $\Omega_p(c)$.
There is no restriction on the size of the finite output alphabet.
The proof needs two additions to the two-party argument. We must count
leaf conditions when there are many possible outputs, and we must
extend one player's input set without giving that player access to the
others' inputs. Output indicators solve the first problem; successive
applications of the covering lemma solve the second.

\Needspace{8\baselineskip}
\subsection{Output indicators}

Fix player $i$ and write $x_{-i}$ for the tuple of all inputs except
$x_i$. Hold those other inputs at a tuple $u$.
The dependence of the output on $x_i$ can be recorded by asking, for each
$z\in Z$, whether the output equals $z$. For example, if inputs $a,b$
give outputs red and blue at this tuple, the corresponding part of the
indicator table is
\[
\begin{array}{c|cc}
 &a&b\\ \hline
 (u,\text{red})&1&0\\
 (u,\text{blue})&0&1
\end{array}
\]
The two rows express the two possible correctness tests at a leaf:
is the required output red, or is it blue? They are questions about the
\emph{same} original function. No player receives the extra output
label as input, and the protocol does not transmit it separately.

Formally, define the Boolean matrix
\begin{equation}
 H_i((x_{-i},z),x_i)=[F(x_1,\ldots,x_p)=z].
 \label{eq:one-hot-flattening}
\end{equation}
Here brackets denote the indicator of the enclosed statement. This is
the one-hot flattening of $F$ for player $i$. It is used only in the
analysis. Two values of $x_i$ give identical columns of $H_i$ exactly
when they give the same output of $F$ for every choice of the other
inputs. We may keep one representative of each such column and restore
all copies by a local map of player $i$.

Suppose $\rank(H_i)\le r$, where $r\ge1$. By
Lemma~\ref{lem:rank-facts}, there are at most $2^r$ different columns,
and at most $2^r$ different rows. For completeness, the column bound
comes from selecting $r$ actual rows spanning the row space: their
Boolean values determine the entire column. The analogous argument
gives the row bound.

\subsection{Compatible input sets}

To leave only player $i$'s choices open, fix a complete binary
depth-$s$ protocol tree, its speakers and outputs, and every other
player's local maps. An input $x_i$ is \emph{compatible} with this
partial protocol if one assignment of bits at player $i$'s nodes makes
the protocol correct for every retained tuple of the other inputs.
The assignment can depend on $x_i$, but not on any of those other
inputs. This is precisely the common-assignment requirement in
Lemma~\ref{lem:compatibility}.

Every compatible input set supports a genuine $p$-player protocol of
depth $s$: choose one successful assignment for each input $x_i$ and use
its bit at each node as player $i$'s local map. All other players keep
their fixed, individually local maps. Conversely, a correct protocol
on any set of player $i$'s inputs supplies such a partial protocol whose
compatible set contains them.

\begin{lemma}[Counting compatible inputs]
\label{lem:multiparty-count}
Let $r\ge1$ and suppose $\rank(H_i)\le r$.
Use $B(r)=(r+1)\ceil{\log(r+1)}+2$ as in \eqref{eq:counting-coefficient}.
The compatible input sets of all depth-$s$ partial protocols with
player $i$'s maps left open number at most
\[
 2^{B(r)2^s-1}.
\]
The count includes arbitrary retained sets of the other inputs and
all output labels in $Z$.
\end{lemma}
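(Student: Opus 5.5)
The plan is to mirror the two-party count (\cref{lem:leaf-count,lem:domain-count}), with player $i$ in the role of Bob and the tuple of all other inputs in the role of Alice's row. First, describe compatible sets recursively, as in \cref{sec:domains}. Start at the root with the retained set $S\subseteq\prod_{j\ne i}X_j$ of other-input tuples. At a node whose speaker $j\ne i$ has a fixed map, split the current tuple set according to that map; the accepted sets of the two children are then intersected, since both groups must be served. At a node of player $i$, keep the tuple set unchanged and take the union of the children's accepted sets, since player $i$ chooses one child for all tuples at once. The induction on subtree height is exactly the one given in \cref{sec:domains}: assignments in disjoint subtrees combine, and unused subtrees are filled arbitrarily. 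At a leaf labelled $z$ with associated tuple set $S_v$, the accepted inputs are
\[
\{x_i: F(u,x_i)=z \text{ for all } u\in S_v\}=\{x_i: H_i((u,z),x_i)=1\text{ for all } u\in S_v\},
\]
where $F(u,x_i)$ means $F$ with player $i$'s coordinate inserted. This is a same-answer leaf condition with answer $1$ on the row set $S_v\times\{z\}$ of the Boolean matrix $H_i$.

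Second, bound the leaf conditions. By \cref{lem:leaf-count} applied to $H_i$, whose rank is at most $r$, every answer-one condition on any set of rows of $H_i$ is one of at most $(r+1)^{r+1}$ conditions. The bound $2(r+1)^{r+1}$ therefore holds a fortiori. The label $z$ is absorbed into the row index, so the size of $Z$ never enters. Only the sum vector on at most $r$ basis columns and the count $q$ are recorded.

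Third, count formulas exactly as in \cref{lem:domain-count}. A complete depth-$s$ tree has $2^s-1$ internal nodes, each carrying an intersection-or-union choice, and $2^s$ leaves, each carrying one of at most $2(r+1)^{r+1}$ conditions. This gives at most $2^{2^s-1}(2(r+1)^{r+1})^{2^s}\le 2^{B(r)2^s-1}$ formulas. Varying $S$, the other players' maps, and the output labels changes only which counted formula arises, so the bound is uniform.

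The main obstacle I expect is making sure that the intersection rule is valid when several different non-$i$ players speak. Each such player's map depends only on that player's own coordinate, but the split is still a partition of the current tuple set, and that is all the recursion needs. The point to check carefully is that the rows of $H_i$ range over the full product $\prod_{j\ne i}X_j\times Z$, so the retained subsets are arbitrary row sets of one fixed matrix, and \cref{lem:leaf-count} applies verbatim.
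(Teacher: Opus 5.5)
Your proposal is correct and follows essentially the same route as the paper: the leaf condition at a leaf labelled $z$ is the answer-one condition on the rows $(u,z)$ of $H_i$, bounded via the row-sum encoding of \cref{lem:leaf-count} (the paper re-derives the $J_q-I_q$ witness bound and uses $(r+1)^{r+1}$ per leaf directly), and then the AND/OR recursion and formula count of \cref{lem:domain-count} give $2^{B(r)2^s-1}$. Your use of the weaker per-leaf bound $2(r+1)^{r+1}$ still fits exactly within that bound, so nothing is lost.
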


\begin{proof}
At a leaf labelled $z$, compatibility with a retained set $U$ of
other-input tuples means
\[
 H_i((u,z),x_i)=1\qquad\text{for every }u\in U.
\]
Choose an inclusion-minimal collection of these row constraints giving
the same condition. If it has size $q$, each constraint has a witness
column satisfying all the others but not that constraint. For $q\ge2$
the witnesses form $J_q-I_q$, whose real rank is $q$. Hence $q\le r$;
the cases $q=0,1$ also satisfy this bound.
Now use the row-sum encoding of \cref{lem:leaf-count}, applied to $H_i$.
The sum of the selected rows is determined by at most $r$ basis-column
entries, each in $\{0,\ldots,r\}$. A column satisfies the leaf condition
exactly when this sum equals $q$. Including the choice of $q$ gives at most
\[
 (r+1)^{r+1}
\]
different leaf conditions. Crucially, this already counts all output labels:
$(u,z)$ is the index of one of the rows just counted, not an extra
choice outside the count.

The same recursion as in Lemma~\ref{lem:domain-count} now applies.
Track the other-input tuples consistent with their owners' fixed maps
on the path. A node of player $i$ takes the OR of the two child
conditions, since one child must work for all these tuples. A node of
another player partitions the tuples according to that player's fixed
bit and takes AND. Successful assignments below different children
use disjoint nodes of player $i$, so they combine legally. There are
$2^s-1$ internal gates and $2^s$ leaves. Counting the two possible gates
and the preceding leaf conditions gives the stated bound, with slack
in $B(r)$.
\end{proof}

The point of the count is the same as before: if every small selection
of inputs is compatible with some short protocol, then a small number
of those protocols cover the entire input set. The announcing player
can choose a covering set using only its own input.

\begin{lemma}[Extension of one input set]
\label{lem:multiparty-one-sided}
Let $r\ge1$ and suppose $\rank(H_i)\le r$. If every restriction of
$X_i$ to at most $B(r)2^s$ values, with the other input sets unchanged, has a
depth-$s$ protocol, then
\[
 D_p(F)\le s+\ceil{\log r}.
\]
\end{lemma}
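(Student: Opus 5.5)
The plan is to mirror the proof of \cref{lem:cover}, with $H_i$ in place of $M$ and player $i$ in the role of Bob. First I reduce to representatives. I keep one original value of $x_i$ from each class of identical columns of $H_i$. By \cref{lem:rank-facts} there are at most $2^r$ such classes. Two inputs in the same class give the same output of $F$ against every tuple of the other inputs. So any protocol on the representatives lifts to all of $X_i$ by a local map of player $i$, at no communication cost. The local hypothesis survives this reduction, since restrictions of the representative set are restrictions of $X_i$.

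Next comes the hitting set. Let $\mathcal C$ be the family of compatible sets of $x_i$-values, over all depth-$s$ partial protocols with player $i$'s maps open and the other players' maps fixed on their full input sets. \Cref{lem:multiparty-count} gives $\card{\mathcal C}\le2^{B(r)2^s-1}$. The argument of \cref{lem:finite-hit} then applies verbatim. Under any distribution on the remaining representatives $W$, draw $B(r)2^s$ values independently and take a union bound over complements of mass at least $1/2$. This yields a nonempty set $C\subseteq W$ of size at most $B(r)2^s$ meeting every such complement.

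The covering step follows. By hypothesis, the restriction of $X_i$ to $C$ (other sets unchanged) has a depth-$s$ protocol. The converse direction of the multiparty compatibility discussion then gives a partial protocol whose compatible set $G\in\mathcal C$ contains $C$. Since $C$ misses $W\setminus G$, that complement has fewer than $\card W/2$ elements, so $G$ covers strictly more than half of $W$. I repeat this on the uncovered part with the uniform distribution there. After at most $r$ strict halvings of a set of size at most $2^r$, nothing remains. Each chosen $G$ carries a genuine $p$-player depth-$s$ protocol valid on all tuples of the other players' original inputs. To do this, I fix one successful assignment per compatible $x_i$ as player $i$'s map, and keep the others' fixed maps.

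Finally, the global protocol. Player $i$ announces, in $\ceil{\log r}$ bits on the blackboard, the index of the first chosen domain containing the representative of $x_i$. All players then run that domain's protocol. The announcement depends only on $x_i$, so it is legal, and the total depth is $s+\ceil{\log r}$. The main point to check carefully is that the count in \cref{lem:multiparty-count} really ranges over a family containing every domain the hypothesis produces. That means the other players' maps must be fixed relative to the \emph{full} other input sets, and the leaf conditions must be rows of $H_i$ indexed by tuples in the full product. I expect this bookkeeping to be the only delicate step; the rest is a transcription of \cref{lem:cover}.
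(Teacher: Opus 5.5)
Your proposal is correct and follows the paper's proof essentially step for step: reduce to at most $2^r$ column representatives of $H_i$, combine the count of \cref{lem:multiparty-count} with the sampling and union-bound argument to get a compatible set covering more than half of any nonempty remainder, halve at most $r$ times, and let player $i$ announce the covering index in $\ceil{\log r}$ bits. The only cosmetic difference is that the paper runs the probabilistic step directly as a contradiction: if no compatible set covered more than half, then $K$ uniform samples would lie in some compatible set with probability at most $2^{K-1}2^{-K}=1/2$. You instead first extract an explicit hitting set as in \cref{lem:finite-hit}, which is the same argument.
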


\begin{proof}
Retain one representative of each of the at most $2^r$ column values
of $H_i$. Set $K=B(r)2^s$. The local hypothesis says that every subset
of at most $K$ representatives lies in a compatible set. There are at
most $2^{K-1}$ such sets by Lemma~\ref{lem:multiparty-count}.
The finite covering argument of Lemma~\ref{lem:cover} gives a cover
by at most $r$ compatible sets. Indeed, if no set covered more than
half a nonempty remainder, $K$ independent uniform samples from that
remainder would all lie in some compatible set with probability at
most $2^{K-1}2^{-K}=1/2$, contradicting the local hypothesis. After
$r$ strict halvings fewer than one of the original $2^r$
representatives remain.

Player $i$ names the first covering set containing its representative,
using $\ceil{\log r}$ bits. The players then run that set's protocol.
Both the representative map and the covering-set choice depend only
on $x_i$. Thus this announcement does not give any player access to
another player's input.
\end{proof}

For three players, for example, first hold the second and third input
sets restricted and extend the first. This works for every permitted
choice of those two restrictions. We can therefore leave the first
player unrestricted while extending the second, and finally extend
the third. The next lemma records the caps needed to pay for each
preceding announcement. No single protocol has to work for all choices
of the still-restricted sets; each choice may use its own protocol.

\Needspace{13\baselineskip}
\begin{lemma}[Extension of all input sets]
\label{lem:multiparty-lifting}
For each $i$, let $r_i\ge1$ and suppose $\rank(H_i)\le r_i$. Write
$\lambda_i=\ceil{\log r_i}$ for player $i$'s announcement cost.
Fix an ordering of the players and set
\[
 K_i=B(r_i)2^{s+\sum_{j<i}\lambda_j}.
\]
If every product restriction with $|X'_i|\le K_i$ has depth at most
$s$, then
\[
 D_p(F)\le s+\sum_{i=1}^p\lambda_i.
\]
\end{lemma}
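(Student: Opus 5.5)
We argue by induction on the number of extended players, using \cref{lem:multiparty-one-sided} once per player in the fixed order. For $0\le m\le p$, let $(\mathrm{S}_m)$ be the following assertion. Every product restriction in which players $1,\dots,m$ keep their full input sets and each player $i>m$ keeps at most $K_i$ values has depth at most
\[
 s+\sum_{i\le m}\lambda_i .
\]
The hypothesis of the lemma is $(\mathrm{S}_0)$, and $(\mathrm{S}_p)$ is the conclusion.

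For the step from $(\mathrm{S}_{m-1})$ to $(\mathrm{S}_m)$, fix arbitrary restrictions $X'_i$ with $|X'_i|\le K_i$ for each $i>m$. Keep $X_i$ in full for $i<m$. Consider the function $F'$ on
\[
 X_1\times\cdots\times X_{m-1}\times X_m\times X'_{m+1}\times\cdots\times X'_p .
\]
Put $s'=s+\sum_{j<m}\lambda_j$. By $(\mathrm{S}_{m-1})$, every restriction of player $m$'s input set to at most $K_m=B(r_m)2^{s'}$ values, with the other sets as just fixed, has a protocol of depth at most $s'$.

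To apply \cref{lem:multiparty-one-sided} to $F'$ for player $m$ at depth $s'$, we need $\rank(H'_m)\le r_m$, where $H'_m$ is the one-hot flattening of $F'$. This holds because $H'_m$ is a row-submatrix of $H_m$: its rows are indexed by the pairs $(x_{-m},z)$ with $x_{-m}$ restricted to the retained tuples, and its columns are all of $X_m$. A submatrix cannot have larger rank. The lemma then gives
\[
 D_p(F')\le s'+\lambda_m=s+\sum_{j\le m}\lambda_j .
\]
The restrictions of players $m+1,\dots,p$ were arbitrary within their caps, so this is $(\mathrm{S}_m)$.

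Two points require care. First, the lemmas on which \cref{lem:multiparty-one-sided} rests are stated for the function $F$ on its full input sets, so we must check that they apply verbatim to $F'$. They do: they use only finiteness and the rank of the flattening. The row-sum count in \cref{lem:multiparty-count} is taken over the retained tuples of the other players, and it holds for any such set. Second, the protocol obtained for each fixed choice of the later restrictions may depend on that choice. This is harmless, because $(\mathrm{S}_m)$ quantifies over each restriction separately, exactly as the remark before the lemma notes.

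The main obstacle is the indexing of the depth budget. When player $m$ is extended, the earlier players have already been extended, and the announcements they make cost $\sum_{j<m}\lambda_j$. This cost must be included in the depth threshold at which the covering for player $m$ is applied. That is why the cap $K_m$ carries the factor $2^{\sum_{j<m}\lambda_j}$, and I would check this bookkeeping carefully at each step of the induction.
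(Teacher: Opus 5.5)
Your proof is correct and follows the paper's argument. You extend the players one at a time in the fixed order via \cref{lem:multiparty-one-sided}, with the cap $K_m$ paying for the earlier announcements, and you inherit the rank bound because the restricted flattening is a row-submatrix of $H_m$; your invariant $(\mathrm{S}_m)$ just makes explicit the induction that the paper describes informally.
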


\begin{proof}
Keep arbitrary restrictions of players $2,\ldots,p$ within their
caps. Lemma~\ref{lem:multiparty-one-sided} extends player $1$'s input
set to all of $X_1$, at cost $\lambda_1$. This holds for every choice of
the still-restricted sets. Therefore, with player $1$ unrestricted,
every allowed restriction of player $2$ has depth at most $s+\lambda_1$.
Its cap is exactly the one required to apply the lemma at this new
depth. Extend player $2$, and continue in order. Restricting inputs
cannot increase the ranks of the full matrices $H_i$, so the same
rank bounds remain valid throughout. Each announcement is made by
the owner of the input being extended.
\end{proof}

\Needspace{8\baselineskip}
\subsection{Rank witnesses}

It remains to remove the rank assumptions. As in the two-party proof,
large rank would already be visible on a small restriction. Here a row
of $H_i$ names several inputs and an output, so we must check that a
matrix minor really comes from small original input sets for every player.
The rank argument is the standard one: a transcript determines a product
of local input sets, which contributes a rank-one matrix after grouping
the inputs into two sides; see~\cite[Section~3]{DraismaKushilevitzWeinreb11}.
Here we apply it to the output-indicator matrices $H_i$.

\begin{lemma}[Flattening rank on a small restriction]
\label{lem:multiparty-rank}
If every product restriction with at most $2^s+1$ inputs per player
has depth at most $s$, then $\rank(H_i)\le2^s$ for every $i$.
\end{lemma}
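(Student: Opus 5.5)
We argue by contraposition: if $\rank(H_i)>2^s$ for some $i$, we exhibit a product restriction with at most $2^s+1$ inputs per player whose depth exceeds $s$. Put $q=2^s+1$. Since $\rank(H_i)\ge q$, \cref{lem:rank-facts} gives a nonsingular $q$-by-$q$ minor of $H_i$ on original indices. Its columns are $q$ distinct inputs $x_i^{(1)},\dots,x_i^{(q)}$ of player $i$. Its rows are pairs $(u^{(1)},z_1),\dots,(u^{(q)},z_q)$, where each $u^{(m)}$ is a tuple of other inputs. Define the restriction $X'_i=\{x_i^{(1)},\dots,x_i^{(q)}\}$, and for $j\ne i$ let $X'_j$ be the set of $j$-th coordinates of the tuples $u^{(m)}$. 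Each $X'_j$ has at most $q=2^s+1$ elements, so this restriction is within the hypothesis.

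The main step is a rank bound for shallow protocols. Suppose $F$ restricted to $X'_1\times\cdots\times X'_p$ had a protocol $\Pi$ of depth at most $s$. Consider the Boolean matrix $H'_i$ whose rows are the pairs $(u,z)$ with $u\in\prod_{j\ne i}X'_j$ and $z\in Z$, and whose columns are the inputs in $X'_i$, with entries $[F(u,x_i)=z]$. This matrix contains the nonsingular minor, because every $u^{(m)}$ lies in the product. So it suffices to show $\rank(H'_i)\le2^s$.

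Pad $\Pi$ to a complete tree of depth $s$. As in the proof of \cref{lem:rank-facts}, group the leaves in sibling pairs under the $2^{s-1}$ nodes at depth $s-1$. Fix such a node $v$ and an output $z$. The inputs reaching $v$ form a product set $R_v=\prod_j S_j$. The entries of $H'_i$ coming from inputs in $R_v$ depend on who speaks at $v$.

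If player $i$ speaks at $v$, then $S_i$ splits into $S_i^0\cup S_i^1$. The block is $[u\in\prod_{j\ne i}S_j]\cdot[\text{the leaf reached by }x_i\text{ outputs }z]$, which has rank at most one.

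If another player $j$ speaks, it splits $S_j$. The block becomes a sum of two terms: $[u\text{ reaches leaf }0]\,[z=\text{out}_0]\,[x_i\in S_i]$ and the analogous term for leaf $1$. Each term has rank at most one. This would give $2^s$ rather than the required bound, so it is the step I expect to be the obstacle. The fix is to note that both terms have the same column factor $[x_i\in S_i]$. Their sum is therefore a single rank-one matrix: a row vector indexed by $(u,z)$ times that column indicator.

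Summing the contributions over all $2^{s-1}$ nodes gives $\rank(H'_i)\le 2^{s-1}\le2^s$. This contradicts the nonsingular minor of order $2^s+1$. Hence every restriction of this size must have depth greater than $s$, contradicting the hypothesis, so $\rank(H_i)\le 2^s$ for every $i$.

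The remaining thing to verify is that the restriction is genuinely a product of original subsets. This holds because we take the coordinate projections of the witness tuples, and enlarging those sets to their full product only adds rows to $H'_i$.
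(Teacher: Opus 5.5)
Your reduction to a small product restriction matches the paper's: you take a nonsingular minor of order $2^s+1$, project the row tuples to each coordinate, and observe that the one-hot flattening of the product restriction still contains the minor. The problem is the rank bound you prove for shallow protocols, which contains a false step.

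Take a depth-$(s-1)$ node where player $i$ speaks, and let $z_0,z_1$ be the outputs of its two leaves. Write $a(u)=[u\in\prod_{j\ne i}S_j]$. Then the block is
$a(u)\bigl([z=z_0]\,[x_i\in S_i^0]+[z=z_1]\,[x_i\in S_i^1]\bigr)$.
The factor $[\text{the leaf reached by }x_i\text{ outputs }z]$ depends on the column $x_i$ and also on the output tag $z$, which is part of the row index. So the block has rank two whenever $z_0\ne z_1$ and both halves are nonempty. Consequently your claimed bound $\rank(H'_i)\le2^{s-1}$ is false. For example, take $p=2$, $X_1=\{0,1\}$, $X_2$ a singleton, and $F(x_1,x_2)=x_1$. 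The one-bit protocol in which player $1$ sends $x_1$ has depth $s=1$, yet $H_1$ is the $2\times2$ identity matrix. Grouping sibling leaves works only for the ordinary flattening without output tags, as in \cref{lem:rank-facts} and \cref{prop:multiparty-boolean}.

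The repair is easy, because the obstacle you anticipated is not one. The lemma only needs $\rank(H'_i)\le2^s$, since the minor has order $2^s+1$. The paper simply goes leaf by leaf. A leaf with product set $\prod_j S_j$ and label $z_0$ contributes $[u\in\prod_{j\ne i}S_j,\ z=z_0]\cdot[x_i\in S_i]$, which has rank one, and there are at most $2^s$ leaves. Equivalently, within your grouping, charge rank two to player-$i$ nodes and rank one to the others, for a total of at most $2^s$. The leaf-by-leaf count also covers $s=0$, where grouping at depth $s-1$ is undefined.
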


\begin{proof}
A leaf of a blackboard protocol corresponds to a product of input
sets and has a fixed output label. Its indicator in $H_i$ factors
into the indicator of player $i$'s set and the indicator of the
other players' product together with that output label. It therefore
has rank at most one. Summing over at most $2^s$ leaves gives
$\rank(H_i)\le2^s$ for any function with communication complexity at most $s$.

If a full $H_i$ had larger rank, take a nonsingular minor of order
$2^s+1$. Its columns use at most that many inputs of player $i$.
Its rows use at most that many tuples of other inputs. Project those
tuples to each of the other input sets and take the product of the
selected subsets. Every player has at most $2^s+1$ selected inputs,
and the new one-hot flattening still contains the minor. The output
tags remain row indices; they require no additional input values.
This contradicts the rank bound for the hypothesized short protocol
on that restriction.
\end{proof}

\begin{proof}[Proof of Theorem~\ref{thm:multiparty-condensation}]
The hypothesis includes all restrictions needed by
Lemma~\ref{lem:multiparty-rank}. Thus take $r_i=r=2^s$ and
$\lambda_i=s$ in Lemma~\ref{lem:multiparty-lifting}. Since $s\ge1$,
\[
 B(r)\le r^2+r+2\le2r^2=2^{2s+1}.
\]
The cap for player $i$ is at most
\[
 B(r)2^{s+(i-1)s}\le2^{(i+2)s+1}\le2^{(p+2)s+1}.
\]
The extension lemma therefore gives $D_p(F)\le(p+1)s$.
\end{proof}

\begin{proof}[Proof of Theorem~\ref{thm:multiparty-main}]
Since $c\ge p+2$, the integer $s=\floor{(c-1)/(p+1)}$ is at least one
and satisfies
\[
 (p+1)s=(p+1)\floor{(c-1)/(p+1)}\le c-1<D_p(F).
\]
By Theorem~\ref{thm:multiparty-condensation}, the local hypothesis must
fail: some product restriction with $|X'_i|\le2^{(p+2)s+1}$ for every
$i$ has communication complexity at least $s+1=\ceil{c/(p+1)}$.
\end{proof}

For Boolean outputs the same argument has slightly better constants.
They also recover the two-party local-to-global bound.

\begin{proposition}[Boolean outputs]
\label{prop:multiparty-boolean}
For integers $p\ge2,s\ge1$ and a total Boolean function $F$, if
every product restriction with at most $2^{(p+2)s}$ inputs per player
has depth at most $s$, then
\[
 D_p(F)\le(p+1)s-p.
\]
\end{proposition}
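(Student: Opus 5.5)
We follow the proof of \cref{thm:multiparty-condensation}, sharpening two estimates that are only available for Boolean outputs.

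\textbf{Rank step.} For Boolean $F$ we do not pass through the one-hot flattening $H_i$. Instead, for each $i$ we use the Boolean flattening
\[
 M_i(x_{-i},x_i)=F(x_1,\ldots,x_p),
\]
whose rows are indexed by tuples of the other inputs and whose columns are indexed by $x_i$. This matrix plays the role of $M$ in the two-party leaf count. A leaf labelled $b\in\{0,1\}$ imposes the condition $M_i(u,x_i)=b$ for all retained $u$, so \cref{lem:leaf-count} applies verbatim. The grouping argument of \cref{lem:rank-facts}(2) transfers to blackboard protocols. At a node of depth $s-1$, whoever speaks splits a product set into two product sets. The one-entries then form either nothing, the whole product, or one child product. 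Each of these is a rank-one matrix after grouping player $i$ against the others. Hence any function of depth at most $s$ has $\rank(M_i)\le2^{s-1}$.

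The projection argument of \cref{lem:multiparty-rank} then shows the following. If $\rank(M_i)>2^{s-1}$, a nonsingular minor of order $2^{s-1}+1$ lives on a product restriction with at most $2^{s-1}+1\le2^{(p+2)s}$ inputs per player. That restriction would have depth greater than $s$, contradicting the hypothesis. So $r_i\le r=2^{s-1}$ and $\lambda_i=s-1$.

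\textbf{Covering step.} Rerun \cref{lem:multiparty-count,lem:multiparty-one-sided,lem:multiparty-lifting} with $M_i$ in place of $H_i$. The leaf count becomes $2(r+1)^{r+1}$ as in \cref{lem:leaf-count}, still within $2^{B(r)2^s-1}$ domains. The lifting lemma gives
\[
 D_p(F)\le s+p(s-1)=(p+1)s-p,
\]
provided the caps $K_i=B(r)2^{s+(i-1)(s-1)}$ fit under $2^{(p+2)s}$. As in \cref{thm:local-global}, $B(r)\le(r+1)^2\le2^{2s}$ because $r_0+1=2^{s-1}+1\le2^s$. Therefore
\[
 K_i\le2^{2s+s+(p-1)(s-1)}\le2^{(p+2)s}.
\]
The last inequality holds since $(p-1)(s-1)\le(p-1)s$, giving exponent at most $(p+2)s$ with slack.

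For $p=2$ this reproduces the two-party bound $3s-2$, and the caps $2^{3s}$ and $2^{4s-1}$ match those in \cref{thm:local-global}. The case $s=1$ is consistent: then $r=1$ and $\lambda=0$, so the conclusion $D_p(F)\le1$ follows directly from the covering step with no announcement cost.

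The main obstacle is the rank step: proving the sharpened bound $\rank(M_i)\le2^{s-1}$ for multiparty blackboard protocols. We must check that grouping the two leaves below each depth-$(s-1)$ node still yields a rank-one contribution in the $i$-versus-rest flattening when the last speaker is some player other than $i$. This works because that speaker's split is a partition of one coordinate's set. The one-entries below the node are therefore a product set, which is rank one in every bipartition.
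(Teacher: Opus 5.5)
Your proposal is correct and follows essentially the same route as the paper. You replace the one-hot flattening $H_i$ by the Boolean flattening $M_i$ and bound $\rank(M_i)\le 2^{s-1}$ by grouping the two leaves under each depth-$(s-1)$ node and projecting a larger minor onto a small product restriction. You then rerun the extension lemmas with $\lambda=s-1$ and $B(r)\le 2^{2s}$, obtaining caps at most $2^{(p+2)s}$ and depth $s+p(s-1)$; your explicit cap arithmetic and the $p=2$ and $s=1$ checks agree with the paper's bounds.
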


\begin{proof}
Use the ordinary Boolean flattening
$M_i(x_{-i},x_i)=F(x)$, without output tags. After padding a
depth-$s$ protocol to full depth, the inputs yielding output $1$
below each parent of two leaves form either the empty set or a product
of input sets. Each such set has flattening rank at most one, so
$\rank(M_i)\le2^{s-1}$. A larger-rank minor would again appear in
a product restriction of at most its order on each side. The local
hypothesis therefore bounds every full flattening rank by
$r=2^{s-1}$.

Lemma~\ref{lem:domain-count} bounds both zero and one leaf conditions
directly for these Boolean flattenings. The preceding extension
argument applies with announcement cost $\lambda=\ceil{\log r}=s-1$ and
$B(r)\le r(r+1)+2\le2^{2s}$. Every required cap is at most
\[
 B(r)2^{s+(i-1)(s-1)}\le2^{(p+2)s}.
\]
The resulting depth is $s+p(s-1)$, as claimed.
\end{proof}

This existential extension concerns a fixed number of number-in-hand
players, with constants depending on that number. It restricts each
player's input set while retaining all players. The constructive
algorithm proved in the main text concerns two parties.
\Cref{sec:questions} asks whether a larger fraction can be retained.

\section*{Acknowledgments and provenance}\label{sec:provenance}

This investigation began with the problem, raised by Yao~\cite{Yao79}
and studied by Kushilevitz and Weinreb~\cite{KushilevitzWeinreb09}, of
determining deterministic communication complexity from a Boolean truth
table. Joint work with Serge Gaspers and Tao Zixu He established
NP-hardness of exact computation~\cite{GaspersHeMackenzie25}, concurrently
with the independent proof of Hirahara, Ilango, and
Loff~\cite{HiraharaIlangoLoff25}. Further
joint work with Gaspers and He, in preparation, obtains hardness of
additive approximation under the Exponential Time Hypothesis. Both works
with Gaspers and He use interlacing, a matrix construction that
amplifies communication gaps and originates in joint work with Abdallah
Saffidine on direct sums~\cite{MackenzieSaffidine25}; the exact-hardness
paper develops a relaxed form of it with polynomial-size column sets.

The next aim was NP-hardness of additive approximation under randomized
reductions. This required a refined, precisely defined relaxation of
interlacing that could be sparsified to keep the reductions polynomial
in size for every fixed additive gap. A joint manuscript with Gaspers
and He on these randomized reductions uses protocol domains, with a
rank-sensitive count adapted from known methods, to sample interlaced
matrices. Studying the domains on which complete protocols remain valid
led to the condensation results here. That manuscript is unpublished;
this paper states the definitions and proves the counts it needs, and
uses none of that manuscript's results.

\section*{AI methodologies}\label{sec:ai-methodologies}

The large language model Astra was used in developing and revising the mathematical results and
proofs in \cref{sec:domains,sec:counting,sec:condensation,sec:construction}
and \cref{app:tradeoffs,sec:multiparty},
including the counting arguments, the two-party condensation theorems,
the extraction algorithm and its analysis, and the finite-output
multiparty extension. It was also used in drafting and revising the
exposition throughout the paper, including the abstract and introduction.
Anthropic's Claude models were also used to review proof drafts and
exposition and to suggest editorial revisions.

The author takes responsibility for the claims,
proofs, citations, and presentation.

\clearpage
\bibliographystyle{alphaurl}
\begingroup
\interlinepenalty=10000
\bibliography{references}
\endgroup
\end{document}